\DeclareMathOperator*{\argmax}{argmax}
\DeclareMathOperator{\mE}{\mathbb{E}}
\newtheorem{theorem}{Theorem}
\newtheorem{definition}{Definition}
\newcommand{\algnameNS}{TSFD Rank}
\newcommand{\algname}{\algnameNS\ }
  \providecommand\BibTeX{{%
    \normalfont B\kern-0.5em{\scshape i\kern-0.25em b}\kern-0.8em\TeX}}}
\begin{document}
\fancyhead{}
%%
%% The "title" command has an optional parameter,
%% allowing the author to define a "short title" to be used in page headers.
%\title{Fairness and Diversity for Rankings in Two-Sided Markets}

\title{User Fairness, Item Fairness, and Diversity for Rankings in Two-Sided Markets}

%%
%% The "author" command and its associated commands are used to define
%% the authors and their affiliations.
%% Of note is the shared affiliation of the first two authors, and the
%% "authornote" and "authornotemark" commands
%% used to denote shared contribution to the research.
\author{Lequn Wang}
\email{lw633@cornell.edu}
\affiliation{%
  \institution{Department of Computer Science, Cornell University}
    \city{Ithaca}
  \state{New York}
  \country{USA}
}
\author{Thorsten Joachims}
\email{tj@cs.cornell.edu}
\affiliation{%
  \institution{Department of Computer Science, Cornell University}
     \city{Ithaca}
  \state{New York}
  \country{USA}
}

%%
%% By default, the full list of authors will be used in the page
%% headers. Often, this list is too long, and will overlap
%% other information printed in the page headers. This command allows
%% the author to define a more concise list
%% of authors' names for this purpose.
% \renewcommand{\shortauthors}{Trovato and Tobin, et al.}

%%
%% The abstract is a short summary of the work to be presented in the
%% article.
\begin{abstract}
Ranking items by their probability of relevance has long been the goal of conventional ranking systems. While this maximizes traditional criteria of ranking performance, there is a growing understanding that it is an oversimplification in online platforms that serve not only a diverse user population, but also the producers of the items.
In particular, ranking algorithms are expected to be fair in how they serve all groups of users --- not just the majority group --- and they also need to be fair in how they divide exposure among the items. 
These fairness considerations can partially be met by adding diversity to the rankings, as done in several recent works. However, we show in this paper that user fairness, item fairness and diversity are fundamentally different concepts. In particular, we find that algorithms that consider only one of the three desiderata can fail to satisfy and even harm the other two.
To overcome this shortcoming, we present the first ranking algorithm that explicitly enforces all three desiderata. The algorithm optimizes user and item fairness as a convex optimization problem which can be solved optimally. From its solution, a ranking policy can be derived via a novel Birkhoff-von Neumann decomposition algorithm that optimizes diversity. 
Beyond the theoretical analysis, we investigate empirically on a new benchmark dataset how effectively the proposed ranking algorithm can control user fairness, item fairness and diversity, as well as the trade-offs between them. 
\end{abstract}

%%
%% The code below is generated by the tool at http://dl.acm.org/ccs.cfm.
%% Please copy and paste the code instead of the example below.
%%
\begin{CCSXML}
<ccs2012>
  <concept>
      <concept_id>10002951.10003317.10003338.10003345</concept_id>
      <concept_desc>Information systems~Information retrieval diversity</concept_desc>
      <concept_significance>500</concept_significance>
      </concept>
  <concept>
      <concept_id>10002951.10003317.10003338.10003340</concept_id>
      <concept_desc>Information systems~Probabilistic retrieval models</concept_desc>
      <concept_significance>500</concept_significance>
      </concept>
  <concept>
      <concept_id>10002951.10003317.10003359.10003362</concept_id>
      <concept_desc>Information systems~Retrieval effectiveness</concept_desc>
      <concept_significance>500</concept_significance>
      </concept>
 </ccs2012>
\end{CCSXML}

\ccsdesc[500]{Information systems~Information retrieval diversity}
\ccsdesc[500]{Information systems~Probabilistic retrieval models}
\ccsdesc[500]{Information systems~Retrieval effectiveness}
%%
%% Keywords. The author(s) should pick words that accurately describe
%% the work being presented. Separate the keywords with commas.
\keywords{Fairness, Diversity, Algorithmic Bias, Social Welfare, Ranking, Recommender System, Two-sided Market}

%% A "teaser" image appears between the author and affiliation
%% information and the body of the document, and typically spans the
%% page.
% \begin{teaserfigure}
%   \includegraphics[width=\textwidth]{figures/movie_ranking_example.png}
%   \caption{Movie recommendation \lw{Have not found a better one yet.Companies do have diversity algorithms in itemion.}}
%   \Description{movie recommendation that presents producer fairness, consumer fairness, and diversity problems. }
%   \label{fig:teaser}
% \end{teaserfigure}

%%
%% This command processes the author and affiliation and title
%% information and builds the first part of the formatted document.
\maketitle

\section{Introduction}
We consider ranking problems that involve two-sided markets of producers and consumers. Such two-sided markets are widespread in online platforms --- movie producers and audiences in a streaming platform, job seekers and employers in a resume database, or news agencies and information seekers in a news-feed app. In these two-sided markets, the items compete with each other for exposure to the users, while the users gain utility from the recommender system by finding items they like.
The platform mediates this market through the ranking algorithm, with great influence on which users get exposed to which items.

Conventional ranking algorithms maximize the average utility to the users by following the probability ranking principle (PRP)~\cite{robertson1977probability}. 
However, there is growing understanding that this is an oversimplification in online platforms that mediate a two-sided market. 
First, the objective of maximizing the average utility can unfairly marginalize minority user groups, decreasing how useful the ranking system is to them in order to better serve a majority user group \cite{xiao2017fairness}.
Second, the items compete with each other for exposure to the users, and there is the need to divide the exposure between the items in a fair way. It was shown that maximizing the average utility to the users can be unfair to the items, and that it can lead to winner-takes-all dynamics that amplify existing inequities~\cite{Singh/Joachims/18a}.
Violating user and/or item fairness is not only ethically fraught for many applications, it may also drive users and items from the platform \cite{evans2016matchmakers}, or violate anti-discrimination law~\cite{noble2018algorithms}, anti-trust law~\cite{MarkScott17}, or freedom of speech principles~\cite{grimmelmann2013speech}. 

Diversification of search results has often been employed to address these concerns, as well as related issues like super-star economics \cite{mehrotra2018towards}, perpetuation of stereotypes~\cite{beede2011women,kay2015unequal}, ideological polarization~\cite{bakshy2015exposure} and spread of misinformation~\cite{vosoughi2018spread,tabibiandesign}. However, while diversification appears related to fairness at first glance, it is not clear whether standard formalizations of diversity \cite{radlinski2008learning,kulesza2012determinantal} actually achieve fairness and vice versa.

In this paper we provide the first theoretical study of the interplay between user fairness, item fairness and diversity for rankings in two-sided markets. To enable this theoretical analysis, we quantify the three desiderata in an intent-aware setup~\cite{clarke2008novelty,agrawal2009diversifying,chapelle2011intent} where users have different intents and items have varying relevance to different intents. In particular, we formalize user fairness as an economic social-welfare objective where user groups differ in their intent distributions, and relate this to submodular diversity objectives. For item fairness, we adapt the disparate treatment constraints proposed in ~\cite{Singh/Joachims/18a} for the intent-aware setup to ensure the exposure is fairly allocated to the item groups based on their merit. 

Through this theoretical analysis, we show that user fairness, item fairness and diversity are independent goals. Specifically, algorithms that optimize any one of the desiderata can fail to satisfy and even harm the other two. 

To address this problem, we present the first ranking algorithm that explicitly enforces all three desiderata --- called \algname for Two-Sided Fairness and Diversity. The algorithm optimizes user and item fairness as a convex optimization problem which can be solved optimally. From its solution, a ranking policy can be derived via a novel Birkhoff-von Neumann decomposition~\cite{birkhoff1940lattice} algorithm that optimizes diversity.

In addition to the theoretical analysis, we constructed the first benchmark dataset with annotations for intents, user groups and item groups. On this dataset, we empirically evaluate the proposed \algname with ablation studies quantifying the dependencies between user fairness, item fairness and diversity.

\section{Related Work}

As algorithmic techniques, especially machine learning, are increasingly used to make decisions that directly impact people's life, there is growing interest in understanding their societal impact. Many works proposed mathematical desiderata to test algorithmic fairness in binary classification~\cite{HardtPNS16,chouldechova2017fair,KleinbergMR17,agarwal2018reductions}. These desiderata often operationalize definitions of fairness from political philosophy and sociology. 
We study the societal impact of the less explored problem of ranking which, unlike binary classification, is a structured output prediction problem with an exponentially large output space. 
Since users have different preferences and items compete for exposure in the rankings, the fairness definitions from binary classification do not directly translate to ranking problems. 

Unfairness in rankings typically comes from two sources. Some works focus on the endogenous design of the fair ranking systems~\cite{yang2017measuring,Singh/Joachims/18a}. They answer what a fair ranking system is and how to achieve fairness assuming all the system information such as the relevance and the position bias is known. The second source of unfairness are the exogenous factors such as biases in the data~\cite{bottou2013counterfactual,Joachims/etal/17a,yang2018unbiased,rastegarpanah2019fighting} and biases during relevance estimation~\cite{yao2017beyond,burke2018balanced}. 
Some works take both into consideration~\cite{yadav2019fair,zehlike2020reducing}. 
We focus on the endogenous design of fair and diverse rankings in two-sided markets, which is orthogonal to exogenous factors.  

Most existing works on fairness in rankings consider item fairness. They can be classified into three types: (1) composition-based item fairness which ensures statistical parity of where the items are ranked~\cite{yang2017measuring, zehlike2017fa, celis2017ranking,stoyanovich2018online,asudeh2019designing, geyik2019fairness, celis2020interventions}; (2) pairwise-comparison-based item fairness which aims for statistical parity of pairwise ranking errors between item groups~\cite{beutel2019fairness,kallus2019fairness,narasimhan2020pairwise}; and (3) merit-exposure-based item fairness which explicitly quantifies the amount of exposure an item gets in a ranking and allocates exposure to the items based on their merit~\cite{Singh/Joachims/18a, biega2018equity, mehrotra2018towards,Singh/Joachims/19a,sapiezynski2019quantifying, morik2020controlling}. We adopt the third type of item fairness since (1) unlike composition-based item fairness, it can allocate exposure based on merit; (2) unlike pairwise-comparison-based item fairness, it takes position bias into consideration; and (3) it explicitly quantifies the amount of exposure an item gets in a ranking which enables the quantifiable study of the relationship of item fairness to user fairness and diversity. 

Fewer works consider user fairness in rankings and they often consider user unfairness problems that result from achieving item fairness across different queries. Some works~\cite{patro2020fairrec,basu2020framework} propose to fairly share the utility drop among the user groups when achieving item fairness across queries. Patro et al.~\cite{patro2020incremental} regard the drastic change of exposure to the items during the policy updates to be unfair, and propose an online update algorithm to smoothly update the policy so that the exposure to the items changes gradually while ensuring a minimum utility for the users during the policy updates. In contrast, we identify user unfairness problems originating from the user intent difference and uncertainty for an individual query, which exists even when we do not consider item fairness. Some works consider user fairness in group recommendation, where a recommendation needs to satisfy a group of users with different preferences~\cite{xiao2017fairness}. They assume the relevance of each item to each user is known. We model the user preferences and the associated uncertainty in an intent-aware setup. 

Diversity in rankings and recommendations also challenges the PRP. The key mechanism behind diversity is to model the utility as a function that is not modular (i.e. linearly additive) in the set of ranked items, but that exhibits a diminishing-returns property --- most commonly in the form of a submodular set function~\cite{radlinski2008learning,chapelle2011intent,yue2011linear}. In extrinsic diversity ~\cite{carbonell1998use,radlinski2008learning,zhai2015beyond}, this is used to hedge against the uncertainty
about the user's information need; and in intrinsic diversity ~\cite{clarke2008novelty,radlinski2009redundancy} this is used to model
complementarity and substitution in a sense of portfolio optimization. Since we are dealing with uncertain user intents, our goal is to achieve extrinsic diversity for the rankings. 

Two-sided platforms are modeled as matchmakers that reduce the friction between the two sides of the market. The key to the success of two-sided platforms is to ensure a critical mass of participants on both sides, since they are in need of each other. Literature in economics~\cite{caillaud2003chicken,rochet2003platform,armstrong2006competition,evans2016matchmakers} focuses on the effect of business strategies, primarily about pricing, on the two-sided markets, but typically does not model the effect that the platform's ranking algorithm has on the interactions between users and items. Recently, some works~\cite{burke2017multisided,stanton2019revenue, abdollahpouri2020multistakeholder} advocate viewing recommendation problem in the context of two-sided markets and discussed fairness issues on both sides. But neither mathematical definitions nor theoretical characterizations of fairness are provided.  

The algorithmic study of two-sided matching markets dates back to the stable matching algorithm analyzed by Gale and Shapley~\cite{gale1962college}. Some works propose algorithms in this context to select a fair stable matching from a set of stable matchings~\cite{masarani1989existence,klaus2006procedurally}. Recently, S{\"u}hr et al.~\cite{suhr2019two} consider fairness concerns in ride-hailing platforms and propose an online matching algorithm to ensure salary fairness for the drivers amortized over time. In reciprocal recommendation problems, the success is measured by the satisfaction on both sides of the market such as in online dating platforms~\cite{DBLP:conf/recsys/PizzatoRCKK10}. We consider problems where the items have no preferences over the users and where there are no supply constraints on the items. 

One key aspect of fair rankings is the fair division of the exposure to the users among the items~\cite{Singh/Joachims/18a,biega2018equity,Wang/etal/21a}. Fair division~\cite{brams1996fair,steihaus1948problem,procaccia2013cake} has been studied for decades where the goal is to allocate a set of resources to the agents. Two
of the classic desiderata for fair division are (1) proportionality i.e. every agent receives its ``fair share'' of the utility, and (2) envy-freeness i.e. no agent wishes to swap her allocation with another agent. In the proposed ~\algnameNS, the optimization of user and item fairness can be thought of as ensuring proportionality for the users and the items, and the optimization for diversity can be interpreted as reducing the envy of the users. 

\section{Ranking in a Two-Sided Market} \label{sec:rankingsetup}

As a basis for our theoretical analysis, as well as the derivation of the \algnameNS\ algorithm, we first formalize the problem of intent-aware ranking in two-sided markets. This includes definitions of utility, user fairness, item fairness and diversity. 

\subsection{Intent-Aware Setup and Utility}

We consider the problem of ranking a set of items $\mathbbm{D}^q = \{d_1, d_2,$ $ d_3, ... \}$ to present to a user with query $q$. A query can be a text query (e.g. ``Schwarzenegger'')  or any other context for ranking (e.g. ``featured movies today''). 
In the intent-aware setup \cite{clarke2008novelty,agrawal2009diversifying,chapelle2011intent}, each user has an unobserved intent $i\in\mathbbm{I}$ that further refines the query (e.g. preferred movie genre). We denote with $\mathcal{I}_{\mathcal{U}^q}$ the intent distribution of the user population $\mathcal{U}^q$ for query $q$. Each item has varying relevance to different intents (e.g. a movie has varying relevance to different genres), and we denote the relevance of an item $d$ to an intent $i$ as $r(d, i)$. A wide range of existing methods can be used to learn this relevance function, and we merely assume that relevance estimates $r(d, i)$ are available. A ranking $\sigma$ is a permutation of the set of items and a ranking policy $\pi(\cdot|q)$ is a probability distribution over all possible permutations, where deterministic ranking policies are a special case. 
Focusing on additive ranking metrics, the utility of a ranking policy $\pi$ is 
\begin{equation}
    U(\pi|q) = \mE_{i\sim\mathcal{I}_{\mathcal{U}^q},\sigma\sim\pi(\cdot|q)}\left[\sum_{d\in\mathbbm{D}^q}e(\sigma(d))r(d,i)\right], \label{eq:util_overall}
\end{equation}
where $\sigma(d)$ is the rank of the item $d$ and $e$ maps this rank to the exposure $d$ will receive in the position-based model \cite{craswell2008experimental}. 
Since a user has limited attention for each ranking, we assume the total exposure is bounded i.e. $0<\sum_{m}e(m)<\infty$. 
While \eqref{eq:util_overall} involves an expectation in the exponential space of rankings, $U(\pi|q)$ can equivalently be written in terms of a marginal rank probability matrix $\Sigma^{\pi,q}$ with
\begin{equation}
    \mathbf{\Sigma}^{\pi,q}_{m,n} = \mE_{\sigma\sim\pi(\cdot|q)}\left[\mathbbm{1}_{\{\sigma(d_m)=n\}}\right]\quad\forall m,n
\end{equation}
where each entry represents the probability of ranking item $d_m$ at rank $n$ under policy $\pi$ 
\begin{equation}
\begin{split}
    U(\pi|q) &=
    \left(\mathbf{r}^{\mathcal{U}^q}\right)^{\top}\Sigma^{\pi,q}\mathbf{e}.
\end{split}
\end{equation}
$\mathbf{r}^{\mathcal{U}^q}$ is the vector containing the expected relevance of each item to the whole user population with $\mathbf{r}_m^{\mathcal{U}^q} = \mE_{i\sim\mathcal{I}_{\mathcal{U}^q}}[r(d_m,i)]$, and $\mathbf{e}$ is the exposure vector with $\mathbf{e}_n=e(n)$. The marginal rank probability matrix is doubly stochastic~\cite{birkhoff1940lattice} since the sum of each row and each column is $1$, i.e. $\sum_{m}\Sigma^{\pi,q}_{m,n}=1$ for all $n$ and $\sum_{n}\Sigma^{\pi,q}_{m,n}=1$ for all $m$. 

\subsection{User Fairness}
\label{sec:uf}
Overall utility as defined in Eq.~\eqref{eq:util_overall} reflects an average over all users. However, different user groups $UG\in\mathbbm{UG}$ (e.g. male vs. female users) can have different intent distributions $\mathcal{I}_{\mathcal{UG}^q}$ for a query $q$, and suboptimal ranking performance for a minority group may get drowned out. Since group membership is typically not known for privacy reasons~\cite{holstein2019improving}, a ranking policy needs to make sure that it does not unfairly provide disparate levels of utility to the user groups. 
We define the utility of a ranking policy $\pi$ for a user group $UG$ as the expected utility for the users in this group
\begin{equation}
\begin{split}
    U(\pi|UG,q) &= \mE_{i\sim\mathcal{I}_{\mathcal{UG}^q},\sigma\sim\pi(\cdot|q)}\bigg[\sum_{d\in\mathbbm{D}^q}e(\sigma(d))r(d, i)\bigg]\\
        &=\left(\mathbf{r}^{\mathcal{UG}^q}\right)^{\top}\Sigma^{\pi,q}\mathbf{e}, \label{eq:util_usergroup}
\end{split}
\end{equation}
where $\mathbf{r}^{\mathcal{UG}^q}$ is the vector containing the expected relevance of each item to the user group $UG$ with  $\mathbf{r}^{\mathcal{UG}^q}_m = \mE_{i\sim\mathcal{I}_{\mathcal{UG}^q}}[r(d_m,i)]$.
A fair ranking ensures that each user group gets an equitable amount of utility. In economics, the goal of providing an equitable amount of utility across groups is typically formalized through a social-welfare function \cite{vondrak2008optimal}, which is maximized to optimize fairness. We adopt
\begin{equation}
\begin{split}
    UF(\pi|q) &= \sum_{UG\in\mathbbm{UG}}\rho_{UG}^qf(U(\pi|UG,q))
\end{split}
\end{equation}
as our class of social-welfare functions, where $f$ is an increasing concave function (e.g. $\log$) that models the diminishing return property. This social-welfare objective provides larger return for increasing the utility of a user group with little utility compared to increasing the utility of a user group that already receives a large utility. Thus it encourages the ranking policy to provide more equal utility to each user group. $f$ can be chosen based on application requirements. $\rho_{UG}^q$ denotes the group proportion of $UG$, i.e. the probability that a user sampled from the whole user distribution $\mathcal{U}^q$ belongs to user group $UG$. Since $UF(\pi|q)$ is a convex combination of concave functions of $\Sigma^{\pi,q}$, user fairness is a concave function of $\Sigma^{\pi,q}$. 

Since the intent distribution of the overall user population is a convex combination of the intent distribution of each user group $\mathcal{I}_{\mathcal{U}^q} = \sum_{UG\in\mathbbm{UG}}\rho_{UG}^q\mathcal{I}_{\mathcal{UG}^q}$, 
user fairness $UF(\pi|q)$ is a lower bound of the overall utility in \eqref{eq:util_overall} after transformation through the inverse of the user fairness function $f^{-1}$ 
%which is an increasing convex function. 
\begin{displaymath}
\begin{split}
    &f^{-1}(UF(\pi|q))\\
    =& f^{-1}\left(\sum_{UG\in\mathbbm{UG}}\rho_{UG}^qf(U(\pi|UG,q))\right)\\
    \leq& f^{-1}\left(f\left(\sum_{UG\in\mathbbm{UG}}\rho_{UG}^qU(\pi|UG,q)\right)\right)\\
%   =&\sum_{UG\in\mathbbm{UG}}\rho_{UG}^qU(\pi|UG,q))\\
%   =&\sum_{UG\in\mathbbm{UG}}\rho_{UG}^q\mE_{i\sim\mathcal{I}_{\mathcal{UG}^q},\sigma\sim\pi(\cdot|q)}\left[\sum_{d\in\mathbbm{D}^q}e(\sigma(d))r(d, i)\right]\\
%   =&\mE_{i\sim\mathcal{I}_{\mathcal{U}^q},\sigma\sim\pi(\cdot|q)}\left[\sum_{d\in\mathbbm{D}^q}e(\sigma(d))r(d, i)\right]\\
   =&U(\pi|q).
\end{split}
\end{displaymath}

The inequality holds because $f$ is concave. This shows that maximizing user fairness is maximizing a lower bound of the overall utility from Eq.~\eqref{eq:util_overall}.

\subsection{Item Fairness}

Fairness to the items is akin to a fair-division problem. Specifically, items compete for exposure to the users, since exposure is a prerequisite for items to derive utility (e.g. revenue) from the ranking.
We adopt the disparate treatment constraints proposed in~\cite{Singh/Joachims/18a} for our theoretical and empirical analysis.  
The disparate treatment constraints ensure that each item group $DG$ gets an amount of exposure $E(\pi|DG,q)$ that is proportional to its merit $M(DG,q)>0$ 
\begin{equation}
\frac{E(\pi|DG_m,q)}{M(DG_m,q)} = \frac{E(\pi|DG_n,q)}{M(DG_n,q)}\quad\forall m,n.  \label{eq:item_fairness_naive}
\end{equation}

Further specifying $E(\pi|DG,q)$ and $M(DG,q)$, the average exposure of an item group is defined as
\begin{equation}
\begin{split}
\label{equ:exposure}
    E(\pi|DG,q)&= \mE_{ \sigma\sim\pi(\cdot|q)}\left[\frac{1}{|DG|}\sum_{d\in DG}e(\sigma(d))\right]\\
    &=\frac{\left(\mathbf{l}^{DG}\right)^{\top}\Sigma^{\pi,q}\mathbf{e}}{|DG|}, 
\end{split}
\end{equation}
where $\mathbf{l}^{DG}$ is the label vector that denotes whether an item belongs to item group $DG$ with $\mathbf{l}^{DG}_m = \mathbbm{1}_{\{d_m\in DG\}}$. 
For the empirical evaluation, we adopt the average relevance of the items in the item group as the merit function
\begin{equation}
    M(DG,q) = \mE_{i\sim\mathcal{I}_{\mathcal{U}^q}}\bigg[\frac{1}{|DG|}\sum_{d\in DG}r(d, i)\bigg].
\end{equation}
In practice, the merit function can be chosen based on application-specific requirements. 

Finally, to quantify that items also draw utility from the rankings, we define the utility of an item group $DG$ as 
\begin{equation}
\begin{split}
    U(\pi|DG,q) &= \mE_{i\sim\mathcal{I}_{\mathcal{U}^q}, \sigma\sim\pi(\cdot|q)}\bigg[\sum_{d\in DG}e(\sigma(d))r(d,i)\bigg]\\
    &= \left(\mathbf{l}^{DG}\circ \mathbf{r}^{\mathcal{U}^q}\right)^{\top}\Sigma^{\pi,q}\mathbf{e}, \label{eq:util_itemgroup}
\end{split}
\end{equation}
where $\circ$ denotes the element-wise product.
In the position-based click model~\cite{craswell2008experimental}, $U(\pi|DG,q)$ corresponds to the sum of the click-through rates of the items in item group $DG$ under ranking policy $\pi$. 

\subsection{Diversity}
\label{sec:diversity}

The original and dominant motivation for diversity in ranking arises from the uncertainty about the user's intent \cite{carbonell1998use}. To hedge against this uncertainty, a diversified ranking aims to provide utility no matter what the unknown intent of the user is (i.e. extrinsic diversity \cite{radlinski2009redundancy}). To formalize this goal, we first define the utility of a ranking $\sigma$ for an intent $i$ with an additive metric analogous to the overall utility in Eq.~\eqref{eq:util_overall} as
\begin{equation}
    U(\sigma|i,q) = \sum_{d\in\mathbbm{D}^q}e(\sigma(d))r(d,i).
\end{equation}
Similar to user fairness, the diversity $D(\sigma|q)$ of a ranking $\sigma$ is typically quantified using an increasing concave function $g$ that encourages each ranking in the ranking policy to cover multiple intents \cite{radlinski2008learning,agrawal2009diversifying,chapelle2011intent,yue2011linear}
\begin{equation}
    D(\sigma|q) = \mE_{i\sim\mathcal{I}_{\mathcal{U}^q}}\left[g(U(\sigma|i,q))\right].
\end{equation}
Consequently, for a ranking policy $\pi$, the expected diversity is
\begin{equation}
    D(\pi|q) = \mE_{\sigma\sim\pi(\cdot|q), i\sim\mathcal{I}_{\mathcal{U}^q}}\left[g(U(\sigma|i,q))\right]. 
\end{equation}

Diversity and user fairness differ in two fundamental ways. First, user fairness aggregates over user groups, while diversity aggregates over intents. Second, user fairness amortizes over intents and draws from $\pi$ as input to the concave function, while diversity takes the expectation over intents after the concave transformation. This adds emphasis on optimizing each individual ranking in the diversity objective.
It also implies that diversity $D(\pi|q)$ can not be written as a linear function of $\Sigma^{\pi,q}$. Furthermore, unlike utility and user fairness, two ranking policies $\pi$ and $\pi'$ that both produce the same marginal rank probability matrix $\Sigma^{\pi,q}=\Sigma^{\pi',q}$ can have different diversity $D(\pi|q)\not=D(\pi'|q)$.

Similar to user fairness, diversity is also a lower bound on the overall utility from Eq.~\eqref{eq:util_overall} after transformation with the inverse function $g^{-1}$
\begin{displaymath}
\begin{split}
g^{-1}(D(\pi|q))&=g^{-1}\left(\mE_{i\sim\mathcal{I}_{\mathcal{U}^q},\sigma\sim\pi(\cdot|q)}[g(U(\sigma|i,q))]\right)\\
&\leq g^{-1}\left(g\left(\mE_{i\sim\mathcal{I}_{\mathcal{U}^q},\sigma\sim\pi(\cdot|q)}[U(\sigma|i,q)]\right)\right)\\
% &= g^{-1}(g(\mE_{i\sim\mathcal{I}_{\mathcal{U}^q},\sigma\sim\pi(\cdot|q)}\left[\sum_{d\in\mathbbm{D}^q}e(\sigma(d))r(d,i)\right]))\\
&=U(\pi|q), 
\end{split}
\end{displaymath}
where the inequality holds because $g$ is concave. This indicates that maximizing diversity also maximizes a lower bound on the overall utility. Diversity maximization can be expressed as a submodular maximization problem with two matroid constraints ~\cite{krause2014submodular}, and we will optimize it using the standard greedy approximation algorithm in our experiments. For completeness, the algorithm is detailed in the appendix.

\section{Theoretical Analysis}
In this section, we analyze the interplay between utility, user fairness, item fairness and diversity. First, we provide worse-case analyses showing that individual user groups, item groups, or intents can needlessly receive zero utility if their interests are not explicitly represented in the ranking objective. This indicates that user fairness, item fairness and diversity are fundamentally different objectives and achieving one of them does not automatically satisfy another. 
Second, we develop a new form of utility-efficiency analysis to show that achieving one of user fairness, item fairness and diversity might fail to satisfy the utility efficiency of the others. This suggests that the utility efficiencies of the three desiderata are in conflict with each other and achieving one of them might harm the other two. 

\subsection{Zero-Utility Analysis}

Our zero-utility analysis investigates whether a user group, item group or intent can receive a utility of zero, even if a non-zero solution exists. 
For clarity, we first define a class of non-degenerate ranking problems, to focus our theoretical analysis on non-degenerate cases where non-zero solutions exist. 
\begin{definition}{(Non-degenerate ranking problem)}
A ranking problem represented by a tuple $(\mathbbm{I},$ $\mathbbm{UG},$ $\mathbbm{DG},$ $\mathbbm{D}^q$ $\mathcal{U}^q,$ $r,$ $e)$ is non-degenerate if (1) every user group $UG$ has positive group proportion $\rho_{UG}^q>0 $; (2) every intent $i$ has positive probability mass in the user intent distribution $\mathcal{I}_{\mathcal{U}^q}(i)>0$; (3) for every intent $i$, there exists an item $d$ that has positive relevance for the intent $r(d,i)>0$; and (4) for every item group $DG$, there exists an item $d\in DG$ such that the expected relevance of the item is positive $\mathbbm{E}_{i\sim\mathcal{I}_{\mathcal{U}^q}}[r(d, i)]> 0$.
\end{definition}
For user groups and item groups, we investigate whether every group achieves non-zero utility as defined in Eqs.~\eqref{eq:util_usergroup} and ~\eqref{eq:util_itemgroup} under different policies. For the intents, since diversity is a function of individual rankings and each single ranking might not be able to provide non-zero utility for every intent due to limited number of non-zero exposure positions, we define the amount of intent covered by a ranking $\sigma$ as the amount of intent that has non-zero utility
\begin{displaymath}
\sum_{i\in \{i | i\in\mathbbm{I}, U(\sigma|i,q)>0\}}\left[ \mathcal{I}_{\mathcal{U}^q}(i)\right], 
\end{displaymath}
and investigate whether each ranking sampled from a ranking policy covers the maximum amount of intent covered by any ranking 
\begin{displaymath}
max_\sigma \sum_{i\in \{i | i\in\mathbbm{I}, U(\sigma|i,q)>0\}}\left[ \mathcal{I}_{\mathcal{U}^q}(i)\right]. 
\end{displaymath}

We present two example theorems for this zero-utility analysis. Theorem~\ref{theo:util_zero_user} shows that there exist ranking problems where maximizing overall utility needlessly provides zero utility for some user groups. 

\begin{theorem}
\label{theo:util_zero_user}
There exist non-degenerate ranking problems such that any ranking policy $\pi$ maximizing overall utility $U(\pi|q)$ has utility $U(\pi|UG,q)=0$ for some user group $UG$. 
\end{theorem}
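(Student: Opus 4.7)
The plan is to prove this theorem by explicit construction: exhibit a single non-degenerate ranking problem and verify that every overall-utility-maximizing policy gives zero utility to some user group. The guiding idea is that when one user group is in the majority and the user groups have disjoint intents, the expected relevance averaged over the whole user population strictly prefers the items aligned with the majority group's intent, so any utility-maximizing ranking pushes the minority-aligned items into positions with zero exposure.

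Concretely, I would take two user groups $UG_1, UG_2$ with group proportions $\rho_{UG_1}^q = 0.9$ and $\rho_{UG_2}^q = 0.1$, a single intent per group ($i_1$ for $UG_1$ and $i_2$ for $UG_2$), and three items $d_1, d_2, d_3$ with relevances $r(d_1,i_1)=r(d_2,i_1)=1$, $r(d_3,i_2)=1$, and all other entries zero. Then pick an exposure vector whose bottom position is degenerate, e.g.\ $\mathbf{e}=(1, 0.5, 0)$. Introducing a single item group containing all three items (to satisfy the item-fairness clause of non-degeneracy), the four non-degeneracy conditions are immediate: both user groups have positive proportion, both intents carry positive mass in $\mathcal{I}_{\mathcal{U}^q}$ (namely $0.9$ and $0.1$), each intent admits a positively relevant item, and the single item group contains items of positive expected relevance.

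The key computation is that the per-item expected relevance to the whole user population is $\mathbf{r}^{\mathcal{U}^q} = (0.9,\, 0.9,\, 0.1)$. Any ranking that places $d_3$ at rank $1$ or $2$ achieves overall utility at most $0.9 \cdot 1 + 0.1 \cdot 0.5 = 0.95$, while ranking $d_1$ and $d_2$ at the top two positions in either order achieves $0.9 \cdot 1 + 0.9 \cdot 0.5 = 1.35$. Since $U(\pi|q) = (\mathbf{r}^{\mathcal{U}^q})^\top \Sigma^{\pi,q} \mathbf{e}$ is linear in the marginal rank probability matrix, the set of maximizers is exactly the set of mixtures of the two deterministic rankings that place $\{d_1,d_2\}$ in positions $\{1,2\}$. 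Every such optimal policy ranks $d_3$ at position $3$ with probability one, so $U(\pi|UG_2,q) = e(3)\,r(d_3,i_2) = 0$.

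The main thing to be careful about --- more a verification checklist than a genuine obstacle --- is that the theorem quantifies over \emph{any} overall-utility-maximizing policy, not merely the existence of some maximizer with this defect. Using linearity of $U(\pi|q)$ in $\Sigma^{\pi,q}$, one argues that every maximizer must concentrate the marginal $\mE_{\sigma\sim\pi(\cdot|q)}[\mathbbm{1}_{\{\sigma(d_3)=3\}}]$ at one, since any deviation strictly decreases $U(\pi|q)$; this forces $U(\pi|UG_2,q)=0$ uniformly across the optimal set. Beyond this step the argument reduces to elementary bookkeeping over the small instance.
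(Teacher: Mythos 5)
Your proof is correct and follows the same strategy as the paper's: an explicit construction in which the items relevant to the minority group have strictly smaller expected relevance to the whole population, so every utility-maximizing policy (being supported only on utility-maximizing rankings, by linearity of $U(\pi|q)$ in $\Sigma^{\pi,q}$) relegates them to zero-exposure positions. The paper uses a larger instance (Figure~\ref{fig:example}, reused across several theorems) but the argument is identical in substance, and your smaller three-item example verifies all four non-degeneracy conditions.
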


Proofs of all theorems are provided in the appendix. 
The disparate treatment identified in Theorem ~\ref{theo:util_zero_user} is not necessary, since the following Theorem~\ref{theo:uf_zero_user} shows that directly maximizing user fairness can always ensure non-zero utility for every user group. 

\begin{table}[t]
\caption{Summary of zero-utility analysis.}
\label{tab:theory_zero_util}
\scalebox{1.0}{
\begin{tabular}{l|lll}
\begin{tabular}[c]{@{}l@{}}The policy\\ optimizing \end{tabular} 
& \begin{tabular}[c]{@{}l@{}}Non-zero utility\\ for every\\user group?\end{tabular} 
& \begin{tabular}[c]{@{}l@{}}Non-zero utility\\ for every \\item group?\end{tabular} 
& \begin{tabular}[c]{@{}l@{}}Rankings cover\\ maximum  \\  intent?\end{tabular}  \\
\hline
Utility & $\times$ & $\times$ & $\times$ \\
User fairness & \checkmark & $\times$ & $\times$\\
Item fairness & $\times$ & \checkmark & $\times$\\
Diversity   & $\times$ & $\times$ & \checkmark\\
\algname & \checkmark& \checkmark& $\times$
\end{tabular}}
\end{table}

\begin{theorem}
\label{theo:uf_zero_user}
For any non-degenerate ranking problem, there exists a user fairness function $f$ such that if a ranking policy $\pi$ maximizes user fairness $UF(\pi|q)$, then every user group has non-zero utility under this ranking policy $\pi$. 
\end{theorem}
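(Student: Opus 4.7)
The plan is to exploit the freedom we have in choosing $f$ by picking a concave function that diverges to $-\infty$ at zero --- most naturally $f(x)=\log(x)$, extended so that $f(0)=-\infty$. With this choice, $UF(\pi|q)=\sum_{UG}\rho_{UG}^{q}\log(U(\pi|UG,q))$ is finite if and only if every user group receives strictly positive utility under $\pi$, because non-degeneracy condition~(1) guarantees $\rho_{UG}^{q}>0$ for every $UG$. So it suffices to exhibit a single reference policy whose $UF$ value is a finite real number: the $UF$-maximizer is then automatically forced into the region where every user group has positive utility, which is exactly the statement to be proved.

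The second step is to construct such a reference policy. I will take $\pi_{\text{unif}}$ to be the uniform distribution over all permutations of $\mathbbm{D}^q$, which induces a marginal rank probability matrix with every entry equal to $1/|\mathbbm{D}^q|$, so every item receives the same expected exposure $\bar{e}:=\frac{1}{|\mathbbm{D}^q|}\sum_{n}e(n)>0$ (positivity follows from the paper's assumption $0<\sum_{m}e(m)<\infty$). Fixing any user group $UG$, this yields
\begin{displaymath}
U(\pi_{\text{unif}}|UG,q)\;=\;\bar{e}\sum_{d\in\mathbbm{D}^q}\mE_{i\sim\mathcal{I}_{\mathcal{UG}^q}}[r(d,i)].
\end{displaymath}
Since $\mathcal{I}_{\mathcal{UG}^q}$ is a probability distribution, at least one intent $i^{*}$ satisfies $\mathcal{I}_{\mathcal{UG}^q}(i^{*})>0$, and by non-degeneracy condition~(3) there exists an item $d^{*}$ with $r(d^{*},i^{*})>0$. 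Hence the sum above is lower-bounded by $\bar{e}\cdot\mathcal{I}_{\mathcal{UG}^q}(i^{*})\cdot r(d^{*},i^{*})>0$, so every $U(\pi_{\text{unif}}|UG,q)$ is strictly positive and $UF(\pi_{\text{unif}}|q)$ is finite.

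The conclusion is a one-line contradiction: if a $UF$-maximizer $\pi$ had $U(\pi|UG,q)=0$ for some user group $UG$, then $UF(\pi|q)=-\infty<UF(\pi_{\text{unif}}|q)$, contradicting maximization. The step I anticipate requires the most care is not any of the above inequalities --- those just unpack the non-degeneracy conditions --- but rather justifying that a maximizer exists at all, since $f=\log$ is unbounded below. This is handled by noting that the space of marginal rank probability matrices is the Birkhoff polytope and hence compact, that $UF$ is upper semi-continuous under the $-\infty$ extension, and that the supremum is finite thanks to the $\pi_{\text{unif}}$ certificate, so any maximizing sequence has a subsequential limit in the interior region where all group utilities are positive, and that limit is the desired maximizer.
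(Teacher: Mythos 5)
Your proof is correct, but it takes a genuinely different route from the paper's. You choose $f=\log$ extended with $f(0)=-\infty$, so that any policy giving some group zero utility is automatically penalized to $-\infty$, and then you only need one finite-valued certificate policy (the uniform ranking policy, whose group utilities you correctly lower-bound using non-degeneracy conditions (1) and (3) and the positivity of total exposure). The paper instead constructs a problem-dependent, \emph{finite-valued} piecewise-linear concave $f$ with two slopes $k_1\gg k_2>0$ and a kink at the minimum group utility of a reference policy, and shows by direct inequality manipulation that any policy with a zero-utility group scores strictly below the reference (which the paper takes to be a mixture of the per-group utility-maximizing policies rather than the uniform policy). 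What your approach buys is brevity and a choice of $f$ that does not depend on the problem instance; the cost is that you must work with an extended-real-valued objective, which forces you to separately argue existence of a maximizer via compactness of the Birkhoff polytope and upper semicontinuity --- an argument you do supply, and which is not strictly needed for the conditional statement anyway. What the paper's construction buys is that $f$ remains a real-valued increasing concave function everywhere on $[0,\infty)$, so $UF$ is continuous on the compact feasible set and no semicontinuity or existence issues arise; this matters if one insists that the admissible class of user fairness functions excludes functions taking the value $-\infty$. Since the paper itself lists $\log$ as an example of an admissible $f$, your choice is defensible, but you should flag explicitly that you are admitting $-\infty$ as a value of the objective.
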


\autoref{tab:theory_zero_util} summarizes the other formal results, which are detailed in the appendix. For the sake of brevity, we use ``maximize item fairness'' to refer to the more appropriately descriptive ``maximize utility subject to the disparate treatment constraints''.
We provide an intuition of the analysis through the ranking problem in \autoref{fig:example}.  

\textbf{First, maximizing overall utility can lead to zero utility for a user group and/or for an item group, and it can fail to cover the maximum amount of intent.} Since items in $d_{3*}$ have strictly larger expected relevance to the whole user population than all the other items,
a ranking policy that maximizes utility will rank items in $d_{3*}$ over all the other items. If only $3$ positions have non-zero exposure, items in $d_{3*}$ will occupy all the $3$ non-zero exposure positions and thus leave zero exposure for the other items. This leads to a ranking policy with zero utility for $UG_1$, $DG_2$, $i_1$ and $i_2$. We can easily construct a ranking that covers all the intents by selecting three items that are relevant to the three intents respectively and putting them in the $3$ non-zero exposure positions. Thus any ranking sampled from the policy maximizing utility fails to cover the maximum amount of intent, since the ranking will only cover intent $i_3$. 

\begin{figure}[!tbp]
  \centering
  \includegraphics[width=0.49\textwidth]{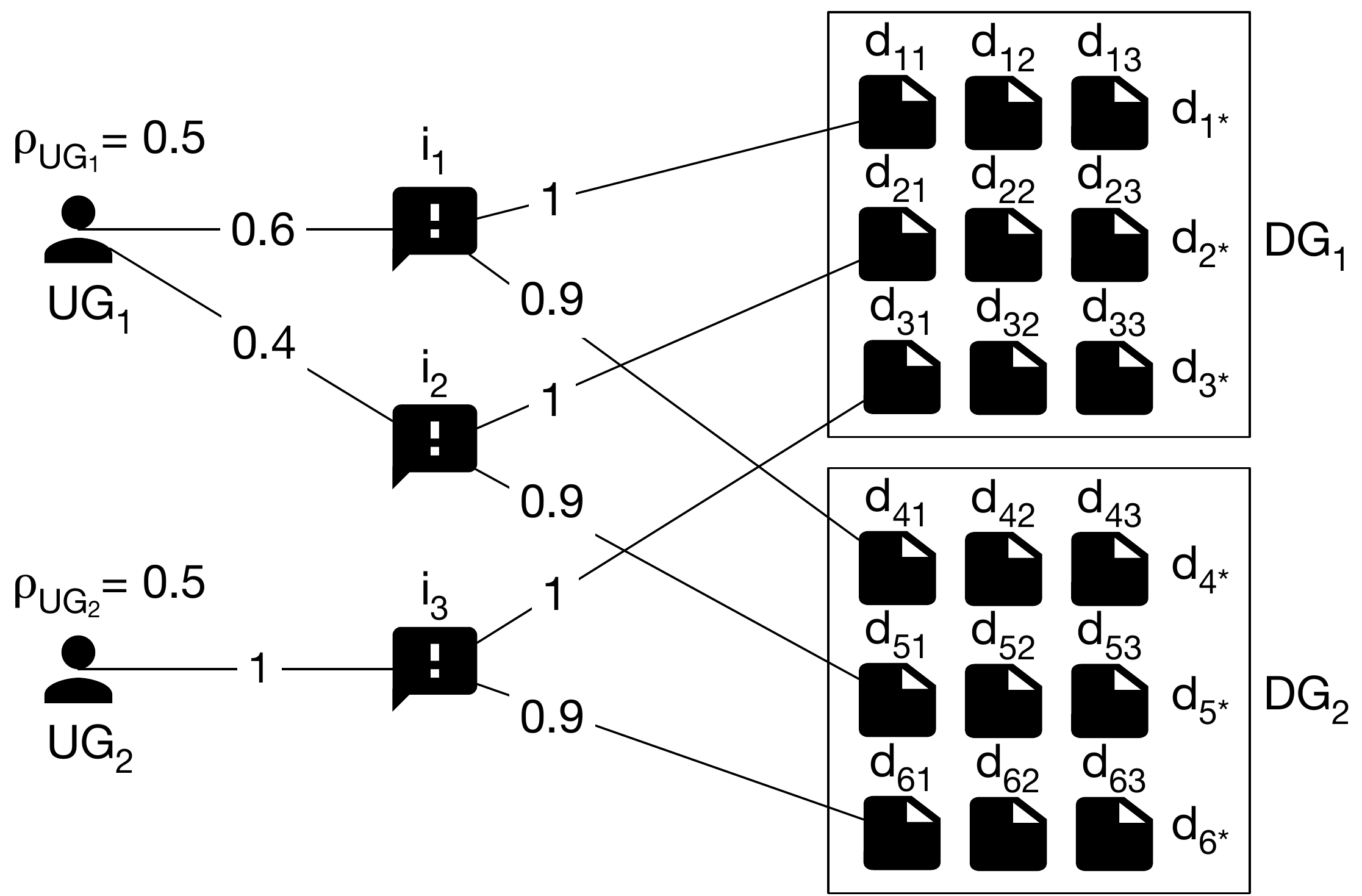}
  \caption{A ranking problem that illustrates the zero-utility analysis. The ranking problem consists of $2$ user groups, $3$ intents and $2$ item groups. $6$ sets of items are partitioned into the $2$ item groups as indicated by the squares. Each set consists of $3$ items with exactly the same relevance to every intent and we denote with $d_{m*}$ the $m^{th}$ set. The numbers on the edges between user groups and intents represent the intent distribution. The numbers on the edges between intents and items represent relevance. For clarity, we omit edges with $0$ probability or $0$ relevance.}
   \label{fig:example}
\end{figure}

\textbf{Second, enforcing item fairness can lead to zero utility for a user group and fail to cover the maximum amount of intent.}
Similarly, maximizing item fairness would rank items in $d_{3*}$ over items in $d_{1*}$, $d_{2*}$ and rank items in $d_{6*}$ over items in $d_{4*}$, $d_{5*}$, 
since items in $d_{3*}$ and $d_{6*}$ have the largest expected relevance to the whole user population within each item group. If only $3$ positions have non-zero exposure, items in $d_{3*}$ and $d_{6*}$ will occupy the $3$ non-zero exposure positions. 
This leads to a ranking policy with zero utility for $UG_1$, $i_{1}$ and $i_{2}$. As discussed in the last paragraph, the maximum amount of intent that can be covered by a ranking is $1$. Thus maximizing item fairness fails to cover the maximum amount of intent since the rankings sampled from the policy maximizing item fairness only cover $i_3$. 

\textbf{Third, maximizing user fairness can lead to zero utility for an item group and fail to cover the maximum amount of intent.} Again, any ranking policy that maximizes user fairness would rank items in $d_{1*}$ over items in $d_{2*}$, $d_{4*}$, $d_{5*}$ and rank items in $d_{3*}$ over items in $d_{6*}$, 
since items in $d_{1*}$ and $d_{3*}$ have the largest expected relevance for the two user groups respectively, and all the items have positive relevance to only one user group. If only $3$ positions have non-zero exposure, then items in $d_{1*}$ and $d_{3*}$ will occupy all the $3$ non-zero exposure positions. This leads to a ranking policy with zero utility for $DG_2$ and $i_2$. Thus the rankings sampled from the policy maximizing user fairness fail to cover the maximum amount of intent. 

\textbf{Fourth, maximizing diversity can lead to zero utility for a user group and/or for an item group.}
If only $1$ position has non-zero exposure, maximizing diversity will always put one item from $d_{3*}$ in that non-exposure position since items in $d_{3*}$ have the largest relevance to $i_3$, the intent with the largest density. This leads to zero utility for $DG_2$ and $UG_1$.  

This worst-case analysis indicates that it is necessary to optimize each of user fairness, item fairness, and diversity, since any one criterion does not even provide the guarantee of non-zero utility for the others. 

\subsection{Utility-Efficiency Analysis}
\label{sec:pareto}
Our utility-efficiency analysis investigates if optimizing for one of user fairness, item fairness, or diversity can provide a utility-efficient solution for any of the other desiderata. 
To answer this kind of questions, we first introduce the precise meaning of utility efficiency for user groups, item groups and intents. 

For the user groups, we focus on utility Pareto efficiency of ranking policies. 
We begin by defining a dominance relation between two policies with respect to a multi-objective optimization problem. The objectives are the utilities of a ranking policy for the user groups $U(\pi|UG,q)$ from Eq.~\eqref{eq:util_usergroup}. 

\begin{definition}{(Dominance of ranking policies for the user groups)}
For a non-degenerate ranking problem,
a ranking policy $\pi$ dominates another ranking policy $\pi'$ for the user groups $\mathbbm{UG}$ if $U(\pi|UG,q)\geq U(\pi'|UG,q)$ for all $UG\in\mathbbm{UG}$ and there exists $UG\in\mathbbm{UG}$ such that $U(\pi|UG,q)>U(\pi'|UG,q)$. 
\end{definition}

The Pareto efficiency of a ranking policy for the user groups is then defined as follows.

\begin{definition}{(Pareto efficiency of ranking policies for the user groups)}
For a non degenerate ranking problem,
a ranking policy $\pi$ is Pareto efficient for the user groups $\mathbbm{UG}$ if $\pi$ is not dominated by any ranking policy $\pi'$  for $\mathbbm{UG}$. 
\end{definition}

For the intents, since diversity emphasises the performance of each ranking, we analyze the utility Pareto efficiency of rankings for the intents. 

\begin{definition}{(Dominance of rankings for the intents)}
For a non-degenerate ranking problem,
a ranking $\sigma$ dominates another ranking $\sigma'$ for the intents $\mathbbm{I}$ if $U(\sigma|i,q)\geq U(\sigma'|i,q)$ for all $i\in\mathbbm{I}$ and there exists $i\in\mathbbm{I}$ such that $U(\sigma|i,q)>U(\sigma'|i,q)$. 
\end{definition}

\begin{definition}{(Pareto efficiency of rankings for the intents)}
For a non-degenerate ranking problem,
a ranking $\sigma$ is Pareto efficient for the intents $\mathbbm{I}$ if $\sigma$ is not dominated by any ranking $\sigma'$  for $\mathbbm{I}$. 
\end{definition}

For the item groups, utility efficiency is achieved when items are ranked by their expected relevance to the whole user population within each item group, since otherwise we can switch the two items that are not ranked by their expected relevance to get larger utility for the item group they belong to without changing the exposure allocation among the item groups. 

\begin{definition}{(Items ranked by expected relevance within each item group)}
For a non-degenerate ranking problem and a ranking policy $\pi$, the items are ranked by their expected relevance to the whole user population within each item group under $\pi$ when for any $\sigma$ with $\pi(\sigma|q)>0$ and for all $DG\in\mathbbm{DG},d_m, d_n\in DG$, if $\mE_{i\sim\mathcal{I}_{\mathcal{U}^q}}\left[r(d_m,i)\right]>\mE_{i\sim\mathcal{I}_{\mathcal{U}^q}}\left[r(d_n,i)\right]$, then $e(\sigma(d_m))\geq e(\sigma(d_n))$.
\end{definition}

Achieving utility efficiency can be interpreted as not picking a solution that could easily be improved upon. Thus, if a procedure fails the test of utility efficiency, it clearly provides a suboptimal solution to the user groups, item groups or the intents. 
We present two example theorems that characterize the utility efficiency of optimizing overall utility, user fairness, item fairness, and diversity on the utility of users, items, and intents. Theorem~\ref{theo:uf_efficiency_user} shows that maximizing user fairness is Pareto efficient for the user groups. 

\begin{theorem}
\label{theo:uf_efficiency_user}
For any non-degenerate ranking problem and user fairness function $f$, if a ranking policy $\pi$ maximizes user fairness $UF(\pi|q)$, then $\pi$ is Pareto efficient for the user groups. 
\end{theorem}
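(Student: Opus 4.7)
The plan is to argue by contradiction, exploiting that $UF(\pi|q)$ is a strictly positive convex combination of a strictly increasing function of the per-group utilities. Suppose, for contradiction, that $\pi$ maximizes $UF(\pi|q)$ but is not Pareto efficient for the user groups. Then by definition there exists another ranking policy $\pi'$ that dominates $\pi$, meaning $U(\pi'|UG,q)\geq U(\pi|UG,q)$ for every $UG\in\mathbbm{UG}$, with strict inequality for at least one distinguished group $UG^*\in\mathbbm{UG}$.

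Next I would lift this dominance inequality through the user fairness aggregator. Since $f$ is (strictly) increasing, applying $f$ to both sides preserves the weak inequality for every group and preserves the strict inequality for $UG^*$: $f(U(\pi'|UG,q))\geq f(U(\pi|UG,q))$ for all $UG$, with strict inequality at $UG^*$. Non-degeneracy guarantees $\rho_{UG}^q>0$ for every user group, so multiplying each inequality by $\rho_{UG}^q$ and summing gives
\begin{displaymath}
UF(\pi'|q)=\sum_{UG\in\mathbbm{UG}}\rho_{UG}^q f(U(\pi'|UG,q)) > \sum_{UG\in\mathbbm{UG}}\rho_{UG}^q f(U(\pi|UG,q)) = UF(\pi|q),
\end{displaymath}
where the strict inequality arises from the $UG^*$ term (with $\rho_{UG^*}^q>0$) combined with the weak inequalities on the remaining terms. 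This directly contradicts the assumption that $\pi$ maximizes $UF(\pi|q)$, completing the argument.

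The main step that requires care is justifying that the strict inequality at $UG^*$ survives the transformation by $f$; this needs $f$ to be strictly increasing rather than merely non-decreasing. The class of social-welfare functions introduced in Section~\ref{sec:uf} is explicitly described as increasing and concave (with $\log$ as the prototypical example), so I would either invoke this convention directly or, to be safe, note that any non-strict segment of $f$ could only shrink the feasible range of user fairness values and that the canonical choices used throughout the paper (e.g.\ $\log$, or any strictly concave increasing $f$) satisfy strictness. Apart from this point, the argument is essentially a one-line monotone-aggregation calculation enabled by the non-degeneracy assumption $\rho_{UG}^q>0$.
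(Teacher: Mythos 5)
Your argument is correct and matches the paper's own proof essentially verbatim: both proceed by contradiction, take a dominating policy, push the dominance inequalities through the increasing $f$, weight by the positive group proportions $\rho_{UG}^q$ guaranteed by non-degeneracy, and conclude $UF(\pi'|q) > UF(\pi|q)$. Your added remark that $f$ must be strictly increasing (not merely non-decreasing) for the strict inequality to survive is a fair point of care that the paper glosses over by simply invoking "the increasing property of $f$."
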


While the solution is utility-efficient for the user groups, the following Theorem~\ref{theo:uf_efficiency_item} shows that this solution is not utility-efficient for the item groups.

\begin{theorem}
\label{theo:uf_efficiency_item}
There exists a ranking problem and a user fairness function $f$ such that items are not ranked by the expected relevance within each item group under any ranking policy $\pi$ that maximizes user fairness $UF(\pi|q)$. 
\end{theorem}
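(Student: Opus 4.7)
The plan is to exhibit an explicit counterexample consisting of a single item group containing two items, together with two user groups whose intents are essentially disjoint. The intuition driving the construction is that the overall expected relevance $\mathbb{E}_{i\sim\mathcal{I}_{\mathcal{U}^q}}[r(d,i)]$ is an \emph{arithmetic} average of the per-group relevances weighted by $\rho_{UG}^q$, whereas a strictly concave $f$ such as $\log$ effectively converts the user fairness objective into a \emph{geometric}-style average. An item with a very uneven relevance profile can win on arithmetic average while losing on geometric average, so the UF-maximizer will prefer the balanced item even though it has lower expected overall relevance.

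Concretely, I would take $\mathbbm{UG}=\{UG_1,UG_2\}$ with $\rho_{UG_1}^q=\rho_{UG_2}^q=\tfrac{1}{2}$, two intents $i_1,i_2$ such that $UG_j$ has all its mass on $i_j$, a single item group $DG=\{d_1,d_2\}$, and a short exposure vector (say $e(1)=1$ and $e(n)=0$ for $n\geq 2$) so the only thing that matters is who sits at rank $1$. I would set $r(d_1,i_1)=r(d_1,i_2)=1$ and $r(d_2,i_1)=10$, $r(d_2,i_2)=0.01$, so that $\mathbf{r}^{\mathcal{U}^q}_{d_2}=5.005>1=\mathbf{r}^{\mathcal{U}^q}_{d_1}$; hence ranking by expected relevance within $DG$ demands $d_2$ above $d_1$. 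Checking non-degeneracy is routine under this choice.

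Taking $f=\log$, I would then parametrize the ranking policies by the probability $p$ of placing $d_1$ in position~$1$, giving
\begin{equation*}
UF(\pi|q)=\tfrac{1}{2}\log(10-9p)+\tfrac{1}{2}\log(0.01+0.99p).
\end{equation*}
A first-order computation shows the maximizer lies strictly in $(0,1)$ (specifically near $p\approx 0.55$), so the optimal policy places $d_1$ above $d_2$ with positive probability. By the definition of ``items ranked by expected relevance within each item group,'' the existence of even one ranking $\sigma$ with $\pi(\sigma|q)>0$ in which the lower-expected-relevance item $d_1$ sits at a strictly higher-exposure position than $d_2$ is enough to falsify the property, which finishes the proof.

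The only subtle point, and the main obstacle, is ensuring that the UF-maximizing policy really assigns positive probability to the ``wrong'' ordering rather than being unique and deterministic; this is where the strict concavity of $\log$ and the boundary behavior $\log(r(d_2,i_2))=\log(0.01)\ll 0$ do the real work. I would verify this either by the first-order condition above or by directly comparing $UF$ at $p=0,1$ and at an interior point to confirm the interior maximizer, and then note that any maximizer with $p>0$ witnesses the failure of within-group expected-relevance ordering.
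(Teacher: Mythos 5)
Your proposal is correct and follows essentially the same route as the paper's proof: a single item group with two items, two user groups each concentrated on a distinct intent, and $f=\log$ whose strict concavity forces the unique UF-maximizer to be an interior randomization over the two orderings, so a ranking that inverts the within-group expected-relevance order necessarily receives positive probability. The only cosmetic differences are your choice of relevance values (the paper uses $r(d_1,i_1)=1$, $r(d_2,i_2)=0.9$ with zero cross-relevance, yielding the maximizer exactly at $x=1/2$) and your single-position exposure vector versus the paper's $e(1)>e(2)$; your verification that the maximizer is unique and interior via strict concavity correctly handles the ``any maximizing policy'' quantifier.
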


We summarize the results of our full utility efficiency analysis in \autoref{tab:theory_pareto} and provide the details in the appendix.  Maximizing overall utility is the only criterion that ensures utility efficiency for all groups, but the solutions may be poor in terms of fairness or diversity as shown in the zero-utility analysis. Once we explicitly optimize for any one of the fairness or diversity goals, the ranking policy is generally not utility-efficient for the other goals (except that maximizing user fairness ensures utility efficiency for the intents). This implies that the utility efficiency of the three goals are in conflict with each other. Optimizing one of the three desiderata might cause harm or utility drop of the other two. A fair ranking algorithm should make sure the harm or the utility drop is fairly shared among different groups. 

\begin{table}[t]
\caption{Summary of utility efficiency analysis.}
\label{tab:theory_pareto}
\scalebox{1.0}{
\begin{tabular}{l|lll}
\begin{tabular}[c]{@{}l@{}}The policy \\ optimizing \end{tabular}& \begin{tabular}[c]{@{}l@{}}Ranking policy \\Pareto  efficient\\ for the
users?\end{tabular} & \begin{tabular}[c]{@{}l@{}}Items ranked \\by expected\\relevance within\\ each item group?\end{tabular} & \begin{tabular}[c]{@{}l@{}}Each ranking\\ Pareto efficient\\ for the intents?\end{tabular} \\
\hline
Utility           & \checkmark & \checkmark & \checkmark \\
User fairness & \checkmark & $\times$ & \checkmark \\ 
Item fairness & $\times$ & \checkmark & $\times$ \\
Diversity  & $\times$ & $\times$ & \checkmark 
\end{tabular}}
\end{table}

\section{\algnameNS: Optimizing Rankings for Fairness and Diversity}
\label{sec:ranking_framework} 
Driven by the theoretical analysis from the previous section, we now develop the first ranking algorithm --- called \algname for Two-Sided Fairness and Diversity --- that explicitly accounts for user fairness, item fairness, and diversity requirements. The algorithm proceeds in two steps. In the first step, it optimally satisfies user fairness and item fairness simultaneously through convex optimization. In the second step, the algorithm maximizes diversity subject to the fairness constraints from the first step. 

\subsection{Step 1: Convex Optimization for Item and User Fairness}

In the first step, we optimize the marginal rank probability matrix representation $\mathbf{\Sigma}$ of the ranking policy to satisfy both user and item fairness. As already shown in Section~\ref{sec:rankingsetup}, both user fairness and item fairness can be expressed in terms of $\mathbf{\Sigma}$, which reduces the optimization problem from the exponential space of rankings to the polynomial space of marginal rank probability matrices. This leads to the following convex optimization problem
\begin{equation}
\begin{split}
\mbox{argmax}_{\mathbf{\Sigma}}\ & UF(\mathbf{\Sigma}|q)\\
\mbox{s. t. } &\mathbf{1}^{\top}\mathbf{\Sigma} = \mathbf{1}^{\top},\mathbf{\Sigma}\mathbf{1} = \mathbf{1}, \forall i,j\ 0\leq \mathbf{\Sigma}_{i,j}\leq 1  \\
& \mbox{$\mathbf{\Sigma}$ satisfies item-fairness constraints}
\end{split}
\end{equation}
where $UF$ is the user-fairness objective expressed in terms of the marginal rank probability matrix $\mathbf{\Sigma}$ and $\mathbf{1}$ is the vector of $1$s. We enforce item-fairness in the constraints of the optimization problem, together with the linear constraints that ensure the marginal rank probability matrix $\Sigma$ is doubly stochastic. As long as the user-fairness objective is concave and the item-fairness constraints are linear in $\mathbf{\Sigma}$ (e.g. the disparate treatment constraints in~\eqref{eq:item_fairness_naive}), the problem can be solved 
efficiently and globally optimally with convex optimization algorithms~\cite{boyd2004convex}. 

\subsection{Step 2: Sampling Diverse Rankings}
\label{sec:rank_step_2}
Since we can not directly sample rankings from $\Sigma$, we still need to compute a ranking policy $\pi$ that has $\Sigma$ as its matrix of marginal rank probabilities, and thus the desired user and item fairness. For each matrix $\Sigma$, there are typically many different policies that produce these marginal rank probabilities. Among those policies, we aim to choose the one that provides maximum diversity. This can be formulated as the following optimization problem.
\begin{equation}
\begin{split}
\mbox{argmax}_{\pi}\ & D(\pi|q)\\
\mbox{s. t. } &\mE_{\sigma\sim\pi(\cdot|q)}\left[\mathbbm{1}_{\{\sigma(d_m)=n\}}\right]=\mathbf{\Sigma}_{m,n}\quad\forall m,n
\end{split}
\end{equation}
The constraints in this optimization problem correspond to a Birkhoff-von Neumann (BvN) decomposition~\cite{birkhoff1940lattice}, for which efficient algorithms exist. We present a novel variant of Birkhoff's algorithm~\cite{birkhoff1940lattice} to optimize diversity as illustrated in Algorithm~\ref{alg:BvN}. For each round of Birkhoff's algorithm, we find a permutation (ranking) $\sigma$ that can be sampled from the marginal rank probability matrix $\Sigma$. This corresponds to finding a perfect matching $\sigma$ of the bipartite graph generated from $\Sigma$. Then we add this $\sigma$ to the ranking policy $\pi$ with selection probability to be the smallest entry in the permutation $\sigma$. Then we subtract this selection probability from $\Sigma$ for all the entries in the permutation. The algorithm is proved to be correct and we can always find a perfect matching from the bipartite graph generated from $\Sigma$ in each round~\cite{birkhoff1940lattice}. What is more, the resulting policy consists of no more than $(n-1)^2 + 1$ permutations~\cite{johnson1960algorithm} where $n$ is dimension of $\Sigma$. 

\begin{algorithm}[]
\SetAlgoLined
\SetKwInOut{Input}{input}\SetKwInOut{Output}{output}
 \Input{
 A ranking problem $RP$, 
 A diversity function $g$, \newline
 A marginal rank probability matrix $\mathbf{\Sigma}$
 }
 \Output{
A ranking policy $\pi$
}
{\bf initialization:} {$\forall \sigma$ $\pi(\sigma|q)=0$\newline}
\While{$\mathbf{\Sigma}!=\mathbf{0}$}{
Construct a bipartite graph $G$ with items and positions as vertices and with non-zero elements of $\Sigma$ as edges. \newline
{
$\sigma$ = Local-Search-Match(RP, g, G)
}
$\pi(\sigma|q) = \min_{m}\Sigma_{m,\sigma(d_m)}$\newline
\For {each item $d_m$}{
$\Sigma_{m,\sigma(d_m)} = \Sigma_{m,\sigma(d_m)} - \pi(\sigma|q)$}
}
\Return{$\pi$}
\caption{Greedy Algorithm for BvN Decomposition to Optimize Diversity}
\label{alg:BvN}
\end{algorithm}

\begin{algorithm}[t]
\SetAlgoLined
\SetKwInOut{Input}{input}\SetKwInOut{Output}{output}
 \Input{
 A ranking problem $RP$, 
 a diversity function $g$\newline
 A bipartite graph $G$
 }
 \Output{
A ranking $\sigma$
}
$\sigma^*$ = find a perfect matching of $G$ that maximizes utility. \newline
$Improved = True$\newline
\While{$Improved$}{
$Improved = False$\newline
\For{$d_m$, $d_n$ in $\mathbbm{D}^q\times\mathbbm{D}^q$}{
\If{$(d_n$, $\sigma^*(d_m))$ and $(d_m$, $\sigma^*(d_n))$ $\in$ $G$}{
Construct $\sigma'$ by switching $d_m$ and $d_n$ in $\sigma^*$\newline
\If{$D(\sigma '|q)>D(\sigma^*|q)$}{
$\sigma^*$ = $\sigma'$ and 
$Improved = True$}
}
}
}
\Return{$\sigma^*$}
\caption{Local-Search-Match(RP, g, G)}

\label{alg:local_search_match}
\end{algorithm}

With the additional goal in the objective of constructing a policy that maximizes diversity, we choose the permutation matrices in each step greedily to maximize diversity as detailed in Algorithm~\ref{alg:local_search_match}. Since finding the permutation with the largest diversity that is satisfiable in $\Sigma$ is NP-hard, we start with the permutation that maximizes utility among the ones that satisfy the conditions of the BvN decomposition. This can be solved by polynomial-time minimum-cost perfect matching algorithms~\cite{kleinberg2006algorithm}. 
We then adopt a local search strategy that switches two items if the switch increases diversity. We also tried more expensive search strategies that exhaustively search up to position $3$ and found the difference to be small. 
We present the details of the other search strategy in the appendix.

Note that maximizing diversity reduces the utility variance to the users across the rankings drawn from $\pi$. This can be seen as a form of envy reduction~\cite{brams1996fair,steihaus1948problem,procaccia2013cake}, where envy measures the individual reduction in utility that a particular user experiences by not drawing the user's optimal-utility ranking from $\pi$. To show this, we derive an upper bound $D_{UB}^{\Sigma^{\pi,q}}$ of the diversity as a function of $\Sigma^{\pi,q}$
\begin{equation}
\begin{split}
D(\pi|q)& = \mE_{\sigma\sim\pi(\cdot|q),i\sim\mathcal{I}_{\mathcal{U}^q}}\left[g(U(\sigma|i,q))\right]\\
&\leq \mE_{i\sim\mathcal{I}_{\mathcal{U}^q}}\left[g\left(\mE_{\sigma\sim\pi(\cdot|q)}[U(\sigma|i,q)]\right)\right]\\
&= \mE_{i\sim\mathcal{I}_{\mathcal{U}^q}}\left[g\left(\left(\mathbf{r}^{i}\right)^{\top}\Sigma^{\pi,q}\mathbf{e}\right)\right]= D_{UB}^{\Sigma^{\pi,q}}, 
\end{split}
\end{equation}
where $\mathbf{r}^{i}$ is the relevance vector to the intent $i$ with $\mathbf{r}^{i}_{m} = r(d_m, i)$. The equality holds when, for each user with a particular intent, the utility for that intent is the same across the rankings sampled from the ranking policy --- which means that there is no envy of a user that receives a particular ranking to the other rankings that could have been sampled from the ranking policy.

Note that the upper bound is determined by $\Sigma^{\pi,q}$, which is optimized in the first step. The second step maximizes diversity to match this upper bound, which can be interpreted as reducing the envy of the users. This also illustrates a value judgment in the design of \algnameNS, where we optimize user and item fairness as the primary criteria, and diversity as a secondary one. This is also reflected in \autoref{tab:theory_zero_util}, where \algname is shown to guarantee non-zero utility to the user and item groups, but not necessarily to cover the maximum amount of intent. 

\section{Empirical Evaluation}
\label{sec:exp}
In addition to the theoretical characterizations, we now evaluate empirically in how far different ranking algorithms affect utility, user fairness, item fairness and diversity on a movie recommendation dataset. 

\subsection{Dataset}

We constructed the first benchmark dataset that provides intent, user group, and item group annotations. We collected $100$ movies from different genres \{ Romance (20), Comedy (25), Action (25), Thriller (15), Sci-Fi (15) \} that are lead by actors of different races \{black-lead (20), white-lead (80)\}. We treat the genres as the intent set and the leading-actor races as the item group set. The relevance of a movie to a genre is the average user rating on IMDB\footnote{https://www.imdb.com/} if the movie belongs to that genre and 0 otherwise. 
To fully leverage the range of the ratings, we subtract the minimum rating $6$ in the dataset from all the ratings to obtain the relevance. For the users, we regard male and female as two user groups and set the user proportion $\rho_{male}\in [0,1]$ as 0.6 by default and $\rho_{female} = 1-\rho_{male}$. To enable varying the intent similarity between the two user groups, we arbitrarily construct two dissimilar intent distributions $\mathcal{I}_{1} = [0.5,0.5,0, 0, 0]$ and $\mathcal{I}_2 = [0, 0, 0.5, 0.25, 0.25]$ over the five genres. We use an intent similarity factor $s\in [0,1]$ (0.5 by default) to control the intent similarity between the two user groups $\mathcal{I}_{male} = (1-0.5s)\mathcal{I}_{1} + 0.5s\mathcal{I}_{2}$ and $\mathcal{I}_{female} = (1-0.5s)\mathcal{I}_{2} + 0.5s\mathcal{I}_{1}$. 

\subsection{Experiment Setup} 
All results are averaged over 5,000 samples (50,000 samples for the results in \autoref{tab:exp_main}), where each sample consists of 15 randomly selected movies to be ranked. To make sure the inputs to the merit and diversity functions are within their domains for all the algorithms while there is a clear trade-off between the policies, we set the user fairness function as $f(\cdot) = log (\cdot - 0.6)$ and the diversity function as $g(\cdot) =log(\cdot + 0.0001)$. To control the exposure steepness, we set the exposure function as $e(\cdot) = (\frac{1}{\cdot})^\eta$ where $\eta$ controls the exposure steepness and we set $\eta = 1$ by default. For all the experiments, we use the default parameters introduced in this section unless explicitly stated otherwise.  

We compare \algname with 4 other policies that maximize utility, item fairness, user fairness, and diversity respectively. For \algname and the policies that maximize item fairness and user fairness, we first satisfy the fairness goals through convex optimization\footnote{We use MOSEK (https://www.mosek.com/) to solve the convex optimization problem. }, and then we optimize the diversity by running the greedy BvN decomposition algorithm. We use the greedy submodular optimization approximation algorithm with two matroid constraints to maximize diversity~\cite{krause2014submodular}. The algorithm is detailed in the appendix. To avoid cases where the item-fairness constraints can not be satisfied, we optimize the one-sided disparate treatment constraints proposed in ~\cite{Singh/Joachims/19a} in the experiments: $\frac{E(\pi|DG_2)}{M(DG_2,q)}\leq \frac{E(\pi|DG_1)}{M(DG_1,q)}$ with $M(DG_1, q) \leq M(DG_2,q)$. 

For clarity of presentation, we bring the user fairness, the diversity and the upper bound on the diversity calculated from the marginal rank probability matrix (diversity UB) on the same scale as the overall utility by applying the inverse of user fairness function and diversity function to each of them to get $f^{-1}(UF(\pi|q))$, $g^{-1}(D(\pi|q))$, and $g^{-1}(D^{\Sigma^{\pi,q}}_{UB})$, where $f^{-1}(\cdot) = e^{\cdot} + 0.6$ and $g^{-1}(\cdot) = e^{\cdot} - 0.0001$. Item unfairness is the amount of violation of the one-sided disparate treatment constraints $\text{max}(0, \frac{E(\pi|DG_2)}{M(DG_2,q)} - \frac{E(\pi|DG_1)}{M(DG_1,q)} )$ with $M(DG_1, q) \leq M(DG_2,q)$. 

\begin{figure*}[ht]
  \centering
    \includegraphics[width=0.99\textwidth]{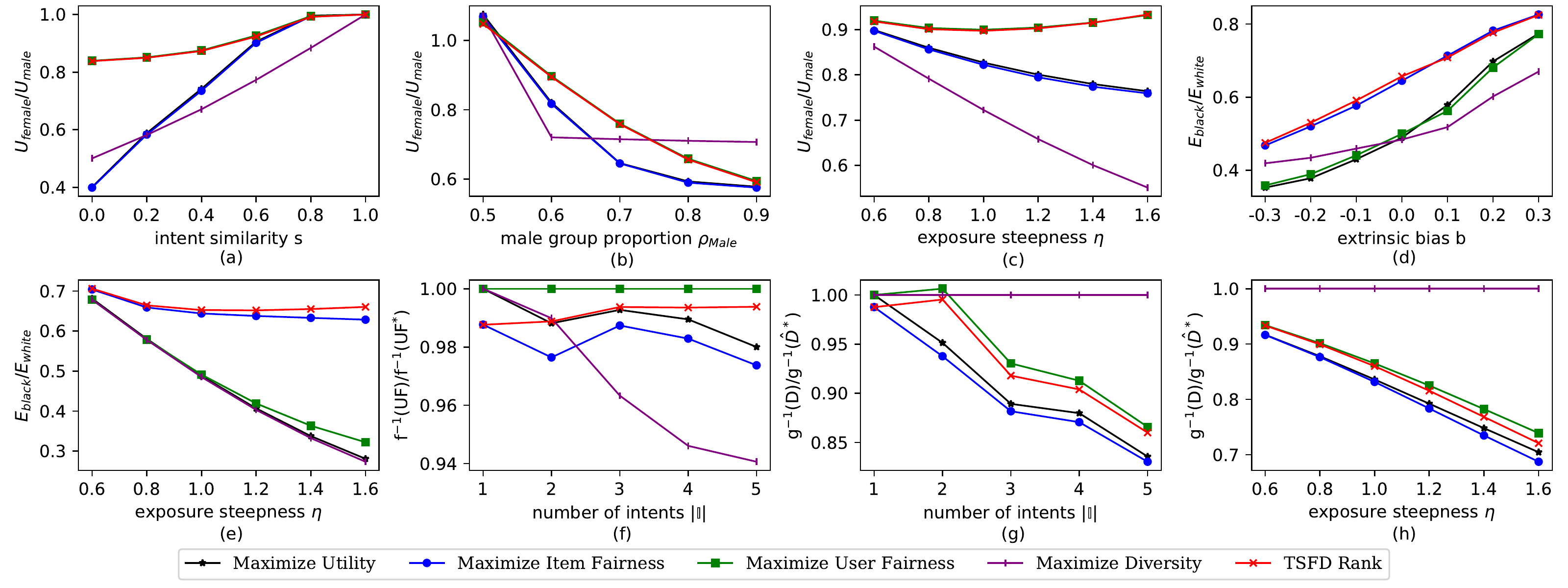}
  \caption{The effects of user intent similarity, user group proportion, extrinsic bias to an item group, exposure steepness and number of intents on the item groups, user groups and intents. 
  }
 \label{fig:exp_comprehensive}
\end{figure*}
\subsection{Empirical Results}
\textbf{How do different methods trade-off between user fairness, item fairness, diversity and utility?}
We show the empirical results with the default setup in \autoref{tab:exp_main}. As expected, the ranking algorithms that consider only one of the measures excel at that measure but achieve sub-optimal performance on the other ones. In contrast, the proposed \algname explicitly controls all desiderata by sacrificing some utility to achieve perfect item fairness, second-best user fairness and third-best diversity (very close to the second-best). The diversity upper bound provides a skyline of how much diversity \algname can possibly achieve. So the small difference between the diversity achieved by the policy that maximizes item fairness, the policy that maximizes user fairness, the policy produced by \algname and their respective diversity upper bound shows that the greedy BvN decomposition algorithm achieves diversity very close to the upper bound.

\begin{table}[t]
\caption[Caption for LOF]{performance of different ranking algorithms\footnotemark}
\label{tab:exp_main}
\scalebox{0.95}{
\begin{tabular}{l|lllll}
\begin{tabular}[c]{@{}l@{}}The policy\\ optimizing\end{tabular}&Utility  & \begin{tabular}[c]{@{}l@{}}Item\\ unfairness\end{tabular}     & \begin{tabular}[c]{@{}l@{}}User\\ fairness\end{tabular}   & Diversity & \begin{tabular}[c]{@{}l@{}}Diversity\\ UB\end{tabular}\\
        \hline
Utility & \textbf{1.518} & 0.186 & 1.447 & 1.016 & 1.016 \\
Item fairness & 1.509 & \textbf{0.000} & 1.437 & 1.010 & 1.013  \\
User fairness & 1.498 & 0.193 & \textbf{1.476} & 1.052 & 1.062  \\
Diversity & 1.428 & 0.185 & 1.390 & \textbf{1.214} & 1.214 \\
\algnameNS & 1.489 & \textbf{0.000} & 1.466 & 1.045 & 1.055 
\end{tabular}
}
\end{table}
\footnotetext{The standard error of each value presented in the table is smaller than 0.001.}

\textbf{How do user intent similarity, user group proportion, and exposure steepness affect user fairness?}
\autoref{fig:exp_comprehensive} (a) (b) (c) show the effect of the three factors on the utility ratio between female and male user groups $U_{female}/U_{male} =U(\pi|female,q)/U(\pi|male,q)$, which measures the utility difference between the two user groups. 
For the policy that maximizes user fairness, the minority (female) group gets a smaller ratio of utility as the intent similarity decreases. The ratio also decreases as the male group proportion increases, and it stays flat with varying exposure steepness. This is expected since the user fairness objective gives larger weight for the majority group but is oblivious to exposure steepness. The proposed \algname achieves almost the same ratio as the policy maximizing only user fairness, which shows its effectiveness on fairly distributing the utility drop due to the other desiderata between the two user groups. The policies that maximize item fairness or overall utility amplify the utility drop of the minority (female) user group more than \algnameNS. The policies that maximize diversity sometimes amplify the utility drop while sometimes over-correcting it.   

\textbf{How do extrinsic bias and exposure steepness affect item fairness?}
Biased relevance estimates, which might come from biased data, can contribute to unfair exposure allocation to the items~\cite{Singh/Joachims/18a}. To simulate the bias, for each black-lead movie $d$ of genre $i$, we set the biased relevance as $r_b(d, i) = (1+b)r(d,i) $ where $b$ is the bias level to the black-lead movies. The results with varying biases are shown in \autoref{fig:exp_comprehensive} (d). The policy maximizing item fairness ensures roughly a linear change in exposure ratio as the bias increases, which is expected since the exposure ratio is a linear function of the average relevance of black-lead movies, which in turn is a linear function of the bias level $b$. The proposed \algname achieves similar exposure ratio as the policy maximizing item fairness, while all the other methods lead to undesirable over-amplifications of the bias towards the less represented black-lead movies. 
\autoref{fig:exp_comprehensive} (e) shows that when the exposure steepness increases, both \algname and the policy maximizing item fairness manage to control the winner-takes-all dynamics while all the other methods fail to ensure a more equitable amount of exposure to the less represented black-lead movies. 

\textbf{How do the number of intents and exposure steepness affect diversity?}
The diversity ratio $g^{-1}(D)/g^{-1}(\hat{D}^{*}) = \frac{g^{-1}(D(\pi|q))}{g^{-1}(\hat{D}^*)}$ and the user fairness ratio $f^{-1}(UF)/f^{-1}(UF^{*}) = \frac{f^{-1}(UF(\pi|q))}{f^{-1}(UF^*)}$ measure how far a policy deviates from the policies that optimize each desideratum where $UF^*$ is the user fairness achieved by the policy maximizing user fairness and $\hat{D}^*$ is the diversity achieved by optimizing diversity by the greedy submodular approximation algorithm. \autoref{fig:exp_comprehensive} (f) shows that as the number of intents gets larger, maximizing diversity gets further away from maximizing user fairness. 
\autoref{fig:exp_comprehensive} (g) and (h) show that as the number of intents gets larger and as the exposure distribution gets steeper, the policies that satisfy other desiderata deviate further from the policy maximizing diversity. Combined with the other empirical findings, these results show that maximizing diversity fails to achieve user or item fairness and vice versa. That \algname achieves the third-best diversity is expected, since it prioritizes fairness over diversity and only considers diversity in the second step when the marginal rank probability matrix representation $\Sigma^{\pi,q}$ of the ranking policy with a sub-optimal diversity upper bound is already determined. 

\section{Conclusion}
We analyzed the interplay between user fairness, item fairness and diversity for rankings in two-sided markets and found that they are three independent and conflicting goals. Driven by the analysis, we proposed \algnameNS, the first ranking algorithm that explicitly enforces user fairness, item fairness and diversity. \algname can optimally satisfy user fairness and item fairness through convex optimization and then optimize diversity subject to the fairness constraints via a novel BvN decomposition algorithm. Empirical results on a movie recommendation dataset confirm that \algname can effectively and robustly control the three desiderata. 

%%
%% The acknowledgments section is defined using the "acks" environment
%% (and NOT an unnumbered section). This ensures the proper
%% identification of the section in the article metadata, and the
%% consistent spelling of the heading.
\begin{acks}
This research was supported in part by NSF Awards IIS-1901168 and IIS-2008139. All content represents the opinion of the authors, which is not necessarily shared or endorsed by their respective employers and/or sponsors.
\end{acks}

%%
%% The next two lines define the bibliography style to be used, and
%% the bibliography file.
\bibliographystyle{ACM-Reference-Format}
\balance 
\bibliography{paper}

%%% -*-BibTeX-*-
%%% Do NOT edit. File created by BibTeX with style
%%% ACM-Reference-Format-Journals [18-Jan-2012].

\begin{thebibliography}{74}

%%% ====================================================================
%%% NOTE TO THE USER: you can override these defaults by providing
%%% customized versions of any of these macros before the \bibliography
%%% command.  Each of them MUST provide its own final punctuation,
%%% except for \shownote{}, \showDOI{}, and \showURL{}.  The latter two
%%% do not use final punctuation, in order to avoid confusing it with
%%% the Web address.
%%%
%%% To suppress output of a particular field, define its macro to expand
%%% to an empty string, or better, \unskip, like this:
%%%
%%% \newcommand{\showDOI}[1]{\unskip}   % LaTeX syntax
%%%
%%% \def \showDOI #1{\unskip}           % plain TeX syntax
%%%
%%% ====================================================================

\ifx \showCODEN    \undefined \def \showCODEN     #1{\unskip}     \fi
\ifx \showDOI      \undefined \def \showDOI       #1{#1}\fi
\ifx \showISBNx    \undefined \def \showISBNx     #1{\unskip}     \fi
\ifx \showISBNxiii \undefined \def \showISBNxiii  #1{\unskip}     \fi
\ifx \showISSN     \undefined \def \showISSN      #1{\unskip}     \fi
\ifx \showLCCN     \undefined \def \showLCCN      #1{\unskip}     \fi
\ifx \shownote     \undefined \def \shownote      #1{#1}          \fi
\ifx \showarticletitle \undefined \def \showarticletitle #1{#1}   \fi
\ifx \showURL      \undefined \def \showURL       {\relax}        \fi
% The following commands are used for tagged output and should be
% invisible to TeX
\providecommand\bibfield[2]{#2}
\providecommand\bibinfo[2]{#2}
\providecommand\natexlab[1]{#1}
\providecommand\showeprint[2][]{arXiv:#2}

\bibitem[\protect\citeauthoryear{Abdollahpouri, Adomavicius, Burke, Guy,
  Jannach, Kamishima, Krasnodebski, and Pizzato}{Abdollahpouri
  et~al\mbox{.}}{2020}]%
        {abdollahpouri2020multistakeholder}
\bibfield{author}{\bibinfo{person}{Himan Abdollahpouri},
  \bibinfo{person}{Gediminas Adomavicius}, \bibinfo{person}{Robin Burke},
  \bibinfo{person}{Ido Guy}, \bibinfo{person}{Dietmar Jannach},
  \bibinfo{person}{Toshihiro Kamishima}, \bibinfo{person}{Jan Krasnodebski},
  {and} \bibinfo{person}{Luiz Pizzato}.} \bibinfo{year}{2020}\natexlab{}.
\newblock \showarticletitle{Multistakeholder recommendation: Survey and
  research directions}.
\newblock \bibinfo{journal}{\emph{User Modeling and User-Adapted Interaction}}
  \bibinfo{volume}{30}, \bibinfo{number}{1} (\bibinfo{year}{2020}),
  \bibinfo{pages}{127--158}.
\newblock


\bibitem[\protect\citeauthoryear{Agarwal, Beygelzimer, Dud{\'\i}k, Langford,
  and Wallach}{Agarwal et~al\mbox{.}}{2018}]%
        {agarwal2018reductions}
\bibfield{author}{\bibinfo{person}{Alekh Agarwal}, \bibinfo{person}{Alina
  Beygelzimer}, \bibinfo{person}{Miroslav Dud{\'\i}k}, \bibinfo{person}{John
  Langford}, {and} \bibinfo{person}{Hanna Wallach}.}
  \bibinfo{year}{2018}\natexlab{}.
\newblock \showarticletitle{A reductions approach to fair classification}. In
  \bibinfo{booktitle}{\emph{International Conference on Machine Learning}}.
  PMLR, \bibinfo{pages}{60--69}.
\newblock


\bibitem[\protect\citeauthoryear{Agrawal, Gollapudi, Halverson, and
  Ieong}{Agrawal et~al\mbox{.}}{2009}]%
        {agrawal2009diversifying}
\bibfield{author}{\bibinfo{person}{Rakesh Agrawal}, \bibinfo{person}{Sreenivas
  Gollapudi}, \bibinfo{person}{Alan Halverson}, {and} \bibinfo{person}{Samuel
  Ieong}.} \bibinfo{year}{2009}\natexlab{}.
\newblock \showarticletitle{Diversifying search results}. In
  \bibinfo{booktitle}{\emph{Proceedings of the second ACM international
  conference on web search and data mining}}. \bibinfo{pages}{5--14}.
\newblock


\bibitem[\protect\citeauthoryear{Armstrong}{Armstrong}{2006}]%
        {armstrong2006competition}
\bibfield{author}{\bibinfo{person}{Mark Armstrong}.}
  \bibinfo{year}{2006}\natexlab{}.
\newblock \showarticletitle{Competition in two-sided markets}.
\newblock \bibinfo{journal}{\emph{The RAND Journal of Economics}}
  \bibinfo{volume}{37}, \bibinfo{number}{3} (\bibinfo{year}{2006}),
  \bibinfo{pages}{668--691}.
\newblock


\bibitem[\protect\citeauthoryear{Asudeh, Jagadish, Stoyanovich, and Das}{Asudeh
  et~al\mbox{.}}{2019}]%
        {asudeh2019designing}
\bibfield{author}{\bibinfo{person}{Abolfazl Asudeh}, \bibinfo{person}{HV
  Jagadish}, \bibinfo{person}{Julia Stoyanovich}, {and} \bibinfo{person}{Gautam
  Das}.} \bibinfo{year}{2019}\natexlab{}.
\newblock \showarticletitle{Designing fair ranking schemes}. In
  \bibinfo{booktitle}{\emph{Proceedings of the 2019 International Conference on
  Management of Data}}. \bibinfo{pages}{1259--1276}.
\newblock


\bibitem[\protect\citeauthoryear{Bakshy, Messing, and Adamic}{Bakshy
  et~al\mbox{.}}{2015}]%
        {bakshy2015exposure}
\bibfield{author}{\bibinfo{person}{Eytan Bakshy}, \bibinfo{person}{Solomon
  Messing}, {and} \bibinfo{person}{Lada~A Adamic}.}
  \bibinfo{year}{2015}\natexlab{}.
\newblock \showarticletitle{Exposure to ideologically diverse news and opinion
  on Facebook}.
\newblock \bibinfo{journal}{\emph{Science}} \bibinfo{volume}{348},
  \bibinfo{number}{6239} (\bibinfo{year}{2015}), \bibinfo{pages}{1130--1132}.
\newblock


\bibitem[\protect\citeauthoryear{Basu, DiCiccio, Logan, and Karoui}{Basu
  et~al\mbox{.}}{2020}]%
        {basu2020framework}
\bibfield{author}{\bibinfo{person}{Kinjal Basu}, \bibinfo{person}{Cyrus
  DiCiccio}, \bibinfo{person}{Heloise Logan}, {and}
  \bibinfo{person}{Noureddine~El Karoui}.} \bibinfo{year}{2020}\natexlab{}.
\newblock \showarticletitle{A Framework for Fairness in Two-Sided
  Marketplaces}.
\newblock \bibinfo{journal}{\emph{arXiv preprint arXiv:2006.12756}}
  (\bibinfo{year}{2020}).
\newblock


\bibitem[\protect\citeauthoryear{Beede, Julian, Langdon, McKittrick, Khan, and
  Doms}{Beede et~al\mbox{.}}{2011}]%
        {beede2011women}
\bibfield{author}{\bibinfo{person}{David~N Beede}, \bibinfo{person}{Tiffany~A
  Julian}, \bibinfo{person}{David Langdon}, \bibinfo{person}{George
  McKittrick}, \bibinfo{person}{Beethika Khan}, {and} \bibinfo{person}{Mark~E
  Doms}.} \bibinfo{year}{2011}\natexlab{}.
\newblock \showarticletitle{Women in STEM: A gender gap to innovation}.
\newblock \bibinfo{journal}{\emph{Economics and Statistics Administration Issue
  Brief}} \bibinfo{number}{04-11} (\bibinfo{year}{2011}).
\newblock


\bibitem[\protect\citeauthoryear{Beutel, Chen, Doshi, Qian, Wei, Wu, Heldt,
  Zhao, Hong, Chi, et~al\mbox{.}}{Beutel et~al\mbox{.}}{2019}]%
        {beutel2019fairness}
\bibfield{author}{\bibinfo{person}{Alex Beutel}, \bibinfo{person}{Jilin Chen},
  \bibinfo{person}{Tulsee Doshi}, \bibinfo{person}{Hai Qian},
  \bibinfo{person}{Li Wei}, \bibinfo{person}{Yi Wu}, \bibinfo{person}{Lukasz
  Heldt}, \bibinfo{person}{Zhe Zhao}, \bibinfo{person}{Lichan Hong},
  \bibinfo{person}{Ed~H Chi}, {et~al\mbox{.}}} \bibinfo{year}{2019}\natexlab{}.
\newblock \showarticletitle{Fairness in recommendation ranking through pairwise
  comparisons}. In \bibinfo{booktitle}{\emph{ACM SIGKDD International
  Conference on Knowledge Discovery \& Data Mining(KDD)}}.
  \bibinfo{pages}{2212--2220}.
\newblock


\bibitem[\protect\citeauthoryear{Biega, Gummadi, and Weikum}{Biega
  et~al\mbox{.}}{2018}]%
        {biega2018equity}
\bibfield{author}{\bibinfo{person}{Asia~J Biega}, \bibinfo{person}{Krishna~P
  Gummadi}, {and} \bibinfo{person}{Gerhard Weikum}.}
  \bibinfo{year}{2018}\natexlab{}.
\newblock \showarticletitle{Equity of attention: Amortizing individual fairness
  in rankings}. In \bibinfo{booktitle}{\emph{ACM Conference on Research and
  Development in Information Retrieval (SIGIR)}}. \bibinfo{pages}{405--414}.
\newblock


\bibitem[\protect\citeauthoryear{Birkhoff}{Birkhoff}{1940}]%
        {birkhoff1940lattice}
\bibfield{author}{\bibinfo{person}{Garrett Birkhoff}.}
  \bibinfo{year}{1940}\natexlab{}.
\newblock \bibinfo{booktitle}{\emph{Lattice theory}}.
  Vol.~\bibinfo{volume}{25}.
\newblock \bibinfo{publisher}{American Mathematical Soc.}
\newblock


\bibitem[\protect\citeauthoryear{Bottou, Peters, Qui{\~n}onero-Candela,
  Charles, Chickering, Portugaly, Ray, Simard, and Snelson}{Bottou
  et~al\mbox{.}}{2013}]%
        {bottou2013counterfactual}
\bibfield{author}{\bibinfo{person}{L{\'e}on Bottou}, \bibinfo{person}{Jonas
  Peters}, \bibinfo{person}{Joaquin Qui{\~n}onero-Candela},
  \bibinfo{person}{Denis~X Charles}, \bibinfo{person}{D~Max Chickering},
  \bibinfo{person}{Elon Portugaly}, \bibinfo{person}{Dipankar Ray},
  \bibinfo{person}{Patrice Simard}, {and} \bibinfo{person}{Ed Snelson}.}
  \bibinfo{year}{2013}\natexlab{}.
\newblock \showarticletitle{Counterfactual reasoning and learning systems: The
  example of computational advertising}.
\newblock \bibinfo{journal}{\emph{The Journal of Machine Learning Research}}
  \bibinfo{volume}{14}, \bibinfo{number}{1} (\bibinfo{year}{2013}),
  \bibinfo{pages}{3207--3260}.
\newblock


\bibitem[\protect\citeauthoryear{Boyd, Boyd, and Vandenberghe}{Boyd
  et~al\mbox{.}}{2004}]%
        {boyd2004convex}
\bibfield{author}{\bibinfo{person}{Stephen Boyd}, \bibinfo{person}{Stephen~P
  Boyd}, {and} \bibinfo{person}{Lieven Vandenberghe}.}
  \bibinfo{year}{2004}\natexlab{}.
\newblock \bibinfo{booktitle}{\emph{Convex optimization}}.
\newblock \bibinfo{publisher}{Cambridge university press}.
\newblock


\bibitem[\protect\citeauthoryear{Brams and Taylor}{Brams and Taylor}{1996}]%
        {brams1996fair}
\bibfield{author}{\bibinfo{person}{Steven~J Brams} {and}
  \bibinfo{person}{Alan~D Taylor}.} \bibinfo{year}{1996}\natexlab{}.
\newblock \bibinfo{booktitle}{\emph{Fair Division: From cake-cutting to dispute
  resolution}}.
\newblock \bibinfo{publisher}{Cambridge University Press}.
\newblock


\bibitem[\protect\citeauthoryear{Burke}{Burke}{2017}]%
        {burke2017multisided}
\bibfield{author}{\bibinfo{person}{Robin Burke}.}
  \bibinfo{year}{2017}\natexlab{}.
\newblock \showarticletitle{Multisided fairness for recommendation}. In
  \bibinfo{booktitle}{\emph{Workshop on Fairness, Accountability, and
  Transparency in Machine Learning}}.
\newblock


\bibitem[\protect\citeauthoryear{Burke, Sonboli, and Ordonez-Gauger}{Burke
  et~al\mbox{.}}{2018}]%
        {burke2018balanced}
\bibfield{author}{\bibinfo{person}{Robin Burke}, \bibinfo{person}{Nasim
  Sonboli}, {and} \bibinfo{person}{Aldo Ordonez-Gauger}.}
  \bibinfo{year}{2018}\natexlab{}.
\newblock \showarticletitle{Balanced neighborhoods for multi-sided fairness in
  recommendation}. In \bibinfo{booktitle}{\emph{Conference on Fairness,
  Accountability and Transparency}}. \bibinfo{pages}{202--214}.
\newblock


\bibitem[\protect\citeauthoryear{Caillaud and Jullien}{Caillaud and
  Jullien}{2003}]%
        {caillaud2003chicken}
\bibfield{author}{\bibinfo{person}{Bernard Caillaud} {and}
  \bibinfo{person}{Bruno Jullien}.} \bibinfo{year}{2003}\natexlab{}.
\newblock \showarticletitle{Chicken \& egg: Competition among intermediation
  service providers}.
\newblock \bibinfo{journal}{\emph{RAND journal of Economics}}
  (\bibinfo{year}{2003}), \bibinfo{pages}{309--328}.
\newblock


\bibitem[\protect\citeauthoryear{Carbonell and Goldstein}{Carbonell and
  Goldstein}{1998}]%
        {carbonell1998use}
\bibfield{author}{\bibinfo{person}{Jaime Carbonell} {and} \bibinfo{person}{Jade
  Goldstein}.} \bibinfo{year}{1998}\natexlab{}.
\newblock \showarticletitle{The use of MMR, diversity-based reranking for
  reordering documents and producing summaries}. In
  \bibinfo{booktitle}{\emph{ACM SIGIR conference on Research and development in
  information retrieval}}.
\newblock


\bibitem[\protect\citeauthoryear{Celis, Mehrotra, and Vishnoi}{Celis
  et~al\mbox{.}}{2020}]%
        {celis2020interventions}
\bibfield{author}{\bibinfo{person}{L~Elisa Celis}, \bibinfo{person}{Anay
  Mehrotra}, {and} \bibinfo{person}{Nisheeth~K Vishnoi}.}
  \bibinfo{year}{2020}\natexlab{}.
\newblock \showarticletitle{Interventions for ranking in the presence of
  implicit bias}. In \bibinfo{booktitle}{\emph{Proceedings of the 2020
  Conference on Fairness, Accountability, and Transparency}}.
  \bibinfo{pages}{369--380}.
\newblock


\bibitem[\protect\citeauthoryear{Celis, Straszak, and Vishnoi}{Celis
  et~al\mbox{.}}{2017}]%
        {celis2017ranking}
\bibfield{author}{\bibinfo{person}{L~Elisa Celis}, \bibinfo{person}{Damian
  Straszak}, {and} \bibinfo{person}{Nisheeth~K Vishnoi}.}
  \bibinfo{year}{2017}\natexlab{}.
\newblock \showarticletitle{Ranking with fairness constraints}.
\newblock \bibinfo{journal}{\emph{arXiv preprint arXiv:1704.06840}}
  (\bibinfo{year}{2017}).
\newblock


\bibitem[\protect\citeauthoryear{Chapelle, Ji, Liao, Velipasaoglu, Lai, and
  Wu}{Chapelle et~al\mbox{.}}{2011}]%
        {chapelle2011intent}
\bibfield{author}{\bibinfo{person}{Olivier Chapelle}, \bibinfo{person}{Shihao
  Ji}, \bibinfo{person}{Ciya Liao}, \bibinfo{person}{Emre Velipasaoglu},
  \bibinfo{person}{Larry Lai}, {and} \bibinfo{person}{Su-Lin Wu}.}
  \bibinfo{year}{2011}\natexlab{}.
\newblock \showarticletitle{Intent-based diversification of web search results:
  metrics and algorithms}.
\newblock \bibinfo{journal}{\emph{Information Retrieval}} \bibinfo{volume}{14},
  \bibinfo{number}{6} (\bibinfo{year}{2011}), \bibinfo{pages}{572--592}.
\newblock


\bibitem[\protect\citeauthoryear{Chouldechova}{Chouldechova}{2017}]%
        {chouldechova2017fair}
\bibfield{author}{\bibinfo{person}{Alexandra Chouldechova}.}
  \bibinfo{year}{2017}\natexlab{}.
\newblock \showarticletitle{Fair prediction with disparate impact: A study of
  bias in recidivism prediction instruments}.
\newblock \bibinfo{journal}{\emph{Big data}} \bibinfo{volume}{5},
  \bibinfo{number}{2} (\bibinfo{year}{2017}), \bibinfo{pages}{153--163}.
\newblock


\bibitem[\protect\citeauthoryear{Clarke, Kolla, Cormack, Vechtomova, Ashkan,
  B{\"u}ttcher, and MacKinnon}{Clarke et~al\mbox{.}}{2008}]%
        {clarke2008novelty}
\bibfield{author}{\bibinfo{person}{Charles~LA Clarke},
  \bibinfo{person}{Maheedhar Kolla}, \bibinfo{person}{Gordon~V Cormack},
  \bibinfo{person}{Olga Vechtomova}, \bibinfo{person}{Azin Ashkan},
  \bibinfo{person}{Stefan B{\"u}ttcher}, {and} \bibinfo{person}{Ian
  MacKinnon}.} \bibinfo{year}{2008}\natexlab{}.
\newblock \showarticletitle{Novelty and diversity in information retrieval
  evaluation}. In \bibinfo{booktitle}{\emph{ACM SIGIR conference on Research
  and development in information retrieval}}.
\newblock


\bibitem[\protect\citeauthoryear{Craswell, Zoeter, Taylor, and Ramsey}{Craswell
  et~al\mbox{.}}{2008}]%
        {craswell2008experimental}
\bibfield{author}{\bibinfo{person}{Nick Craswell}, \bibinfo{person}{Onno
  Zoeter}, \bibinfo{person}{Michael Taylor}, {and} \bibinfo{person}{Bill
  Ramsey}.} \bibinfo{year}{2008}\natexlab{}.
\newblock \showarticletitle{An experimental comparison of click position-bias
  models}. In \bibinfo{booktitle}{\emph{international conference on web search
  and data mining}}.
\newblock


\bibitem[\protect\citeauthoryear{Evans and Schmalensee}{Evans and
  Schmalensee}{2016}]%
        {evans2016matchmakers}
\bibfield{author}{\bibinfo{person}{David~S Evans} {and}
  \bibinfo{person}{Richard Schmalensee}.} \bibinfo{year}{2016}\natexlab{}.
\newblock \bibinfo{booktitle}{\emph{Matchmakers: The new economics of
  multisided platforms}}.
\newblock \bibinfo{publisher}{Harvard Business Review Press}.
\newblock


\bibitem[\protect\citeauthoryear{Gale and Shapley}{Gale and Shapley}{1962}]%
        {gale1962college}
\bibfield{author}{\bibinfo{person}{David Gale} {and} \bibinfo{person}{Lloyd~S
  Shapley}.} \bibinfo{year}{1962}\natexlab{}.
\newblock \showarticletitle{College admissions and the stability of marriage}.
\newblock \bibinfo{journal}{\emph{The American Mathematical Monthly}}
  \bibinfo{volume}{69}, \bibinfo{number}{1} (\bibinfo{year}{1962}),
  \bibinfo{pages}{9--15}.
\newblock


\bibitem[\protect\citeauthoryear{Geyik, Ambler, and Kenthapadi}{Geyik
  et~al\mbox{.}}{2019}]%
        {geyik2019fairness}
\bibfield{author}{\bibinfo{person}{Sahin~Cem Geyik}, \bibinfo{person}{Stuart
  Ambler}, {and} \bibinfo{person}{Krishnaram Kenthapadi}.}
  \bibinfo{year}{2019}\natexlab{}.
\newblock \showarticletitle{Fairness-aware ranking in search \& recommendation
  systems with application to linkedin talent search}. In
  \bibinfo{booktitle}{\emph{International Conference on Knowledge Discovery \&
  Data Mining}}.
\newblock


\bibitem[\protect\citeauthoryear{Grimmelmann}{Grimmelmann}{2013}]%
        {grimmelmann2013speech}
\bibfield{author}{\bibinfo{person}{James Grimmelmann}.}
  \bibinfo{year}{2013}\natexlab{}.
\newblock \showarticletitle{Speech engines}.
\newblock \bibinfo{journal}{\emph{Minn. L. Rev.}}  \bibinfo{volume}{98}
  (\bibinfo{year}{2013}), \bibinfo{pages}{868}.
\newblock


\bibitem[\protect\citeauthoryear{Hardt, Price, and Srebro}{Hardt
  et~al\mbox{.}}{2016}]%
        {HardtPNS16}
\bibfield{author}{\bibinfo{person}{Moritz Hardt}, \bibinfo{person}{Eric Price},
  {and} \bibinfo{person}{Nati Srebro}.} \bibinfo{year}{2016}\natexlab{}.
\newblock \showarticletitle{Equality of Opportunity in Supervised Learning}. In
  \bibinfo{booktitle}{\emph{Advances in Neural Information Processing
  Systems}}. \bibinfo{pages}{3315--3323}.
\newblock


\bibitem[\protect\citeauthoryear{Holstein, Wortman~Vaughan, Daum{\'e}~III,
  Dudik, and Wallach}{Holstein et~al\mbox{.}}{2019}]%
        {holstein2019improving}
\bibfield{author}{\bibinfo{person}{Kenneth Holstein}, \bibinfo{person}{Jennifer
  Wortman~Vaughan}, \bibinfo{person}{Hal Daum{\'e}~III}, \bibinfo{person}{Miro
  Dudik}, {and} \bibinfo{person}{Hanna Wallach}.}
  \bibinfo{year}{2019}\natexlab{}.
\newblock \showarticletitle{Improving fairness in machine learning systems:
  What do industry practitioners need?}. In \bibinfo{booktitle}{\emph{CHI
  Conference on Human Factors in Computing Systems}}.
\newblock


\bibitem[\protect\citeauthoryear{Joachims, Swaminathan, and Schnabel}{Joachims
  et~al\mbox{.}}{2017}]%
        {Joachims/etal/17a}
\bibfield{author}{\bibinfo{person}{T. Joachims}, \bibinfo{person}{A.
  Swaminathan}, {and} \bibinfo{person}{T. Schnabel}.}
  \bibinfo{year}{2017}\natexlab{}.
\newblock \showarticletitle{Unbiased Learning-to-Rank with Biased Feedback}. In
  \bibinfo{booktitle}{\emph{ACM Conference on Web Search and Data Mining
  (WSDM)}}. \bibinfo{pages}{781--789}.
\newblock


\bibitem[\protect\citeauthoryear{Johnson, Dulmage, and Mendelsohn}{Johnson
  et~al\mbox{.}}{1960}]%
        {johnson1960algorithm}
\bibfield{author}{\bibinfo{person}{Diane~M Johnson}, \bibinfo{person}{AL
  Dulmage}, {and} \bibinfo{person}{NS Mendelsohn}.}
  \bibinfo{year}{1960}\natexlab{}.
\newblock \showarticletitle{On an algorithm of G. Birkhoff concerning doubly
  stochastic matrices}.
\newblock \bibinfo{journal}{\emph{Canad. Math. Bull.}} \bibinfo{volume}{3},
  \bibinfo{number}{3} (\bibinfo{year}{1960}), \bibinfo{pages}{237--242}.
\newblock


\bibitem[\protect\citeauthoryear{Kallus and Zhou}{Kallus and Zhou}{2019}]%
        {kallus2019fairness}
\bibfield{author}{\bibinfo{person}{Nathan Kallus} {and} \bibinfo{person}{Angela
  Zhou}.} \bibinfo{year}{2019}\natexlab{}.
\newblock \showarticletitle{The fairness of risk scores beyond classification:
  Bipartite ranking and the xauc metric}. In \bibinfo{booktitle}{\emph{Advances
  in Neural Information Processing Systems}}. \bibinfo{pages}{3438--3448}.
\newblock


\bibitem[\protect\citeauthoryear{Kay, Matuszek, and Munson}{Kay
  et~al\mbox{.}}{2015}]%
        {kay2015unequal}
\bibfield{author}{\bibinfo{person}{Matthew Kay}, \bibinfo{person}{Cynthia
  Matuszek}, {and} \bibinfo{person}{Sean~A Munson}.}
  \bibinfo{year}{2015}\natexlab{}.
\newblock \showarticletitle{Unequal representation and gender stereotypes in
  image search results for occupations}. In \bibinfo{booktitle}{\emph{ACM
  Conference on Human Factors in Computing Systems}}.
\newblock


\bibitem[\protect\citeauthoryear{Klaus and Klijn}{Klaus and Klijn}{2006}]%
        {klaus2006procedurally}
\bibfield{author}{\bibinfo{person}{Bettina Klaus} {and} \bibinfo{person}{Flip
  Klijn}.} \bibinfo{year}{2006}\natexlab{}.
\newblock \showarticletitle{Procedurally fair and stable matching}.
\newblock \bibinfo{journal}{\emph{Economic Theory}} \bibinfo{volume}{27},
  \bibinfo{number}{2} (\bibinfo{year}{2006}), \bibinfo{pages}{431--447}.
\newblock


\bibitem[\protect\citeauthoryear{Kleinberg and Tardos}{Kleinberg and
  Tardos}{2006}]%
        {kleinberg2006algorithm}
\bibfield{author}{\bibinfo{person}{Jon Kleinberg} {and} \bibinfo{person}{Eva
  Tardos}.} \bibinfo{year}{2006}\natexlab{}.
\newblock \bibinfo{booktitle}{\emph{Algorithm design}}.
\newblock \bibinfo{publisher}{Pearson Education India}.
\newblock


\bibitem[\protect\citeauthoryear{Kleinberg, Mullainathan, and
  Raghavan}{Kleinberg et~al\mbox{.}}{2017}]%
        {KleinbergMR17}
\bibfield{author}{\bibinfo{person}{Jon~M. Kleinberg}, \bibinfo{person}{Sendhil
  Mullainathan}, {and} \bibinfo{person}{Manish Raghavan}.}
  \bibinfo{year}{2017}\natexlab{}.
\newblock \showarticletitle{Inherent Trade-Offs in the Fair Determination of
  Risk Scores}. In \bibinfo{booktitle}{\emph{8th Innovations in Theoretical
  Computer Science Conference, {ITCS}}} \emph{(\bibinfo{series}{LIPIcs},
  Vol.~\bibinfo{volume}{67})}. \bibinfo{pages}{43:1--43:23}.
\newblock


\bibitem[\protect\citeauthoryear{Krause and Golovin}{Krause and
  Golovin}{2014}]%
        {krause2014submodular}
\bibfield{author}{\bibinfo{person}{Andreas Krause} {and}
  \bibinfo{person}{Daniel Golovin}.} \bibinfo{year}{2014}\natexlab{}.
\newblock \bibinfo{title}{Submodular function maximization.}
\newblock
\newblock


\bibitem[\protect\citeauthoryear{Kulesza and Taskar}{Kulesza and
  Taskar}{2012}]%
        {kulesza2012determinantal}
\bibfield{author}{\bibinfo{person}{Alex Kulesza} {and} \bibinfo{person}{Ben
  Taskar}.} \bibinfo{year}{2012}\natexlab{}.
\newblock \showarticletitle{Determinantal point processes for machine
  learning}.
\newblock \bibinfo{journal}{\emph{arXiv preprint arXiv:1207.6083}}
  (\bibinfo{year}{2012}).
\newblock


\bibitem[\protect\citeauthoryear{Masarani and Gokturk}{Masarani and
  Gokturk}{1989}]%
        {masarani1989existence}
\bibfield{author}{\bibinfo{person}{F Masarani} {and} \bibinfo{person}{Sadik~S
  Gokturk}.} \bibinfo{year}{1989}\natexlab{}.
\newblock \showarticletitle{On the existence of fair matching algorithms}.
\newblock \bibinfo{journal}{\emph{Theory and Decision}} \bibinfo{volume}{26},
  \bibinfo{number}{3} (\bibinfo{year}{1989}), \bibinfo{pages}{305--322}.
\newblock


\bibitem[\protect\citeauthoryear{Mehrotra, McInerney, Bouchard, Lalmas, and
  Diaz}{Mehrotra et~al\mbox{.}}{2018}]%
        {mehrotra2018towards}
\bibfield{author}{\bibinfo{person}{Rishabh Mehrotra}, \bibinfo{person}{James
  McInerney}, \bibinfo{person}{Hugues Bouchard}, \bibinfo{person}{Mounia
  Lalmas}, {and} \bibinfo{person}{Fernando Diaz}.}
  \bibinfo{year}{2018}\natexlab{}.
\newblock \showarticletitle{Towards a fair marketplace: Counterfactual
  evaluation of the trade-off between relevance, fairness \& satisfaction in
  recommendation systems}. In \bibinfo{booktitle}{\emph{acm international
  conference on information and knowledge management}}.
\newblock


\bibitem[\protect\citeauthoryear{Morik, Singh, Hong, and Joachims}{Morik
  et~al\mbox{.}}{2020}]%
        {morik2020controlling}
\bibfield{author}{\bibinfo{person}{Marco Morik}, \bibinfo{person}{Ashudeep
  Singh}, \bibinfo{person}{Jessica Hong}, {and} \bibinfo{person}{Thorsten
  Joachims}.} \bibinfo{year}{2020}\natexlab{}.
\newblock \showarticletitle{Controlling Fairness and Bias in Dynamic
  Learning-to-Rank}. In \bibinfo{booktitle}{\emph{ACM Conference on Research
  and Development in Information Retrieval (SIGIR)}}.
\newblock


\bibitem[\protect\citeauthoryear{Narasimhan, Cotter, Gupta, and
  Wang}{Narasimhan et~al\mbox{.}}{2020}]%
        {narasimhan2020pairwise}
\bibfield{author}{\bibinfo{person}{Harikrishna Narasimhan},
  \bibinfo{person}{Andrew Cotter}, \bibinfo{person}{Maya~R Gupta}, {and}
  \bibinfo{person}{Serena Wang}.} \bibinfo{year}{2020}\natexlab{}.
\newblock \showarticletitle{Pairwise Fairness for Ranking and Regression.}. In
  \bibinfo{booktitle}{\emph{AAAI}}. \bibinfo{pages}{5248--5255}.
\newblock


\bibitem[\protect\citeauthoryear{Noble}{Noble}{2018}]%
        {noble2018algorithms}
\bibfield{author}{\bibinfo{person}{Safiya~Umoja Noble}.}
  \bibinfo{year}{2018}\natexlab{}.
\newblock \bibinfo{booktitle}{\emph{Algorithms of oppression: How search
  engines reinforce racism}}.
\newblock \bibinfo{publisher}{nyu Press}.
\newblock


\bibitem[\protect\citeauthoryear{Patro, Biswas, Ganguly, Gummadi, and
  Chakraborty}{Patro et~al\mbox{.}}{2020a}]%
        {patro2020fairrec}
\bibfield{author}{\bibinfo{person}{Gourab~K Patro}, \bibinfo{person}{Arpita
  Biswas}, \bibinfo{person}{Niloy Ganguly}, \bibinfo{person}{Krishna~P
  Gummadi}, {and} \bibinfo{person}{Abhijnan Chakraborty}.}
  \bibinfo{year}{2020}\natexlab{a}.
\newblock \showarticletitle{FairRec: Two-Sided Fairness for Personalized
  Recommendations in Two-Sided Platforms}. In \bibinfo{booktitle}{\emph{Web
  Conference}}. \bibinfo{pages}{1194--1204}.
\newblock


\bibitem[\protect\citeauthoryear{Patro, Chakraborty, Ganguly, and
  Gummadi}{Patro et~al\mbox{.}}{2020b}]%
        {patro2020incremental}
\bibfield{author}{\bibinfo{person}{Gourab~K Patro}, \bibinfo{person}{Abhijnan
  Chakraborty}, \bibinfo{person}{Niloy Ganguly}, {and} \bibinfo{person}{Krishna
  Gummadi}.} \bibinfo{year}{2020}\natexlab{b}.
\newblock \showarticletitle{Incremental Fairness in Two-Sided Market Platforms:
  On Smoothly Updating Recommendations}. In \bibinfo{booktitle}{\emph{AAAI
  Conference on Artificial Intelligence}}.
\newblock


\bibitem[\protect\citeauthoryear{Pizzato, Rej, Chung, Koprinska, and
  Kay}{Pizzato et~al\mbox{.}}{2010}]%
        {DBLP:conf/recsys/PizzatoRCKK10}
\bibfield{author}{\bibinfo{person}{Luiz Augusto~Sangoi Pizzato},
  \bibinfo{person}{Tomek Rej}, \bibinfo{person}{Thomas Chung},
  \bibinfo{person}{Irena Koprinska}, {and} \bibinfo{person}{Judy Kay}.}
  \bibinfo{year}{2010}\natexlab{}.
\newblock \showarticletitle{{RECON:} a reciprocal recommender for online
  dating}. In \bibinfo{booktitle}{\emph{ACM Conference on Recommender
  Systems(RecSys)}}.
\newblock


\bibitem[\protect\citeauthoryear{Procaccia}{Procaccia}{2013}]%
        {procaccia2013cake}
\bibfield{author}{\bibinfo{person}{Ariel~D Procaccia}.}
  \bibinfo{year}{2013}\natexlab{}.
\newblock \showarticletitle{Cake cutting: not just child's play}.
\newblock \bibinfo{journal}{\emph{Commun. ACM}} \bibinfo{volume}{56},
  \bibinfo{number}{7} (\bibinfo{year}{2013}), \bibinfo{pages}{78--87}.
\newblock


\bibitem[\protect\citeauthoryear{Radlinski, Bennett, Carterette, and
  Joachims}{Radlinski et~al\mbox{.}}{2009}]%
        {radlinski2009redundancy}
\bibfield{author}{\bibinfo{person}{Filip Radlinski}, \bibinfo{person}{Paul~N
  Bennett}, \bibinfo{person}{Ben Carterette}, {and} \bibinfo{person}{Thorsten
  Joachims}.} \bibinfo{year}{2009}\natexlab{}.
\newblock \showarticletitle{Redundancy, diversity and interdependent document
  relevance}. In \bibinfo{booktitle}{\emph{ACM SIGIR Forum}},
  Vol.~\bibinfo{volume}{43}. ACM New York, NY, USA, \bibinfo{pages}{46--52}.
\newblock


\bibitem[\protect\citeauthoryear{Radlinski, Kleinberg, and Joachims}{Radlinski
  et~al\mbox{.}}{2008}]%
        {radlinski2008learning}
\bibfield{author}{\bibinfo{person}{Filip Radlinski}, \bibinfo{person}{Robert
  Kleinberg}, {and} \bibinfo{person}{Thorsten Joachims}.}
  \bibinfo{year}{2008}\natexlab{}.
\newblock \showarticletitle{Learning diverse rankings with multi-armed
  bandits}. In \bibinfo{booktitle}{\emph{Proceedings of the 25th international
  conference on Machine learning}}. \bibinfo{pages}{784--791}.
\newblock


\bibitem[\protect\citeauthoryear{Rastegarpanah, Gummadi, and
  Crovella}{Rastegarpanah et~al\mbox{.}}{2019}]%
        {rastegarpanah2019fighting}
\bibfield{author}{\bibinfo{person}{Bashir Rastegarpanah},
  \bibinfo{person}{Krishna~P Gummadi}, {and} \bibinfo{person}{Mark Crovella}.}
  \bibinfo{year}{2019}\natexlab{}.
\newblock \showarticletitle{Fighting fire with fire: Using antidote data to
  improve polarization and fairness of recommender systems}. In
  \bibinfo{booktitle}{\emph{International Conference on Web Search and Data
  Mining}}.
\newblock


\bibitem[\protect\citeauthoryear{Robertson}{Robertson}{1977}]%
        {robertson1977probability}
\bibfield{author}{\bibinfo{person}{Stephen~E Robertson}.}
  \bibinfo{year}{1977}\natexlab{}.
\newblock \showarticletitle{The probability ranking principle in IR}.
\newblock \bibinfo{journal}{\emph{Journal of documentation}}
  (\bibinfo{year}{1977}).
\newblock


\bibitem[\protect\citeauthoryear{Rochet and Tirole}{Rochet and Tirole}{2003}]%
        {rochet2003platform}
\bibfield{author}{\bibinfo{person}{Jean-Charles Rochet} {and}
  \bibinfo{person}{Jean Tirole}.} \bibinfo{year}{2003}\natexlab{}.
\newblock \showarticletitle{Platform competition in two-sided markets}.
\newblock \bibinfo{journal}{\emph{Journal of the european economic
  association}} \bibinfo{volume}{1}, \bibinfo{number}{4}
  (\bibinfo{year}{2003}), \bibinfo{pages}{990--1029}.
\newblock


\bibitem[\protect\citeauthoryear{Sapiezynski, Zeng, E~Robertson, Mislove, and
  Wilson}{Sapiezynski et~al\mbox{.}}{2019}]%
        {sapiezynski2019quantifying}
\bibfield{author}{\bibinfo{person}{Piotr Sapiezynski}, \bibinfo{person}{Wesley
  Zeng}, \bibinfo{person}{Ronald E~Robertson}, \bibinfo{person}{Alan Mislove},
  {and} \bibinfo{person}{Christo Wilson}.} \bibinfo{year}{2019}\natexlab{}.
\newblock \showarticletitle{Quantifying the Impact of User Attentionon Fair
  Group Representation in Ranked Lists}. In \bibinfo{booktitle}{\emph{World
  Wide Web Conference}}.
\newblock


\bibitem[\protect\citeauthoryear{Scott}{Scott}{2017}]%
        {MarkScott17}
\bibfield{author}{\bibinfo{person}{Mark Scott}.}
  \bibinfo{year}{2017}\natexlab{}.
\newblock \showarticletitle{Google Fined Record \$2.7 Billion in E.U. Antitrust
  Ruling}.
\newblock \bibinfo{journal}{\emph{New York Times}} (\bibinfo{year}{2017}).
\newblock


\bibitem[\protect\citeauthoryear{Singh and Joachims}{Singh and
  Joachims}{2018}]%
        {Singh/Joachims/18a}
\bibfield{author}{\bibinfo{person}{A. Singh} {and} \bibinfo{person}{T.
  Joachims}.} \bibinfo{year}{2018}\natexlab{}.
\newblock \showarticletitle{Fairness of Exposure in Rankings}. In
  \bibinfo{booktitle}{\emph{ACM SIGKDD International Conference on Knowledge
  Discovery and Data Mining (KDD)}}.
\newblock


\bibitem[\protect\citeauthoryear{Singh and Joachims}{Singh and
  Joachims}{2019}]%
        {Singh/Joachims/19a}
\bibfield{author}{\bibinfo{person}{A. Singh} {and} \bibinfo{person}{T.
  Joachims}.} \bibinfo{year}{2019}\natexlab{}.
\newblock \showarticletitle{Policy Learning for Fairness in Rankings}. In
  \bibinfo{booktitle}{\emph{Neural Information Processing Systems (NeurIPS)}}.
\newblock


\bibitem[\protect\citeauthoryear{Stanton, Ananthram, Su, and Hong}{Stanton
  et~al\mbox{.}}{2019}]%
        {stanton2019revenue}
\bibfield{author}{\bibinfo{person}{Andrew Stanton}, \bibinfo{person}{Akhila
  Ananthram}, \bibinfo{person}{Congzhe Su}, {and} \bibinfo{person}{Liangjie
  Hong}.} \bibinfo{year}{2019}\natexlab{}.
\newblock \showarticletitle{Revenue, Relevance, Arbitrage and More: Joint
  Optimization Framework for Search Experiences in Two-Sided Marketplaces}.
\newblock \bibinfo{journal}{\emph{arXiv preprint arXiv:1905.06452}}
  (\bibinfo{year}{2019}).
\newblock


\bibitem[\protect\citeauthoryear{Steihaus}{Steihaus}{1948}]%
        {steihaus1948problem}
\bibfield{author}{\bibinfo{person}{H Steihaus}.}
  \bibinfo{year}{1948}\natexlab{}.
\newblock \showarticletitle{The problem of fair division}.
\newblock \bibinfo{journal}{\emph{Econometrica}}  \bibinfo{volume}{16}
  (\bibinfo{year}{1948}), \bibinfo{pages}{101--104}.
\newblock


\bibitem[\protect\citeauthoryear{Stoyanovich, Yang, and Jagadish}{Stoyanovich
  et~al\mbox{.}}{2018}]%
        {stoyanovich2018online}
\bibfield{author}{\bibinfo{person}{Julia Stoyanovich}, \bibinfo{person}{Ke
  Yang}, {and} \bibinfo{person}{HV Jagadish}.} \bibinfo{year}{2018}\natexlab{}.
\newblock \showarticletitle{Online set selection with fairness and diversity
  constraints}. In \bibinfo{booktitle}{\emph{Proceedings of the EDBT
  Conference}}.
\newblock


\bibitem[\protect\citeauthoryear{S{\"u}hr, Biega, Zehlike, Gummadi, and
  Chakraborty}{S{\"u}hr et~al\mbox{.}}{2019}]%
        {suhr2019two}
\bibfield{author}{\bibinfo{person}{Tom S{\"u}hr}, \bibinfo{person}{Asia~J
  Biega}, \bibinfo{person}{Meike Zehlike}, \bibinfo{person}{Krishna~P Gummadi},
  {and} \bibinfo{person}{Abhijnan Chakraborty}.}
  \bibinfo{year}{2019}\natexlab{}.
\newblock \showarticletitle{Two-sided fairness for repeated matchings in
  two-sided markets: A case study of a ride-hailing platform}. In
  \bibinfo{booktitle}{\emph{ACM SIGKDD International Conference on Knowledge
  Discovery \& Data Mining}}. \bibinfo{pages}{3082--3092}.
\newblock


\bibitem[\protect\citeauthoryear{Tabibian, Gomez, De, Sch{\"o}lkopf, and
  Rodriguez}{Tabibian et~al\mbox{.}}{2020}]%
        {tabibiandesign}
\bibfield{author}{\bibinfo{person}{Behzad Tabibian},
  \bibinfo{person}{Vicen{\c{c}} Gomez}, \bibinfo{person}{Abir De},
  \bibinfo{person}{Bernhard Sch{\"o}lkopf}, {and} \bibinfo{person}{Manuel~Gomez
  Rodriguez}.} \bibinfo{year}{2020}\natexlab{}.
\newblock \bibinfo{title}{On the Design of Consequential Ranking Algorithms}.
\newblock
\newblock


\bibitem[\protect\citeauthoryear{Vondr{\'a}k}{Vondr{\'a}k}{2008}]%
        {vondrak2008optimal}
\bibfield{author}{\bibinfo{person}{Jan Vondr{\'a}k}.}
  \bibinfo{year}{2008}\natexlab{}.
\newblock \showarticletitle{Optimal approximation for the submodular welfare
  problem in the value oracle model}. In \bibinfo{booktitle}{\emph{Proceedings
  of the fortieth annual ACM symposium on Theory of computing}}.
  \bibinfo{pages}{67--74}.
\newblock


\bibitem[\protect\citeauthoryear{Vosoughi, Roy, and Aral}{Vosoughi
  et~al\mbox{.}}{2018}]%
        {vosoughi2018spread}
\bibfield{author}{\bibinfo{person}{Soroush Vosoughi}, \bibinfo{person}{Deb
  Roy}, {and} \bibinfo{person}{Sinan Aral}.} \bibinfo{year}{2018}\natexlab{}.
\newblock \showarticletitle{The spread of true and false news online}.
\newblock \bibinfo{journal}{\emph{Science}} \bibinfo{volume}{359},
  \bibinfo{number}{6380} (\bibinfo{year}{2018}), \bibinfo{pages}{1146--1151}.
\newblock


\bibitem[\protect\citeauthoryear{Wang, Bai, Sun, and Joachims}{Wang
  et~al\mbox{.}}{2021}]%
        {Wang/etal/21a}
\bibfield{author}{\bibinfo{person}{Lequn Wang}, \bibinfo{person}{Yiwei Bai},
  \bibinfo{person}{Wen Sun}, {and} \bibinfo{person}{T. Joachims}.}
  \bibinfo{year}{2021}\natexlab{}.
\newblock \showarticletitle{Fainress of Exposure in Stochstic Bandits}. In
  \bibinfo{booktitle}{\emph{International Conference on Machine Learning
  (ICML)}}.
\newblock


\bibitem[\protect\citeauthoryear{Xiao, Min, Yongfeng, Zhaoquan, Yiqun, and
  Shaoping}{Xiao et~al\mbox{.}}{2017}]%
        {xiao2017fairness}
\bibfield{author}{\bibinfo{person}{Lin Xiao}, \bibinfo{person}{Zhang Min},
  \bibinfo{person}{Zhang Yongfeng}, \bibinfo{person}{Gu Zhaoquan},
  \bibinfo{person}{Liu Yiqun}, {and} \bibinfo{person}{Ma Shaoping}.}
  \bibinfo{year}{2017}\natexlab{}.
\newblock \showarticletitle{Fairness-aware group recommendation with
  pareto-efficiency}. In \bibinfo{booktitle}{\emph{Proceedings of the Eleventh
  ACM Conference on Recommender Systems}}. \bibinfo{pages}{107--115}.
\newblock


\bibitem[\protect\citeauthoryear{Yadav, Du, and Joachims}{Yadav
  et~al\mbox{.}}{2019}]%
        {yadav2019fair}
\bibfield{author}{\bibinfo{person}{Himank Yadav}, \bibinfo{person}{Zhengxiao
  Du}, {and} \bibinfo{person}{Thorsten Joachims}.}
  \bibinfo{year}{2019}\natexlab{}.
\newblock \showarticletitle{Fair learning-to-rank from implicit feedback}.
\newblock \bibinfo{journal}{\emph{arXiv preprint arXiv:1911.08054}}
  (\bibinfo{year}{2019}).
\newblock


\bibitem[\protect\citeauthoryear{Yang and Stoyanovich}{Yang and
  Stoyanovich}{2017}]%
        {yang2017measuring}
\bibfield{author}{\bibinfo{person}{Ke Yang} {and} \bibinfo{person}{Julia
  Stoyanovich}.} \bibinfo{year}{2017}\natexlab{}.
\newblock \showarticletitle{Measuring fairness in ranked outputs}. In
  \bibinfo{booktitle}{\emph{Proceedings of the 29th International Conference on
  Scientific and Statistical Database Management}}. \bibinfo{pages}{1--6}.
\newblock


\bibitem[\protect\citeauthoryear{Yang, Cui, Xuan, Wang, Belongie, and
  Estrin}{Yang et~al\mbox{.}}{2018}]%
        {yang2018unbiased}
\bibfield{author}{\bibinfo{person}{Longqi Yang}, \bibinfo{person}{Yin Cui},
  \bibinfo{person}{Yuan Xuan}, \bibinfo{person}{Chenyang Wang},
  \bibinfo{person}{Serge Belongie}, {and} \bibinfo{person}{Deborah Estrin}.}
  \bibinfo{year}{2018}\natexlab{}.
\newblock \showarticletitle{Unbiased offline recommender evaluation for
  missing-not-at-random implicit feedback}. In \bibinfo{booktitle}{\emph{ACM
  Conference on Recommender Systems}}.
\newblock


\bibitem[\protect\citeauthoryear{Yao and Huang}{Yao and Huang}{2017}]%
        {yao2017beyond}
\bibfield{author}{\bibinfo{person}{Sirui Yao} {and} \bibinfo{person}{Bert
  Huang}.} \bibinfo{year}{2017}\natexlab{}.
\newblock \showarticletitle{Beyond parity: Fairness objectives for
  collaborative filtering}. In \bibinfo{booktitle}{\emph{Advances in Neural
  Information Processing Systems}}. \bibinfo{pages}{2921--2930}.
\newblock


\bibitem[\protect\citeauthoryear{Yue and Guestrin}{Yue and Guestrin}{2011}]%
        {yue2011linear}
\bibfield{author}{\bibinfo{person}{Yisong Yue} {and} \bibinfo{person}{Carlos
  Guestrin}.} \bibinfo{year}{2011}\natexlab{}.
\newblock \showarticletitle{Linear submodular bandits and their application to
  diversified retrieval}. In \bibinfo{booktitle}{\emph{Advances in Neural
  Information Processing Systems}}. \bibinfo{pages}{2483--2491}.
\newblock


\bibitem[\protect\citeauthoryear{Zehlike, Bonchi, Castillo, Hajian, Megahed,
  and Baeza-Yates}{Zehlike et~al\mbox{.}}{2017}]%
        {zehlike2017fa}
\bibfield{author}{\bibinfo{person}{Meike Zehlike}, \bibinfo{person}{Francesco
  Bonchi}, \bibinfo{person}{Carlos Castillo}, \bibinfo{person}{Sara Hajian},
  \bibinfo{person}{Mohamed Megahed}, {and} \bibinfo{person}{Ricardo
  Baeza-Yates}.} \bibinfo{year}{2017}\natexlab{}.
\newblock \showarticletitle{Fa* ir: A fair top-k ranking algorithm}. In
  \bibinfo{booktitle}{\emph{ACM Conference on Information and Knowledge
  Management}}.
\newblock


\bibitem[\protect\citeauthoryear{Zehlike and Castillo}{Zehlike and
  Castillo}{2020}]%
        {zehlike2020reducing}
\bibfield{author}{\bibinfo{person}{Meike Zehlike} {and} \bibinfo{person}{Carlos
  Castillo}.} \bibinfo{year}{2020}\natexlab{}.
\newblock \showarticletitle{Reducing disparate exposure in ranking: A learning
  to rank approach}. In \bibinfo{booktitle}{\emph{Proceedings of The Web
  Conference 2020}}.
\newblock


\bibitem[\protect\citeauthoryear{Zhai, Cohen, and Lafferty}{Zhai
  et~al\mbox{.}}{2015}]%
        {zhai2015beyond}
\bibfield{author}{\bibinfo{person}{ChengXiang Zhai}, \bibinfo{person}{William~W
  Cohen}, {and} \bibinfo{person}{John Lafferty}.}
  \bibinfo{year}{2015}\natexlab{}.
\newblock \showarticletitle{Beyond independent relevance: methods and
  evaluation metrics for subtopic retrieval}. In \bibinfo{booktitle}{\emph{ACM
  SIGIR Forum}}, Vol.~\bibinfo{volume}{49}. ACM New York, NY, USA,
  \bibinfo{pages}{2--9}.
\newblock


\end{thebibliography}

%%
%% If your work has an appendix, this is the place to put it.

\appendix
\onecolumn
\section{Greedy Birkhoff-von Neumann Decomposition Algorithm to Sample Diverse Rankings}
\label{sec:TSDF_second_step}

In addition to the local search strategy in Algorithm~\ref{alg:local_search_match} for finding a permutation from the marginal rank probability matrix that maximizes diversity, we also design an exhaustive search strategy illustrated in Algorithm~\ref{alg:exhaustive_search_match}. The algorithm exhaustively search all the possible allocations of the top $l$ positions and allocate the other positions arbitrarily. The algorithm selects the one perfect matching with the largest diversity. 

We show the comparison of different strategies under the default setup described in Section~\ref{sec:exp} on the movie dataset in \autoref{tab:exp_bvn}.
ES denotes exhaustive search, LSI denotes local search with utility maximizing ranking initialization and LSNI denotes local search without utility maximizing ranking initialization. The diversity upper bound might not be achieved by any ranking policy since we are constrained to sampling rankings from the marginal rank probability matrix. Nevertheless, all algorithms still achieve diversity very close to the diversity upper bound. There is little difference between different strategies. 
% We adopted local search with initialization in all the other experiments. 

\begin{table}[h]
\caption{Comparison of different BvN decomposition algorithms}
\label{tab:exp_bvn}
\scalebox{1.0}{
\begin{tabular}{l|llllll|l}
 &ES level 0 (Random)  & ES level 1 & ES level 2  & ES level 3 & LSI & LSNI & Upper Bound \\
 \hline
Diversity & 1.04421 & 1.04430 & 1.04430 & \textbf{1.04431} & 1.04430 & 1.04415 & 1.05531 \\
\end{tabular}
}
\end{table}

\begin{algorithm}[]
\SetAlgoLined
\SetKwInOut{Input}{input}\SetKwInOut{Output}{output}
 \Input{
 A ranking problem $RP$, 
 A diversity function $g$, 
 A bipartite graph $G$, 
 The exhaustive search level $l$. 
 }
 \Output{
A ranking $\sigma$
}
{\bf initialization:} {$\sigma^*=None$, $D^*=-\infty$, $N = |\mathbbm{D}^q|$\newline}
\eIf{$l = 0$}{
\Return {an arbitrary perfect matching $\sigma$ of $G$.}}
{
Construct a sub-graph $SG$ of $G$ with all the items and the positions in $\{1,2,...,l\}$ as vertices and the edges between them.\newline
Find all maximum matchings $M_{SG}$ of $SG$.\newline
\For{each matching $M$ in $M_{SG}$ }{
Construct a sub-graph $SG'$ of $G$ by deleting vertices in $M$ from $G$.\newline
Find a maximum matching $M'$ of $SG'$.\newline
\If{the number of edges $|M'|$ in $M$ is $N-l$ }{
Construct a ranking $\sigma$ by combining $M$ and $M'$. \newline
\If{$D(\sigma|q)$ > $D^*$}{
$D^*,\sigma^* =D(\sigma|q),\sigma$}
}
}
\Return {$\sigma^*$}
}
\caption{Exhaustive-Search-Match(RP,g, G,l)}
\label{alg:exhaustive_search_match}
\end{algorithm}

\section{Greedy Approximation Algorithm for Submodular Optimization of Diversity}
\label{sec:submodular_diversity}
A ranking that maximizes diversity can constitute a deterministic ranking policy that maximizes the expected diversity. The problem of finding a ranking that maximizes diversity can be formulated as follows.  Denote the set of all positions as $\mathbbm{L} = \{1, 2, 3,..., N\}$ and all assignments of one item to a position as $\mathbbm{A} = \mathbbm{D}\times\mathbbm{L}$. The diversity optimization problem is
\begin{equation}
\begin{split}
    \mbox{argmax}_{\sigma\subseteq\mathbbm{A}}\ & D(\sigma|q)\\
    \mbox{s. t. } & \mbox{Each product shows up at most once in $\sigma$.}\\
    & \mbox{Each position shows up at most once in $\sigma$.}\\
\end{split}
\end{equation}

$D(\cdot|q):2^{\mathbbm{A}}\rightarrow \mathbbm{R}$ is monotone submodular. The two constraints are two partition matroid constraints. The problem is generally NP-hard but a simple greedy approximation algorithm enjoys $\frac{1}{3}D^{*}$ performance guarantee and works much better than that in practice~\cite{krause2014submodular}. The gredy algorithm executes as follows: starting with the empty set of $\sigma_0=\{\}$, the greedy algorithm greedily selects an item-position pair in $\mathbbm{A}$ that does not violate the two constraints but maximizes diversity 
\begin{equation}
    \sigma_m:= \sigma_{m-1}\cup\{\argmax_{a\in\mathbbm{A},\sigma_{m-1}\cup\{a\} \mbox{satisfies the two constraints}}[D(\sigma_{m-1}\cup\{a\}|q)]\}.
\end{equation}

The resulting assignment set $\sigma_{|\mathbbm{D}^q|}$ constitutes a ranking that has diversity at least $\frac{1}{3}D^{*}$.

\section{Theorems}
\label{sec:theorems}

We provide the proofs of all the theorems in the main paper in this section. For clarity, we present some related definitions here. 

\begin{definition}{(Zero/non-zero utility under a ranking)}
Denote $G$ as an item group, a user group or an intent. For a non-degenerate ranking problem, $G$ has non-zero utility under ranking $\sigma$ if $U(\sigma|G,q)>0$ and has zero utility under ranking policy $\sigma$ if $U(\sigma|G,q)=0$.
\end{definition}

\begin{definition}{(Zero/non-zero utility under a ranking policy)}
Denote $G$ as an item group, a user group or an intent. For a non-degenerate ranking problem, $G$ has non-zero utility under ranking policy $\pi$ if $U(\pi|G,q)>0$ and has zero utility under ranking policy $\pi$ if $U(\pi|G,q)=0$.
\end{definition}

\begin{definition}{(Non-zero optimal utility under a ranking policy)}
Denote $G$ as an item group, a user group or an intent. For a non-degenerate ranking problem, $G$ has non-zero optimal utility if there exists a ranking policy $\pi$ such that $G$ has non-zero utility under $\pi$. 
\end{definition}

\begin{definition}{(Satisifiable Disparate treatment constraints)}
For a non-degenerate ranking problem and a merit function $M$, we say the disparate treatment constraints with merit function $M$ is satisfiable for the ranking problem if there exists a ranking policy $\pi$ that satisfies the disparate treatment constraints for the ranking problem. 
\end{definition}

As~\cite{Singh/Joachims/19a} pointed out, there exists degenerate cases where the disparate treatment constraints are not satisfiable for a ranking problem.

\subsection{Maximizing Utility}
\subsubsection{Zero-utility Analysis of maximizing utility}

\begin{proof}[Proof of Theorem~\ref{theo:util_zero_user}]
\label{proof:util_zero_user}
Proof by construction. 
We first construct a non-degenerate ranking problem as illustrated in \autoref{fig:example}. $\mathbbm{I} = \{i_1, i_2, i_3\}$, $\mathbbm{UG} = \{UG_1, UG_2\}$, $\rho_{UG_1} = 0.5$ and $\rho_{UG_2} = 0.5$, $\mathcal{I}_{\mathcal{UG}_1}(i_1)= 0.6$, $\mathcal{I}_{\mathcal{UG}_1}(i_2)= 0.4$, $\mathcal{I}_{\mathcal{UG}_2}(i_3)= 1$,  $\mathcal{I}_{\mathcal{UG}_1}(i_3) = \mathcal{I}_{\mathcal{UG}_2}(i_1) = \mathcal{I}_{\mathcal{UG}_2}(i_2) = 0$,  $\mathbbm{UG} = \{DG_1, DG_2\}$ $d_{m*} = \{d_{m1}, d_{m,2}, d_{m,3}\}$ for $m\in\{1,2,3,4,5,6\}$.  $\mathbbm{D}^q= \cup_{m} d_{m*}$,  $DG_1 = d_{1*}\cup d_{2*}\cup d_{3*}$, $DG_2 = d_{4*}\cup d_{5*}\cup d_{6*}$.  $r(d_{1*}, i_{1}) = r(d_{2*}, i_{2}) = 0$. We denote with $r(d_{m*},i) = \cdot$ as all the items in $d_{m*}$ has relevance $\cdot$ to intent $i$. $r(d_{1*}, i_{1}) = r(d_{2*},i_{2}) = r(d_{3*}, i_{3}) = 1$, $r(d_{4*}, i_{1}) = r(d_{5*},i_{2}) = r(d_{6*}, i_{3}) = 0.9$. All the other item-intent pairs have zero relevance. The exposure of the first three positions have exposure $1$, i.e. $e(1) = e(2) = e(3) = 1$. All the other positions have exposure $0$. We will show that $UG_1$ has zero utility in any utility maximizing policy $\pi$ while it has non-zero optimal utility.

First, $UG_1$ has non-zero optimal utility since any ranking that allocates any one item in $d_{1*}$, $d_{2*}$, $d_{4*}$ or $d_{5*}$ in a non-zero exposure position will have non-zero utility for $UG_1$. 

Second, in any policy $\pi$ that maximizes utility, items in $d_{3*}$ take all the exposure and all the other items have zero exposure. Since items in $d_{3*}$ have the largest expected relevance $\mE_{i\sim\mathcal{I}_{\mathcal{U}}[r(d,i)]}$ than the other items, they occupy the $3$ non-zero exposure positions in any ranking sampled from $\pi$. If not, we can switch the items to get a ranking with larger utility which contradicts the assumption that $\pi$ maximizes utility. 

Since only items in $d_{1*}$, $d_{2*}$, $d_{4*}$, $d_{5*}$  have non-zero relevance to $UG_1$ but they all get $0$ exposure, the utility for $UG_1$ is zero. 
\end{proof}

\begin{theorem}{(Maximizing overall utility can lead to $0$ utility for an item group.)}
\label{theo:util_zero_item}
There exists a non-degenerate ranking problem such that the policy $\pi$ maximizing the overall utility has utility $U(\pi|DG,q) = 0$ for an item group $DG$. 
\end{theorem}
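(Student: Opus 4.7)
The plan is to prove the claim by constructing an explicit non-degenerate ranking problem and verifying that any utility-maximizing policy forces some item group to receive zero utility. In fact, the same construction used in the proof of Theorem~\ref{theo:util_zero_user} (depicted in Figure~\ref{fig:example}) already witnesses the statement: take $DG_1 = d_{1*}\cup d_{2*}\cup d_{3*}$ and $DG_2 = d_{4*}\cup d_{5*}\cup d_{6*}$, with exposure vector $e(1)=e(2)=e(3)=1$ and $e(m)=0$ for $m>3$. The target is to show that any $\pi$ maximizing $U(\pi|q)$ satisfies $U(\pi|DG_2,q)=0$, while $DG_2$ contains items of positive expected relevance so the ranking problem is non-degenerate.

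First I would verify non-degeneracy, which is essentially inherited from the earlier proof: each user group has positive proportion, each intent has positive mass under the marginal intent distribution $\mathcal{I}_{\mathcal{U}^q}=\rho_{UG_1}\mathcal{I}_{\mathcal{UG}_1}+\rho_{UG_2}\mathcal{I}_{\mathcal{UG}_2}$, and $DG_2$ contains items of positive expected relevance (items in $d_{4*}$ are relevant to $i_1$, which carries positive marginal mass). Next I would compute $\mathbf{r}^{\mathcal{U}^q}_m$ for each item and observe that items in $d_{3*}$ attain a value strictly larger than every item outside $d_{3*}$ (in particular strictly larger than the maximum over $DG_2$, which is achieved by $d_{6*}$). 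This is the only piece of arithmetic required and it follows directly from the parameters of the construction.

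With strict dominance of $d_{3*}$ in hand, the core step is a standard pairwise exchange argument. Writing $U(\pi|q)=(\mathbf{r}^{\mathcal{U}^q})^\top\Sigma^{\pi,q}\mathbf{e}$, any policy that maximizes overall utility must, with probability one, place the three strictly dominant items of $d_{3*}$ into the three positions of nonzero exposure: any deterministic ranking that instead places some item $d\notin d_{3*}$ at a top-$3$ position can be strictly improved by swapping $d$ with an unplaced item from $d_{3*}$, so no such ranking can appear in the support of an optimal $\pi$. Consequently every item in $DG_2$ has zero marginal rank probability at positions with nonzero exposure, which gives $(\mathbf{l}^{DG_2})^\top\Sigma^{\pi,q}\mathbf{e}=0$ and, via Eq.~\eqref{eq:util_itemgroup}, $U(\pi|DG_2,q)=0$. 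The only (mild) obstacle is ensuring the dominance of $d_{3*}$ is strict rather than weak, since weak dominance would leave room for optimal policies that share exposure with $DG_2$; this is handled by the chosen numerical values in the construction.
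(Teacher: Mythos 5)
Your proposal is correct and follows essentially the same route as the paper: it reuses the construction from the proof of Theorem~\ref{theo:util_zero_user} and the same exchange argument showing that the strictly dominant items in $d_{3*}$ must occupy all three non-zero exposure positions, leaving $DG_2$ with zero exposure and hence zero utility. The additional explicit verification of non-degeneracy and strictness of dominance is sound but not a different method.
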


\begin{proof}
\label{proof:util_zero_item}
Proof by construction. We still use the example in \autoref{fig:example} that was detailed in the Proof of Theorem~\ref{theo:util_zero_user}. We will show that $DG_2$ has zero utility in any utility maximizing policy $\pi$ while it has non-zero optimal utility.

First, $DG_2$ has non-zero optimal utility since any ranking that allocates an item in $DG_2$ in a non-zero exposure position will have non-zero utility for $DG_2$.  
Second, as shown in the Proof of Theorem~\ref{theo:util_zero_user}, in any policy $\pi$ that maximizes utility, items in $d_{3*}$ take all the exposure and all the other items have zero exposure. Since items in $DG_2$ have zero exposure, the utility for $DG_2$ is zero. 
\end{proof}

\begin{theorem}{(Maximizing overall utility can not ensure each ranking sampled from the policy covers the maximum amount of intent covered by any ranking.)}
\label{theo:util_zero_max_intent}
There exists a non-degenerate ranking problem such that the amount of intent covered by any ranking sampled from the policy maximizing the overall utility is less than the maximum amount of intent covered by any ranking. 
\end{theorem}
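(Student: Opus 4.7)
The plan is to reuse the ranking instance already constructed in the proof of Theorem~\ref{theo:util_zero_user} (the problem depicted in Figure~\ref{fig:example}). All the heavy lifting about which rankings lie in the support of any utility-maximizing policy has already been carried out there, so the new statement should follow with only a small additional intent-coverage computation.

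First I would recall from the earlier proof that in this instance every ranking $\sigma$ in the support of any policy $\pi$ that maximizes $U(\pi|q)$ must place items from $d_{3*}$ in the three non-zero-exposure positions, because items in $d_{3*}$ strictly dominate all other items in expected relevance $\mathbbm{E}_{i\sim\mathcal{I}_{\mathcal{U}^q}}[r(d,i)]$; any ranking that put a non-$d_{3*}$ item at a non-zero exposure position could be improved by a swap, contradicting optimality. Next I would compute the aggregate intent distribution, $\mathcal{I}_{\mathcal{U}^q}(i_1) = 0.3$, $\mathcal{I}_{\mathcal{U}^q}(i_2) = 0.2$, and $\mathcal{I}_{\mathcal{U}^q}(i_3) = 0.5$, and observe that every item in $d_{3*}$ has positive relevance only to $i_3$. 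Therefore, for every ranking $\sigma$ sampled from any utility-maximizing $\pi$, only $i_3$ contributes, giving intent coverage
\begin{displaymath}
\sum_{i\in\{i\,:\,U(\sigma|i,q)>0\}}\mathcal{I}_{\mathcal{U}^q}(i)=\mathcal{I}_{\mathcal{U}^q}(i_3)=0.5.
\end{displaymath}

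Then I would exhibit an alternative ranking $\sigma^\star$ that places one item from $d_{1*}$, one from $d_{2*}$, and one from $d_{3*}$ in the three non-zero-exposure positions (and anything in the remaining zero-exposure positions). By construction of the instance, these three items have strictly positive relevance to $i_1$, $i_2$, $i_3$ respectively, so $U(\sigma^\star|i_k,q)>0$ for $k\in\{1,2,3\}$, and the intent covered by $\sigma^\star$ equals $0.3+0.2+0.5=1$. Thus the maximum intent covered by any ranking is at least $1 > 0.5$, strictly larger than the amount covered by any ranking in the support of a utility-maximizing policy, which establishes the claim.

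The argument is short because the structural work was already done in Theorem~\ref{theo:util_zero_user}; the only mild obstacle is being careful that the statement is about \emph{every} ranking sampled from $\pi$ (i.e.\ the entire support of $\pi$) rather than just some ranking, which is exactly what the strict-dominance swap argument delivers. No new construction is needed and no conditions on the form of a possibly-randomized utility maximizer are required beyond optimality.
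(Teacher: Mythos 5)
Your proposal is correct and follows essentially the same route as the paper's own proof: reuse the Figure~\ref{fig:example} instance, invoke the argument from Theorem~\ref{theo:util_zero_user} that any utility-maximizing policy places only $d_{3*}$ items in the three non-zero-exposure positions, and compare the resulting coverage of $\mathcal{I}_{\mathcal{U}^q}(i_3)=0.5$ against the coverage of $1$ achievable by a ranking mixing items from $d_{1*}$, $d_{2*}$, $d_{3*}$. The explicit computation of the aggregate intent masses is a nice touch but does not change the substance.
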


\begin{proof}
\label{proof:util_zero_max_intent}
Proof by construction. We still use the example in \autoref{fig:example} that was detailed in the Proof of Theorem~\ref{theo:util_zero_user}. The maximum amount of intent covered by any ranking is $1$ since we can select $3$ items relevant to different intents and put them in the $3$ non-zero exposure positions. We will show that the policy maximizing utility will only cover $i_3$. As shown in the Proof of Theorem~\ref{theo:util_zero_user}, in any policy $\pi$ that maximizes utility, items in $d_{3*}$ take all the exposure and all the other items have zero exposure. Since items relevant to $i_1$ and $i_2$ have zero exposure, $i_1$ and $i_2$ have zero utility. The amount of intent covered by the policy maximizing utility is only $0.5$ which is less than $1$. 
\end{proof}

\subsubsection{Efficiency Analysis of maximizing utility}
\begin{theorem}{(The policy maximizing the overall utility is Pareto efficient for the user groups.)}
\label{theo:util_efficiency_user}
For any non-degenerate ranking problem, if a ranking policy $\pi$ maximizes the overall utility, then $\pi$ is Pareto efficient for the user groups.  
\end{theorem}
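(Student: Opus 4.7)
The plan is to reduce Pareto efficiency of the overall-utility maximizer to a simple convex-combination argument, exploiting the linearity of $U(\pi|q)$ in the per-group utilities together with the non-degeneracy assumption $\rho_{UG}^q > 0$ for every $UG \in \mathbbm{UG}$.

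First I would establish the decomposition
\begin{displaymath}
U(\pi|q) = \sum_{UG \in \mathbbm{UG}} \rho_{UG}^q \, U(\pi|UG, q).
\end{displaymath}
This follows directly from the definition of the utility in Eq.~\eqref{eq:util_overall}, the fact (already noted in Section~\ref{sec:uf}) that $\mathcal{I}_{\mathcal{U}^q} = \sum_{UG} \rho_{UG}^q\, \mathcal{I}_{\mathcal{UG}^q}$, linearity of expectation, and the definition of $U(\pi|UG,q)$ in Eq.~\eqref{eq:util_usergroup}. So the overall utility is literally a positively weighted sum of the per-group utilities that define the Pareto dominance order.

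Then I would argue by contradiction. Suppose $\pi$ maximizes $U(\pi|q)$ but is not Pareto efficient for the user groups. By the definition of Pareto efficiency, there exists a ranking policy $\pi'$ that dominates $\pi$, i.e.\ $U(\pi'|UG,q) \geq U(\pi|UG,q)$ for all $UG \in \mathbbm{UG}$ with strict inequality for at least one $UG^* \in \mathbbm{UG}$. Since the ranking problem is non-degenerate we have $\rho_{UG}^q > 0$ for every $UG$, and in particular $\rho_{UG^*}^q > 0$. Plugging into the decomposition term by term then yields
\begin{displaymath}
U(\pi'|q) = \sum_{UG} \rho_{UG}^q\, U(\pi'|UG,q) > \sum_{UG} \rho_{UG}^q\, U(\pi|UG,q) = U(\pi|q),
\end{displaymath}
contradicting the assumption that $\pi$ maximizes overall utility. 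Hence $\pi$ must be Pareto efficient for the user groups.

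There is essentially no hard step here; the only subtlety worth double-checking is that the non-degeneracy assumption supplies strictly positive weights on every group, which is exactly what is needed to promote a single strict per-group improvement into a strict improvement in the weighted sum. The whole argument is a few lines once the linear decomposition of $U(\pi|q)$ is written down.
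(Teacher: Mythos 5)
Your proposal is correct and follows essentially the same route as the paper's proof: both decompose the overall utility as $U(\pi|q)=\sum_{UG}\rho_{UG}^q\,U(\pi|UG,q)$ and derive a contradiction from a dominating policy yielding a strictly larger weighted sum. Your explicit remark that non-degeneracy supplies $\rho_{UG}^q>0$ (needed to turn one strict per-group improvement into a strict improvement of the sum) is a detail the paper leaves implicit, but the argument is the same.
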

\begin{proof}
Proof by contradiction. Assume there exists a ranking policy $\pi$ such that $\pi$ maximizes the utility i.e. $\forall \pi'$ $U(\pi'|q)\leq U(\pi|q)$ and $\pi$ is not Pareto efficient for the user groups. By the definition of Pareto efficiency, there exists a ranking policy $\pi''$ such that $\pi''$ dominates $\pi$ for the user groups. By the definition of dominance,
\begin{displaymath}
    U(\pi''|q) - U(\pi|q) =\sum_{UG\in\mathbbm{UG}}\rho_{UG}^q(U(\pi''|UG,q)-U(\pi|UG,q))>0
\end{displaymath}
which contradicts the assumption that $\pi$ maximizes the utility. 
\end{proof}

\begin{theorem}{(Items are ranked by their expected relevance within each item group in the policy maximizing the overall utility.)}
\label{theo:util_efficiency_item}
For any non-degenerate ranking problem, if a ranking policy $\pi$ maximizes the overall utility, then items are ranked by their expected relevance to the whole user population within each item group under $\pi$.
\end{theorem}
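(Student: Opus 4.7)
The plan is to prove Theorem~\ref{theo:util_efficiency_item} by a pairwise exchange (swap) argument, closely mirroring the style of the contradiction proof already used for Theorem~\ref{theo:util_efficiency_user}. The key observation is that the overall utility decomposes additively over item-position pairs, so swapping two items inside the same ranking produces a clean algebraic difference that depends only on the two expected relevances and the two exposures involved.

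More concretely, I would assume for contradiction that $\pi$ maximizes $U(\pi|q)$ but that the conclusion fails: there exist a ranking $\sigma$ with $\pi(\sigma|q) > 0$, an item group $DG$, and items $d_m, d_n \in DG$ such that $\mE_{i\sim\mathcal{I}_{\mathcal{U}^q}}[r(d_m,i)] > \mE_{i\sim\mathcal{I}_{\mathcal{U}^q}}[r(d_n,i)]$ yet $e(\sigma(d_m)) < e(\sigma(d_n))$. Construct $\sigma'$ from $\sigma$ by exchanging the positions of $d_m$ and $d_n$ and leaving every other item fixed. Using the additive form $U(\sigma|q) = \sum_{d} e(\sigma(d))\, \mathbf{r}^{\mathcal{U}^q}_d$, all terms other than those for $d_m$ and $d_n$ cancel, and the utility gap simplifies to
\begin{equation*}
U(\sigma'|q) - U(\sigma|q) = \bigl(\mathbf{r}^{\mathcal{U}^q}_m - \mathbf{r}^{\mathcal{U}^q}_n\bigr)\bigl(e(\sigma(d_n)) - e(\sigma(d_m))\bigr) > 0,
\end{equation*}
since both factors are strictly positive by assumption.

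Then I would define a modified policy $\pi'$ that redistributes the probability mass $\pi(\sigma|q)$ from $\sigma$ onto $\sigma'$, leaving $\pi$ unchanged elsewhere (formally, $\pi'(\sigma'|q) = \pi(\sigma'|q) + \pi(\sigma|q)$, $\pi'(\sigma|q) = 0$, and $\pi'(\tau|q) = \pi(\tau|q)$ for all other $\tau$). Taking expectations gives
\begin{equation*}
U(\pi'|q) - U(\pi|q) = \pi(\sigma|q)\bigl(U(\sigma'|q) - U(\sigma|q)\bigr) > 0,
\end{equation*}
which contradicts the assumption that $\pi$ maximizes the overall utility. The only substantive bookkeeping step is verifying that $\pi'$ is still a valid probability distribution over permutations, which is immediate from the construction. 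I do not anticipate any real obstacle here; the main care is simply to state the swap-based policy modification precisely and to note that the inequality remains strict because $\pi(\sigma|q) > 0$ by the choice of the offending $\sigma$.
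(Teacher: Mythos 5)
Your proposal is correct and follows essentially the same route as the paper's proof: a pairwise swap of two mis-ordered items within a ranking, followed by replacing that ranking in the policy to contradict optimality. You simply spell out the algebraic utility difference and the policy redistribution more explicitly than the paper does.
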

\begin{proof}
\label{proof:util_efficiency_item}
If there is a ranking $\sigma$ with $\pi(\sigma|q)>0$ and items are not ranked by  their expected relevance within each item group, we can construct a new ranking $\sigma'$ by switching one pair of items that are not ranked by their expected relevance within each item group so that $U(\sigma'|q)>U(\sigma|q)$. We can replace $\sigma$ with $\sigma'$ in the ranking policy $\pi$ to construct a new ranking policy $\pi'$ with larger overall utility. This contradicts that $\pi$ maximizes utility. 
\end{proof}

\begin{theorem}{(Maximizing utility ensures each ranking sampled from the policy is Pareto efficient for the intents.)}
\label{theo:util_efficiency_ranking_intent}
For any non-degenerate ranking problem, if a ranking policy $\pi$ maximizes the overall utility, then for any ranking $\sigma$ with $\pi(\sigma|q)>0$, $\sigma$ is Pareto efficient for the intents.  
\end{theorem}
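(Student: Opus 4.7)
The plan is to argue by contradiction, mirroring the swap-and-replace strategy used in the proofs of Theorems~\ref{theo:util_efficiency_user} and \ref{theo:util_efficiency_item}. Suppose $\pi$ maximizes the overall utility and yet some ranking $\sigma$ with $\pi(\sigma|q)>0$ is not Pareto efficient for the intents. By the definition of Pareto efficiency there is a ranking $\sigma'$ that dominates $\sigma$ for $\mathbbm{I}$, that is $U(\sigma'|i,q)\geq U(\sigma|i,q)$ for every $i\in\mathbbm{I}$ and $U(\sigma'|i^{*},q)>U(\sigma|i^{*},q)$ for at least one intent $i^{*}$.

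Next I would construct a modified policy $\pi'$ by reallocating the probability mass that $\pi$ places on $\sigma$ onto $\sigma'$: set $\pi'(\sigma|q)=0$, $\pi'(\sigma'|q)=\pi(\sigma'|q)+\pi(\sigma|q)$, and leave $\pi'(\tilde\sigma|q)=\pi(\tilde\sigma|q)$ for all other rankings $\tilde\sigma$. Expanding the overall utility as $U(\pi|q)=\mE_{\sigma\sim\pi(\cdot|q)}\bigl[\mE_{i\sim\mathcal{I}_{\mathcal{U}^q}}[U(\sigma|i,q)]\bigr]$ and subtracting, the contributions of every ranking other than $\sigma$ and $\sigma'$ cancel, giving
\begin{equation*}
U(\pi'|q)-U(\pi|q)=\pi(\sigma|q)\,\mE_{i\sim\mathcal{I}_{\mathcal{U}^q}}\bigl[U(\sigma'|i,q)-U(\sigma|i,q)\bigr].
\end{equation*}

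I would then use the non-degeneracy assumption, which guarantees $\mathcal{I}_{\mathcal{U}^q}(i^{*})>0$ for every intent. Since the integrand is pointwise nonnegative by the dominance of $\sigma'$ over $\sigma$ and strictly positive at $i^{*}$ which carries positive probability mass, the expectation is strictly positive; combined with $\pi(\sigma|q)>0$ this makes the whole right-hand side strictly positive, so $U(\pi'|q)>U(\pi|q)$, contradicting the assumed optimality of $\pi$.

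The main obstacle is essentially bookkeeping rather than a deep argument: I must be careful that reallocating mass in this way yields a valid probability distribution (which it does since total mass is preserved and all entries remain nonnegative) and that the overall-utility decomposition cleanly isolates the change so that only the $\sigma\to\sigma'$ swap affects it. Invoking non-degeneracy to convert the strict pointwise improvement at $i^{*}$ into a strict improvement of the expectation is the one non-cosmetic step, and it is exactly analogous to the role non-degeneracy plays in the user-group and item-group versions of this analysis.
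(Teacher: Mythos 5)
Your proof is correct and follows essentially the same route as the paper's: both argue by contradiction, replace the dominated ranking $\sigma$ with a dominating $\sigma'$, and use linearity of the overall utility together with the positive probability mass of every intent (non-degeneracy) to obtain a strict utility increase. The only organizational difference is that the paper first reduces to showing that every ranking in the support must itself maximize utility and then argues about individual rankings, whereas you perform the mass-reallocation directly at the policy level; the substance is identical.
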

\begin{proof}
\label{proof:util_efficiency_ranking_intent}

First, each ranking sampled from the policy $\pi$ maximizing utility is a ranking maximizing utility. Otherwise, we can replace it with a ranking maximizing utility in the ranking policy to get a policy with larger utility. We only need to prove that any ranking that maximizes utility is Pareto efficient for the intents among the rankings. 

Proof by contradiction. Assume there exists a ranking $\sigma$ such that $\sigma$ maximizes the utility i.e. $\forall \sigma'$ $U(\sigma'|q)\leq U(\sigma|q)$ and $\sigma$ is not Pareto efficient for the intents. By the definition of Pareto efficiency, there exists a ranking $\sigma''$ such that $\sigma''$ dominates $\sigma$  for the intents. By the definition of dominance,
\begin{displaymath}
    U(\sigma''|q) - U(\sigma|q) =\mE_{i\sim\mathcal{I}_{\mathcal{U}^q}}\bigg[U(\sigma''|i,q) - U(\sigma|i,q)\bigg]>0
\end{displaymath}
which contradicts the assumption that $\sigma$ maximizes the utility. 
\end{proof}

\subsection{Maximizing user fairness}

\subsubsection{Zero-utility analysis of maximizing user fairness.}

\begin{proof}[Proof of Theorem~\ref{theo:uf_zero_user}]
\label{proof:uf_zero_user}
First, there always exists a ranking policy that has non-zero utility for each user group (e.g. the mixture of the ranking policies that maximize utility for each user group). 

Second, for any ranking policy $\pi$ that has non-zero utility for every user group, we can construct a user fairness function $f$ such that any ranking policy $\pi'$ where at least one user group has zero utility has smaller user fairness than $\pi$. Without loss of generality, we denote $UG_1$ one of the user groups that has zero utility under $\pi'$. Denote the maximum utility of a user group over all user groups and all ranking policies as $U_{max}^{\mathbbm{UG}}=argmax_{UG\in\mathbbm{UG},\pi''}U(\pi''|UG,q)$ and the minimum utility of a user group across all user groups under policy $\pi$ as $U_{min,\pi}^{\mathbbm{UG}}=argmin_{UG\in\mathbbm{UG}}U(\pi|UG,q)$. The increasing concave function $f$ is constructed as a piece-wise linear function

\[ f(\cdot)=\begin{cases} 
      k_1(\cdot - t_1)  \quad\mbox{if}\quad\cdot \leq t_1  \\
      k_2(\cdot - t_1)  \quad\mbox{if}\quad\cdot > t_1
  \end{cases}
\]
with $k_1$ $k_2$ $t_1$ to be set later. We want to show that $UF(\pi|q)>UF(\pi'|q)$

\begin{equation}
\begin{split}
    &UF(\pi|q) - UF(\pi'|q)\\
=&\sum_{UG\in\mathbbm{UG}}\rho_{UG}f(U(\pi|UG,q))-\sum_{UG\in\mathbbm{UG}}\rho_{UG}f(U(\pi'|UG,q))\\
=&\rho_{UG_1}[f(U(\pi|UG_1,q))-f(U(\pi'|UG_1,q))] +\sum_{UG\in\mathbbm{UG} \setminus UG_1}\rho_{UG}[f(U(\pi|UG,q))-f(U(\pi'|UG,q))]\\
\geq&\rho_{UG_1}[f( U_{min,\pi}^{\mathbbm{UG}} )-f(0)] +\sum_{UG\in\mathbbm{UG}\setminus UG_1}\rho_{UG}[f( U_{min,\pi}^{\mathbbm{UG}} )-f(U_{max}^{\mathbbm{UG}})]\\
\geq&\rho_{UG_1} U_{min,\pi}^{\mathbbm{UG}}k_1  + (1-\rho_{UG_1}) (U_{min,\pi}^{\mathbbm{UG}} - U_{max}^{\mathbbm{UG}} )k_2\quad (\mbox{assume }  t_1 = U_{min,\pi}^{\mathbbm{UG}}, k_2 > 0, k_1 > 0)
\end{split}
\end{equation}

We set $t_1 =U_{min,\pi}^{\mathbbm{UG}}$ $k_2$ to be any positive real number, $k_1 > max(k_2, \frac{ (1-\rho_{UG_1})( U_{max}^{\mathbbm{UG}}  - U_{min,\pi}^{\mathbbm{UG}}) }{\rho_{UG_1}U_{min,\pi}^{\mathbbm{UG}}}k_2)$ so that $UF(\pi|q) > UF(\pi'|q)$. 

Since if a ranking policy has zero utility for a user group then the ranking policy has smaller user fairness than $\pi$ using user fairness function $f$, every user group has non-zero utility in the policy maximizing user fairness. 
\end{proof}

\begin{theorem}{(Maximizing user fairness can lead to zero utility for an item group.)}
\label{theo:uf_zero_item}
There exists a non-degenerate ranking problem such that for any user fairness function $f$, the policy $\pi$ maximizing user fairness has utility $U(\pi|DG,q)=0$ for an item group $DG$. 
\end{theorem}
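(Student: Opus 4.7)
\textbf{Proof plan for Theorem \ref{theo:uf_zero_item}.}

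The plan is to reuse exactly the ranking problem constructed in the proof of Theorem~\ref{theo:util_zero_user} (the running example in Figure~\ref{fig:example}) and show that the designated item group $DG_2 = d_{4*}\cup d_{5*}\cup d_{6*}$ has zero utility under every $UF$-maximizing policy, regardless of the choice of $f$. First I would verify that $DG_2$ has non-zero optimal utility: any deterministic ranking placing a $d_{4*}$, $d_{5*}$, or $d_{6*}$ item in one of the three non-zero-exposure positions yields $U(\pi|DG_2,q)>0$, so the zero-utility outcome is avoidable in principle.

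The core of the argument is a ``dominated swap'' lemma at the ranking level: for any ranking $\sigma$ that places an item $d\in DG_2$ in a position of $\{1,2,3\}$, there exists a ranking $\sigma'$ such that $U(\sigma'|UG,q)\geq U(\sigma|UG,q)$ for every $UG\in\mathbbm{UG}$, with strict inequality for some $UG$. The construction is: if $d\in d_{4*}$ or $d\in d_{5*}$, swap $d$ with any item $d'\in d_{1*}$ currently outside the top three; since $|d_{1*}|=3$ and $d$ occupies a top-three slot, at least one such $d'$ must exist. The swap increases $U(UG_1)$ by $0.6-0.54=0.06$ or $0.6-0.36=0.24$ respectively, while $U(UG_2)$ is unchanged because both items are irrelevant to $UG_2$. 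If $d\in d_{6*}$, the symmetric swap with an item $d'\in d_{3*}$ outside the top three (which exists by the same cardinality argument) strictly increases $U(UG_2)$ by $1-0.9=0.1$ without affecting $U(UG_1)$.

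Next I would lift this ranking-level swap to the policy level. Suppose, for contradiction, that a $UF$-maximizing policy $\pi$ has $\pi(\sigma|q)>0$ for some $\sigma$ placing a $DG_2$ item in the top three. Define $\pi'$ by moving the probability mass from $\sigma$ to its swapped counterpart $\sigma'$, keeping the rest of $\pi$ unchanged. Then $U(\pi'|UG,q)\geq U(\pi|UG,q)$ for all $UG$, with strict inequality for some $UG$. Because $f$ is strictly increasing, $f(U(\pi'|UG,q))\geq f(U(\pi|UG,q))$ with strict inequality in the corresponding coordinate, and since $\rho_{UG}^q>0$ by non-degeneracy, $UF(\pi'|q)>UF(\pi|q)$, contradicting optimality of $\pi$. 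Hence the support of every optimal $\pi$ consists only of rankings in which positions $1,2,3$ are filled by $DG_1$ items, giving $U(\pi|DG_2,q)=0$.

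The only subtle point is ensuring the swap is always feasible, which is a small counting argument: if a $d_{4*}$ or $d_{5*}$ item occupies a top-three slot then at most two of the three $d_{1*}$ items can be in the top three, so at least one is available to swap in; the analogous count with $d_{6*}$ and $d_{3*}$ handles the remaining case. Aside from this, the proof is uniform in $f$ because we only use that $f$ is strictly increasing and that $\rho_{UG}^q>0$, so concavity of $f$ is not needed here.
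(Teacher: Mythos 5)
Your proposal is correct and follows essentially the same route as the paper: it uses the identical construction from Figure~\ref{fig:example} and the same exchange argument (swap a top-ranked $DG_2$ item with an unexposed, strictly better $d_{1*}$ or $d_{3*}$ item to strictly increase one group's utility without decreasing the other's, contradicting optimality of $\pi$ since $f$ is increasing and $\rho_{UG}^q>0$). Your write-up is merely more explicit about the feasibility counting and the lift from rankings to policies, and slightly leaner in that it only excludes $DG_2$ items from the exposed positions rather than deriving the full ordering of the item sets as the paper does.
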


\begin{proof}
\label{proof:uf_zero_item}
Proof by construction. We use the example in \autoref{fig:example} that was detailed in the Proof of Theorem~\ref{theo:util_zero_user}. We will show that $DG_2$ has zero utility under the policy maximizing user fairness no matter what increasing concave user fairness function $f$ we choose. 

For any increasing concave user fairness function $f$, the policy maximizing user fairness will always rank items in $d_{1*}$ over items in $d_{2*}$, $d_{4*}$, $d_{5*}$ since items in $d_{1*}$ have strictly larger expected relevance to user group $UG_1$ and all the four sets have zero utility for $UG_2$. If not, we can switch the items to get larger user fairness. Similarly, the policy maximizing user fairness will rank items in $d_{3*}$ over items in $d_{6*}$. Thus, items in $d_{1*}$, $d_{3*}$ will occupy all the non-zero exposure positions. And all the other items get 0 exposure. Since items in $DG_2$ get 0 exposure, the policy has zero utility for $DG_2$. 

\end{proof}

\begin{theorem}{(Maximizing user fairness can not ensure each ranking sampled from the policy maximizing user fairness covers the maximum amount of intent covered by any ranking.)}
\label{theo:uf_zero_max_intent}
There exists a non-degenerate ranking problem such that for any user fairness function $f$, the amount of intent covered by any ranking sampled from the policy maximizing user fairness is less than the maximum amount of intent covered by any ranking. 
\end{theorem}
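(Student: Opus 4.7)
The plan is to reuse the ranking problem from Figure~\ref{fig:example}, as detailed in the proof of Theorem~\ref{theo:util_zero_user}, and to argue on this single problem that the maximum intent mass any ranking can cover equals $1$, while every ranking in the support of any user-fairness-maximizing policy $\pi$ covers at most $0.8$, independent of the choice of $f$.

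First I would compute $\mathcal{I}_{\mathcal{U}^q} = 0.5\,\mathcal{I}_{\mathcal{UG}_1} + 0.5\,\mathcal{I}_{\mathcal{UG}_2}$ to obtain $\mathcal{I}_{\mathcal{U}^q}(i_1) = 0.3$, $\mathcal{I}_{\mathcal{U}^q}(i_2) = 0.2$, $\mathcal{I}_{\mathcal{U}^q}(i_3) = 0.5$, and exhibit a concrete ranking that places one item from each of $d_{1*}$, $d_{2*}$, $d_{3*}$ in the three non-zero-exposure positions. This ranking yields positive utility for every intent, so the maximum intent mass coverable by any ranking is $1$.

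Next I would establish by a swap argument that for any increasing concave $f$ and any policy $\pi$ maximizing $UF(\pi|q)$, no ranking $\sigma$ with $\pi(\sigma|q) > 0$ can place an item $d \notin d_{1*} \cup d_{3*}$ in a top-$3$ position. Split into cases. If $d \in d_{2*} \cup d_{4*} \cup d_{5*}$, then $d$'s only non-zero relevance is to intents that matter to $UG_1$, and its expected relevance to $UG_1$ ($\leq 0.54$) is strictly below that of any item in $d_{1*}$ (which is $0.6$). Since $d$ occupies a top slot, fewer than $3$ items from $d_{1*}$ are in the top, so at least one $d' \in d_{1*}$ sits in a zero-exposure position. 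Swapping $d$ and $d'$ to form $\sigma'$ strictly increases $U(\sigma'|UG_1)$ and leaves $U(\sigma'|UG_2)$ unchanged. Replacing $\sigma$ by $\sigma'$ in the ranking policy (keeping the same probability weight) therefore strictly increases $U(\pi|UG_1)$ and preserves $U(\pi|UG_2)$, so by monotonicity of $f$ the user-fairness objective strictly increases, contradicting optimality. The case $d \in d_{6*}$ is symmetric using $d_{3*}$ and $UG_2$.

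Once the swap step is in place, every $\sigma$ in the support of $\pi$ contains only items from $d_{1*} \cup d_{3*}$ in the top-$3$ positions, and such items have positive relevance only to intents $i_1$ or $i_3$. Hence the intent mass covered by any such $\sigma$ is at most $\mathcal{I}_{\mathcal{U}^q}(i_1) + \mathcal{I}_{\mathcal{U}^q}(i_3) = 0.8 < 1$, completing the proof. I expect the main obstacle to be arguing cleanly that the swap is always possible and that the argument is uniform in $f$: the key observation is that only monotonicity of $f$ is required, not concavity, because each swap changes exactly one group's utility and does so in the favorable direction, sidestepping any dependence on the curvature of $f$.
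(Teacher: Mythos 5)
Your proposal is correct and follows essentially the same route as the paper: it uses the same construction from Figure~\ref{fig:example}, shows the maximum coverable intent mass is $1$, and uses the same swap argument (which the paper delegates to the proof of Theorem~\ref{theo:uf_zero_item}) to show that only items from $d_{1*}\cup d_{3*}$ can occupy the non-zero-exposure positions, so $i_2$ is never covered. Your explicit note that only monotonicity of $f$ is needed is a fair refinement of detail, not a different argument.
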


\begin{proof}
\label{proof:uf_zero_max_intent}
Proof by construction. We use the example in \autoref{fig:example} that was detailed in the Proof of Theorem~\ref{theo:util_zero_user}. The maximum amount of intent covered by any ranking is $1$, since we can select $3$ items relevant to different intents and put them in the $3$ non-zero exposure positions. We will show that the policy maximizing user fairness will only cover $i_1$ and $i_3$. As shown in the Proof of Theorem~\ref{theo:uf_zero_item},  for any increasing concave user fairness function $f$, the policy maximizing user fairness will always rank items in $d_{1*}$ over items in $d_{2*}$, $d_{4*}$, $d_{5*}$. Thus items in $d_{2*}$ and $d_{5*}$ get zero exposure. So intent $i_2$ has zero utility.  As a result, any ranking will cover less intent than $1$, the maximum amount of intent covered by any ranking. 
\end{proof}

\subsubsection{Efficiency analysis of maximizing user fairness}

\begin{proof}[Proof of Theorem~\ref{theo:uf_efficiency_user}]
\label{proof:uf_efficiency_user}
Proof by contradiction. Assume there exists a ranking policy $\pi$ such that $\pi$ maximizes the user fairness i.e. $\forall \pi'$ $UF(\pi'|q)\leq UF(\pi|q)$ and $\pi$ is not Pareto efficient for the user groups. By the definition of Pareto efficiency, there exists a ranking policy $\pi''$ such that $\pi''$ dominates $\pi$ for the user groups. By the definition of dominance and the increasing property of $f$,
\begin{displaymath}
    UF(\pi''|q) - UF(\pi|q) =\sum_{UG\in\mathbbm{UG}}\rho_{UG}^q(f(U(\pi''|UG,q))-f(U(\pi|UG,q)))>0
\end{displaymath}
which contradicts the assumption that $\pi$ maximizes the user fairness. 
\end{proof}

\begin{theorem}{(Maximizing user fairness ensures each ranking sampled from the policy is Pareto efficient for the intents.)}
\label{theo:uf_efficiency_ranking_intent}
For any non-degenerate ranking problem and any user fairness function $f$, if a ranking policy $\pi$ maximizes user fairness, then for any ranking $\sigma$ with $\pi(\sigma|q)>0$, $\sigma$ is Pareto efficient for the intents. 
\end{theorem}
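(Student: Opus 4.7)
The plan is to argue by contradiction along the same template used in Theorem~\ref{theo:uf_efficiency_user}, but now lifting a hypothetical Pareto improvement at the ranking level up to a user-fairness improvement at the policy level. Suppose $\pi$ maximizes $UF(\pi|q)$ but that some $\sigma$ with $\pi(\sigma|q)>0$ is not Pareto efficient for the intents. Then there is a ranking $\sigma''$ with $U(\sigma''|i,q)\geq U(\sigma|i,q)$ for all $i\in\mathbbm{I}$, and strict inequality for some intent $i^*$. I will construct a competitor policy $\pi'$ by transferring the entire mass $\pi(\sigma|q)$ from $\sigma$ to $\sigma''$, i.e.\ $\pi'(\sigma|q)=0$, $\pi'(\sigma''|q)=\pi(\sigma''|q)+\pi(\sigma|q)$, and $\pi'(\tau|q)=\pi(\tau|q)$ otherwise.

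Next I would rewrite the per-group utility from Eq.~\eqref{eq:util_usergroup} as $U(\pi|UG,q)=\mE_{\sigma\sim\pi(\cdot|q)}[\mE_{i\sim\mathcal{I}_{\mathcal{UG}^q}}[U(\sigma|i,q)]]$, from which
\begin{displaymath}
U(\pi'|UG,q)-U(\pi|UG,q)=\pi(\sigma|q)\,\mE_{i\sim\mathcal{I}_{\mathcal{UG}^q}}\bigl[U(\sigma''|i,q)-U(\sigma|i,q)\bigr]\geq 0
\end{displaymath}
for every $UG\in\mathbbm{UG}$, since every summand inside the expectation is non-negative.

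The main obstacle, and the only place non-degeneracy is needed, is to upgrade this weak inequality to a strict one for at least one user group. Since the ranking problem is non-degenerate, $\mathcal{I}_{\mathcal{U}^q}(i^*)>0$; combined with $\mathcal{I}_{\mathcal{U}^q}=\sum_{UG}\rho_{UG}^q\mathcal{I}_{\mathcal{UG}^q}$ and $\rho_{UG}^q>0$ for all $UG$, there must exist a user group $UG^*$ with $\mathcal{I}_{\mathcal{UG}^{*q}}(i^*)>0$. For that user group, the expectation above receives a strictly positive contribution from $i^*$, so $U(\pi'|UG^*,q)>U(\pi|UG^*,q)$.

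Finally, since $f$ is (strictly) increasing, $f(U(\pi'|UG,q))\geq f(U(\pi|UG,q))$ for all $UG$ with strict inequality for $UG^*$, and since all weights $\rho_{UG}^q$ are positive,
\begin{displaymath}
UF(\pi'|q)-UF(\pi|q)=\sum_{UG\in\mathbbm{UG}}\rho_{UG}^q\bigl[f(U(\pi'|UG,q))-f(U(\pi|UG,q))\bigr]>0,
\end{displaymath}
contradicting the optimality of $\pi$. Hence every ranking in the support of $\pi$ must be Pareto efficient for the intents.
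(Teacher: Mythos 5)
Your proof is correct and follows essentially the same route as the paper's: assume a ranking in the support is dominated, replace it by the dominating ranking to form $\pi'$, and show $UF(\pi'|q)>UF(\pi|q)$ via the increasing property of $f$. You are in fact more careful than the paper at the one delicate point --- using non-degeneracy to locate a user group $UG^*$ with positive mass on the strictly-improved intent $i^*$, which is what actually justifies the strict inequality the paper asserts without comment.
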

\begin{proof}
\label{proof:uf_efficiency_ranking_intent}
Proof by contradiction. For any policy $\pi$ that maximizes user fairness, assume there exists a ranking $\sigma$ such that $\pi(\sigma|q)>0$ while $\sigma$ is not Pareto efficient for the intents. By the definition of Pareto efficiency, there exists a ranking $\sigma'$ that dominates $\sigma$ for the intents. We construct a new ranking policy $\pi'$ by replacing $\sigma$ with $\sigma'$. By the definition of dominance and the increasing property of $f$,

\begin{displaymath}
\begin{split}
    &UF(\pi'|q) - UF(\pi|q)\\ =&\sum_{UG\in\mathbbm{UG}}\rho_{UG}^q(f(U(\pi'|UG,q))-f(U(\pi|UG,q)))\\
    =&\sum_{UG\in\mathbbm{UG}}\rho_{UG}^q(f(\mE_{i\sim\mathcal{I}_{\mathcal{UG}^q}, \sigma\sim\pi'(\cdot|q)}[U(\sigma|i,q)])-f(\mE_{i\sim\mathcal{I}_{\mathcal{UG}^q}, \sigma\sim\pi(\cdot|q)}[U(\sigma|i,q)]))\\
    > & 0
\end{split}
\end{displaymath}
\end{proof}

which contradicts the assumption that $\pi$ maximizes the user fairness.

\begin{figure}[!tbp]
  \centering
  \includegraphics[width=0.95\textwidth]{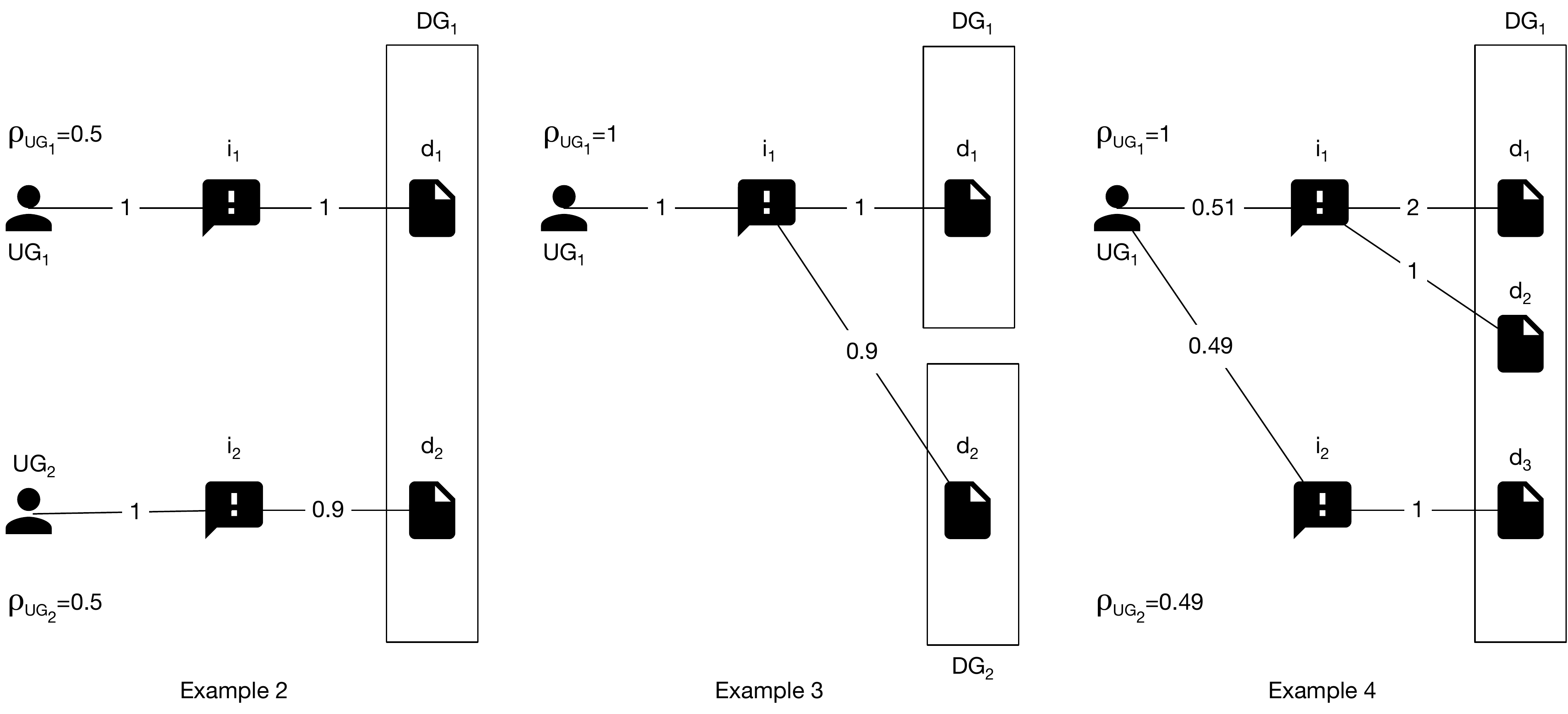}
  \caption{Examples to prove the theorems. Numbers on the edges between user groups and intents represent the intent distribution. The numbers of the edges between intents and items represent relevance. For clarity, we omit edges with $0$ probability or relevance.}
  \label{fig:example_supplementary}
\end{figure}

\begin{proof}[Proof of Theorem~\ref{theo:uf_efficiency_item}]
\label{proof:uf_efficiency_item}
Proof by construction. We first construct a ranking problem shown in \autoref{fig:example_supplementary} example 2. $\mathbbm{UG}=\{UG_1, UG_2\}$, $\rho_{UG_1} = \rho_{UG_2} =  0.5$, $\mathbbm{I} = \{i_1, i_2\}$, $\mathcal{I}_{\mathcal{UG}_1}(i_1) = \mathcal{I}_{\mathcal{UG}_2}(i_2) =1$ and $\mathcal{I}_{\mathcal{UG}_1}(i_2) = \mathcal{I}_{\mathcal{UG}_2}(i_1) =0$. 
$\mathbbm{DG} = \{DG_1\}$, $\mathbbm{D}^q = \{d_1, d_2\}$ and $d_1, d_2\in DG_1$. $r(d_1,i_1) = 1$, $r(d_2, i_2) = 0.9$ $r(d1, i_2) = r(d_2, i_1) = 0$. 

Without loss of generality, we assume $e(1) > e(2)$. Since there is only one item group, the policy that ensures items are ranked by their expected relevance within each item group is a deterministic ranking policy that ranks $d_1$ over $d_2$. 

We take a look at the user fairness of a ranking policy $\pi$, 
\begin{displaymath}
\begin{split}
    &UF(\pi|q)\\
    =& \sum_{UG\in\mathbbm{UG}}\rho_{UG}^qf(U(\pi|UG,q))\\
    =& \sum_{UG\in\mathbbm{UG}}\rho_{UG}^qf(\mE_{\sigma\sim\pi(\cdot|q),i\sim\mathcal{I}_{\mathcal{UG}}}\bigg[\sum_{d\in\mathbbm{D}^q}e(\sigma(d))r(d,i)\bigg])\\
    =&0.5f(\mE_{\sigma\sim\pi(\cdot|q)}\bigg[e(\sigma(d_1))\bigg]) + 0.5f(\mE_{\sigma\sim\pi(\cdot|q)}\bigg[0.9e(\sigma(d_2))\bigg])
\end{split}
\end{displaymath}

There are two possible rankings for this ranking problem and we denote the probability of the ranking that ranks $d_1$ overs over $d_2$ in the ranking policy as $x$. As long as the deterministic ranking policy is not optimal for user fairness under the exposure function $e$ and the the user fairness function $f$, the ranking policy that maximizes the user fairness does not ensure items are ranked by their expected relevance within each item group. For example, if we use the $log$ function as $f$, then the user fairness is \begin{displaymath}
\begin{split}
&UF(\pi|q)\\
=&0.5 log (e(1)x + e(2)(1-x)) + 0.5 log (0.9e(1)(1-x) + 0.9e(2)x)\\
=&0.5 log ((e(1)x + e(2)(1-x))(0.9e(1)(1-x) + 0.9e(2)x))\\
=&0.5log(0.9( -(e(1)-e(2))^2x^2 + (e(1)-e(2))^2x + e(1)e(2)))\\
\leq& 0.5log(0.225 (e(1)-e(2))^2), 
\end{split}
\end{displaymath}

and the equality holds when $x = 0.5$. So the ranking policy that maximizes user fairness selects the two rankings with equal probability. As a result, the items are not ranked by their expected relevance to the whole user population within each item group under the policy maximizing user fairness. 
\end{proof}

\subsection{Maximizing utility subject to disparate treatment constraints}

For the sake of brevity, we use ``maximize item fairness'' to refer to the more appropriately descriptive ``maximize utility subject to the disparate treatment constraints''.

\subsubsection{Zero-utility analysis of maximizing item fairness}

\begin{theorem}{(Maximizing item fairness can ensure non-zero utility for all item groups.)}
\label{theo:if_zero_item}
For any non-degenerate ranking problem, there always exists a merit function $M$ such that if a ranking policy $\pi$ maximizes utility subject to the disparate treatment constraints, then every item group has non-zero utility under $\pi$.
\end{theorem}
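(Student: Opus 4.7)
The plan is to exhibit an explicit merit function --- the simplest choice being the uniform $M(DG, q) = 1$ for every item group --- and show that disparate treatment together with utility maximization then forces positive utility in every item group. With uniform merit the constraints collapse to $E(\pi|DG_m, q) = E(\pi|DG_n, q)$ for all $m,n$, which is trivially satisfiable (e.g.\ by the uniform random permutation, under which every item receives the same expected exposure).

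First I would pass from the satisfied constraint to a strictly positive exposure for every group. Under any feasible $\pi$ the common value $c := E(\pi|DG, q)$ must be strictly positive, since (assuming the item groups partition $\mathbbm{D}^q$)
\begin{displaymath}
\sum_{DG \in \mathbbm{DG}} |DG| \cdot E(\pi|DG, q) \;=\; \sum_{d \in \mathbbm{D}^q} \mE_{\sigma \sim \pi(\cdot|q)}[e(\sigma(d))] \;=\; \sum_n e(n) \;>\; 0 .
\end{displaymath}
So every item group already receives strictly positive average exposure under any feasible policy.

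Next I would pass from exposure to utility by a feasibility-preserving swap argument. Suppose for contradiction that a constrained maximizer $\pi^*$ gives some item group $DG$ zero utility. Since $U(\pi^*|DG, q) = \sum_{d \in DG} \mE_\sigma[e(\sigma(d))] \cdot \mE_i[r(d, i)]$ is a sum of nonnegative terms, every $d \in DG$ with positive expected relevance must have zero marginal exposure. Non-degeneracy supplies such an item $d^* \in DG$; and because $E(\pi^*|DG, q) > 0$, some other $d' \in DG$ has strictly positive marginal exposure, which together with the previous sentence forces $\mE_i[r(d', i)] = 0$. Pick any ranking $\sigma$ in the support of $\pi^*$ that places $d'$ at a position of positive exposure --- in that $\sigma$ the item $d^*$ must sit at a zero-exposure position, since its marginal exposure vanishes and $e(\cdot) \geq 0$. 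Swapping $d^*$ and $d'$ produces $\sigma'$ whose utility strictly exceeds that of $\sigma$: the $d'$-contribution is zero either way, while the $d^*$-contribution grows from $0$ to $e(\sigma(d')) \cdot \mE_i[r(d^*, i)] > 0$. Moving $\pi^*(\sigma)$ worth of probability mass from $\sigma$ to $\sigma'$ leaves every item group's average exposure unchanged (the swap is internal to $DG$), so the disparate treatment constraints are preserved while utility strictly increases, contradicting the optimality of $\pi^*$.

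The main obstacle is the feasibility-preserving swap: it works only because the two items involved lie in the \emph{same} item group $DG$, which requires locating a within-group pair consisting of a positive-relevance, zero-exposure item $d^*$ and a zero-relevance, positive-exposure item $d'$. The existence of such a pair is exactly what the combination of non-degeneracy and the positivity of $c$ from the previous step guarantees. A secondary technical point is that a constrained maximizer exists at all, which follows because the feasible set of marginal rank probability matrices is a compact convex polytope and utility is linear in $\Sigma$.
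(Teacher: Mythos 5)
Your proposal is correct and follows the same basic route as the paper: choose the uniform merit function so that the disparate treatment constraints force equal (and hence, since total exposure is positive, strictly positive) average exposure for every item group. Where you go beyond the paper is the within-group swap argument that converts positive \emph{exposure} into positive \emph{utility}: the paper's own proof stops at ``any policy assigning zero exposure to a group is infeasible'' and immediately concludes non-zero utility, which silently ignores the possibility that all of a group's exposure lands on items with zero expected relevance. Your argument --- locating a positive-relevance, zero-exposure item $d^*$ (guaranteed by non-degeneracy condition (4)) and a zero-relevance, positive-exposure item $d'$ in the same group, swapping them in a supported ranking, and observing that this preserves every group's exposure while strictly increasing utility, contradicting optimality --- is exactly the missing step, so your write-up is actually the more complete of the two.
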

\begin{proof}
\label{proof:if_zero_item}
We construct a merit function $M$ such that every item group has the same merit. First, there always exists a ranking policy that satisfies the constraints since the policy that ranks items randomly satisfy the constraints. Second, any ranking policy that assigns zero exposure to an item group does not satisfy the constraints. Thus the ranking policy that maximizes utility subject to the constraints have non-zero utility for all item groups. 
\end{proof}

\begin{theorem}{(Maximizing item fairness can lead to zero utility for a user group.)}
\label{theo:if_zero_user}
There exists a non-degenerate ranking problem such that for any merit function $M$, the policy $\pi$ maximizing utility subject to the disparate treatment constraints has utility $U(\pi|UG,q)=0$ for a user group $UG$.
\end{theorem}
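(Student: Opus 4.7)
The plan is to reuse the ranking problem from Figure~\ref{fig:example} (the same example that drives the proofs of Theorems~\ref{theo:util_zero_user} and~\ref{theo:uf_zero_item}) and to show that for any positive merit function $M$, the item-fairness-constrained utility maximizer concentrates all non-zero exposure on the item sets $d_{3*}$ and $d_{6*}$, which are only relevant to intent $i_3$. Since $\mathcal{I}_{\mathcal{UG}_1}(i_3)=0$ in that construction, $UG_1$ then obtains zero utility. Pinning down the exposure budgets is the first step: the disparate treatment equation together with the doubly stochastic constraint gives $|DG_1| E(\pi|DG_1,q) + |DG_2| E(\pi|DG_2,q) = \sum_n e(n) = 3$ and $E(\pi|DG_1,q)/E(\pi|DG_2,q) = M(DG_1,q)/M(DG_2,q)$, so the total group-exposure budgets $B_k := |DG_k| E(\pi|DG_k,q)$ are uniquely determined by $M$ and satisfy $B_1 + B_2 = 3$ with $0 < B_k \leq 3$.

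Next I would compute $\mathbf{r}^{\mathcal{U}^q}$ on the example and observe that within $DG_1$ the items of $d_{3*}$ strictly dominate those of $d_{1*} \cup d_{2*}$ in expected relevance, and within $DG_2$ the items of $d_{6*}$ strictly dominate those of $d_{4*} \cup d_{5*}$. An exchange argument analogous to the proof of Theorem~\ref{theo:util_efficiency_item} then shows that any utility-maximizer subject to the group budgets must place all of $DG_1$'s exposure on $d_{3*}$ and all of $DG_2$'s exposure on $d_{6*}$: since swapping a lower-relevance item with a higher-relevance one within the same item group leaves $E(\pi|DG,q)$ (and hence the disparate treatment constraint) unchanged while strictly increasing utility, the optimum must greedily fill the highest-relevance items of each group first; because $d_{3*}$ and $d_{6*}$ each contain $3$ items they can absorb the whole group budget, as $B_k \leq 3$. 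Feasibility of the resulting marginal rank probability matrix is straightforward --- assigning each (item in $d_{3*}$, top-3 position) pair probability $B_1/9$ and each (item in $d_{6*}$, top-3 position) pair probability $B_2/9$, and filling the remaining $15$ positions with the other items, is doubly stochastic and admits a BvN decomposition into a concrete ranking policy.

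Combining the two steps, all non-zero exposure is on $d_{3*} \cup d_{6*}$, so every item that receives positive exposure is relevant only to $i_3$, which has zero mass in $\mathcal{I}_{\mathcal{UG}_1}$; hence $U(\pi|UG_1,q) = 0$ and the non-degeneracy of the ranking problem can be read off directly from the intent and relevance assignments in Figure~\ref{fig:example}. The main obstacle I expect is making the exchange argument airtight in the presence of the item-fairness constraint, especially because the policy is in general stochastic; this is handled by lifting the argument to the per-item expected exposures $E_d := \sum_n e(n)\, \mathbf{\Sigma}^{\pi,q}_{d,n}$, which are the only quantities entering both the objective $\sum_d \mathbf{r}^{\mathcal{U}^q}_d E_d$ and the group-exposure sums, so that any within-group swap of two items to the higher-relevance one is a feasibility-preserving strict improvement.
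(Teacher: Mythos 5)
Your proposal is correct and follows essentially the same route as the paper's own proof: it uses the Figure~\ref{fig:example} construction and argues that, for any merit function, the within-group exchange argument forces all non-zero exposure onto $d_{3*}$ and $d_{6*}$, which are relevant only to $i_3$ and hence worthless to $UG_1$. Your version merely makes explicit the group-exposure budgets and the feasibility of the resulting doubly stochastic matrix, which the paper leaves implicit.
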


\begin{proof}
\label{proof:if_zero_user}
Proof by construction. We use the example in \autoref{fig:example} that was detailed in the Proof of Theorem~\ref{theo:util_zero_user}. We will show that $UG_1$ has zero utility under the policy maximizing item fairness no matter what merit function we choose.

For any merit function, the policy maximizing item fairness will always rank items in $d_{3*}$ over items in $d_{1*}$, $d_{2*}$ and rank items $d_{6*}$ over items in $d_{4*}$, $d_{5*}$. Thus items in $d_{3*}$ and $d_{6*}$ occupy all the exposure. And all the other items get 0 exposure. Since items that have non-zero relevance to $i_1$ and $i_2$ get zero exposure, $UG_1$ gets 0 utility.

\end{proof}

\begin{theorem}{(Maximizing item fairness can not ensure each ranking sampled from the policy covers the maximum amount of intent covered by any ranking.)}
\label{theo:if_zero_max_intent}
There exists a non-degenerate ranking problem such that for any merit function $M$, the amount of intent covered by any ranking sampled from the policy maximizing utility subject to the disparate treatment constraints is less than the maximum amount of intent covered by any ranking. 
\end{theorem}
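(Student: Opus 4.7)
The plan is to reuse the construction in \autoref{fig:example}, exactly as in the proofs of Theorems~\ref{theo:util_zero_max_intent}, \ref{theo:uf_zero_max_intent}, and \ref{theo:if_zero_user}, and argue that for any merit function the policy maximizing utility subject to the disparate treatment constraints places only items from $d_{3*}$ and $d_{6*}$ in the three non-zero exposure positions, so each sampled ranking covers only intent $i_3$, whereas a strictly larger intent coverage is achievable.

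First, I would establish the skyline: the maximum amount of intent that any ranking can cover in this instance is $1$. To see this, exhibit any ranking that puts one item from each of $d_{1*},d_{2*},d_{3*}$ (or equivalently one item from each of $d_{4*},d_{5*},d_{6*}$) into positions $1,2,3$; such a ranking provides non-zero utility to all three intents $i_1,i_2,i_3$, so their total probability mass under $\mathcal{I}_{\mathcal{U}^q}$ is $1$.

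Next, for any merit function $M$ for which the disparate treatment constraints are satisfiable, I would argue (as in the proof of Theorem~\ref{theo:if_zero_user}) that any policy $\pi$ maximizing utility subject to the constraints must rank items in $d_{3*}$ strictly above items in $d_{1*}\cup d_{2*}$ within $DG_1$, and items in $d_{6*}$ strictly above items in $d_{4*}\cup d_{5*}$ within $DG_2$. The key observation is that swapping an item of $d_{3*}$ for a higher-ranked item of $d_{1*}$ or $d_{2*}$ (or $d_{6*}$ for $d_{4*}$ or $d_{5*}$) strictly increases utility while leaving the per-group exposure $E(\pi|DG,q)$ unchanged, so the constraints remain satisfied; this would contradict optimality of $\pi$. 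Consequently, only items from $d_{3*}\cup d_{6*}$ can occupy the three non-zero exposure positions in any ranking $\sigma$ with $\pi(\sigma|q)>0$.

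Finally, since by construction $r(d,i)=0$ for every $d\in d_{3*}\cup d_{6*}$ and every $i\in\{i_1,i_2\}$, each such ranking $\sigma$ gives non-zero utility only to intent $i_3$, so the intent it covers has mass $\mathcal{I}_{\mathcal{U}^q}(i_3)=0.5<1$, which is strictly less than the maximum $1$ established above. The main subtlety, which is the only non-routine step, is justifying the within-group ordering claim uniformly over all merit functions $M$: one must check that the swap argument only perturbs the utility and never the group exposures used in the disparate treatment constraints, so that the specific form of $M$ (and the particular exposure split between $DG_1$ and $DG_2$ it induces) is irrelevant.
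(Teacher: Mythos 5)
Your proposal is correct and follows essentially the same route as the paper: the same construction from \autoref{fig:example}, the same skyline of $1$ achieved by placing one item relevant to each intent in the three exposed positions, and the same within-group swap argument (exposure-preserving, utility-increasing, hence valid for any merit function) showing that only $d_{3*}$ and $d_{6*}$ items can occupy the exposed positions, so each sampled ranking covers only $i_3$ with mass $0.5<1$. You merely spell out the swap justification that the paper delegates to its proof of Theorem~\ref{theo:if_zero_user}.
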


\begin{proof}
\label{proof:if_zero_max_intent}
Proof by construction. We use the example in \autoref{fig:example} that was detailed in the Proof of Theorem~\ref{theo:util_zero_user}. The maximum amount of intent covered by any ranking is $1$ since we can select 3 items relevant to different intents and put them in the 3 non-zero exposure positions. We will show that $i_1$ and $i_2$ have zero utility under the policy maximizing item fairness no matter what merit function we choose. 

As shown in the Proof of Theorem~\ref{theo:if_zero_user}, for any merit function, items have non-zero relevance to $i_1$ and $i_2$ get zero exposure under the policy maximizing item fairness. As a result $i_1$ and $i_2$ have zero utility. So any ranking sampled from the policy maximizing item fairness has zero utility for $i_1$ and $i_2$. Thus any ranking sampled from the policy maximizing item fairness covers less intent than $1$, the maximum amount of intent covered by any ranking. 
\end{proof}

\subsubsection{Efficiency analysis of maximizing item fairness}

\begin{theorem}{(Maximizing item fairness ensures items are ranked by their expected relevance within each item group)}
\label{theo:if_pareto_items}
For any non-degenerate ranking problem and any satisfiable merit function $M$, if a ranking policy $\pi$ maximizes utility subject to the disparate treatment constraints, then
items within each item group are ranked by their expected relevance within each item group under $\pi$.
\end{theorem}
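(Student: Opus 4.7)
The plan is to use a simple exchange argument, analogous to the one used in the proof of Theorem~\ref{theo:util_efficiency_item} for overall utility, but with the additional observation that swapping two items within the \emph{same} item group preserves the disparate treatment constraints.

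First I would argue by contradiction: suppose $\pi$ maximizes utility subject to the disparate treatment constraints but there exists some ranking $\sigma$ with $\pi(\sigma|q) > 0$, some item group $DG \in \mathbbm{DG}$, and two items $d_m, d_n \in DG$ such that $\mE_{i\sim\mathcal{I}_{\mathcal{U}^q}}[r(d_m,i)] > \mE_{i\sim\mathcal{I}_{\mathcal{U}^q}}[r(d_n,i)]$ while $e(\sigma(d_m)) < e(\sigma(d_n))$. Construct $\sigma'$ by swapping the ranks of $d_m$ and $d_n$ in $\sigma$, and form $\pi'$ from $\pi$ by replacing $\sigma$ with $\sigma'$ (or decreasing the weight on $\sigma$ and increasing the weight on $\sigma'$ accordingly if $\sigma'$ already has support).

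Next I would verify two things. (i)~\textbf{Constraint preservation:} Since $d_m, d_n$ both lie in $DG$, the multiset of exposures received by items of $DG$ is unchanged, and the exposure of every item outside $DG$ is unchanged, so $E(\pi'|DG',q) = E(\pi|DG',q)$ for all $DG' \in \mathbbm{DG}$. Hence $\pi'$ still satisfies the disparate treatment constraints given in Eq.~\eqref{eq:item_fairness_naive}. (ii)~\textbf{Strict utility gain:} Using the expression $U(\pi|q) = \sum_d \mE_{\sigma \sim \pi(\cdot|q)}[e(\sigma(d))]\, \mE_{i\sim\mathcal{I}_{\mathcal{U}^q}}[r(d,i)]$, only the contributions of $d_m$ and $d_n$ change between $\pi$ and $\pi'$, and the change equals
\begin{equation*}
\bigl(e(\sigma(d_n)) - e(\sigma(d_m))\bigr)\bigl(\mE_{i\sim\mathcal{I}_{\mathcal{U}^q}}[r(d_m,i)] - \mE_{i\sim\mathcal{I}_{\mathcal{U}^q}}[r(d_n,i)]\bigr) \cdot \Delta,
\end{equation*}
where $\Delta > 0$ is the amount of probability mass shifted from $\sigma$ to $\sigma'$. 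Both factors involving exposure and expected relevance are strictly positive by assumption, so $U(\pi'|q) > U(\pi|q)$. This contradicts the maximality of $\pi$ among policies satisfying the disparate treatment constraints.

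The main obstacle I expect is simply making the constraint-preservation step (i) airtight: one needs to spell out that the disparate treatment constraints depend on $\pi$ only through the group-level exposures $E(\pi|DG,q) = (\mathbf{l}^{DG})^\top \Sigma^{\pi,q}\mathbf{e}/|DG|$ from Eq.~\eqref{equ:exposure}, and that a within-group swap leaves $(\mathbf{l}^{DG})^\top \Sigma^{\pi,q}\mathbf{e}$ invariant for every $DG$. Once that is observed, the rest is a standard exchange argument and no extra assumption on the merit function $M$ is used beyond satisfiability (which ensures the feasible set is non-empty so that ``maximizes utility subject to'' is well-defined).
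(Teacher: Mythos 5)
Your proposal is correct and follows essentially the same route as the paper's proof: an exchange argument that swaps two misordered items within the same item group, observes that group-level exposures (and hence the disparate treatment constraints) are unchanged, and derives a strict utility gain contradicting optimality. Your write-up merely makes explicit the constraint-preservation and utility-gain calculations that the paper states in one sentence.
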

\begin{proof}
For any ranking policy that maximizes item fairness and any two items within the same group, they must be ranked by their expected relevance to the whole user population. If not, we can switch the two items without affecting the exposure of any item group --- and thus still satisfying the constraints ---, and achieve larger utility, which contradicts that the policy is a solution to maximizing utility subject to the constraints. 
\end{proof}

\begin{theorem}{(Maximizing item fairness is not Pareto efficient for the user groups)}
\label{theo:if_pareto_user}
There exists a non-degenerate ranking problem and a merit function $M$ such that any policy $\pi$ that maximizes utility subject to the disparate treatment constraints is not Pareto efficient for the user groups. 
\end{theorem}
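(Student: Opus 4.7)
The plan is to produce a small non-degenerate ranking problem together with a merit function $M$ for which the disparate treatment equality pins down a unique non-trivial exposure split that forces strictly positive exposure onto an item whose relevance is pointwise dominated at every intent by another item. Every utility maximizer subject to this constraint will therefore produce strictly smaller utility for \emph{every} user group than the unconstrained deterministic policy that always ranks the dominant item first, so the constrained optimum is Pareto-dominated for the user groups, which is exactly what the theorem requires.

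Concretely, I would take one item per group, $DG_1=\{d_1\}$ and $DG_2=\{d_2\}$, two user groups $UG_1,UG_2$ with equal proportions whose intent distributions are concentrated on two distinct intents $i_1,i_2$ respectively, and pick relevances so that $r(d_1,i) > r(d_2,i) > 0$ for both $i\in\{i_1,i_2\}$. With a single top position of positive exposure, choosing the merit function as the average relevance of each group gives $M(DG_1,q) > M(DG_2,q) > 0$, and the equality~\eqref{eq:item_fairness_naive} determines the marginal probabilities of $d_1$ and $d_2$ at the top position as the unique pair $p = M(DG_1,q)/(M(DG_1,q)+M(DG_2,q))$ and $1-p$, with $0<p<1$. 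Since the user-group utility $U(\pi|UG,q)$ is linear in $\mathbf{\Sigma}^{\pi,q}$, all constrained maximizers share the same utility profile $\bigl(p\,r(d_1,i_1)+(1-p)\,r(d_2,i_1),\; p\,r(d_1,i_2)+(1-p)\,r(d_2,i_2)\bigr)$. Comparing with the deterministic policy $\pi^{\star}$ that always ranks $d_1$ first, pointwise domination $r(d_1,\cdot) > r(d_2,\cdot)$ combined with $1-p>0$ yields $U(\pi^{\star}|UG,q) > U(\pi|UG,q)$ for both user groups, so $\pi^{\star}$ strictly dominates every constrained maximizer by the definition of dominance for the user groups, and Pareto efficiency fails.

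Finally, I would verify the four non-degeneracy conditions, each of which is immediate from the construction: positive group proportions, positive intent masses under the mixture $\mathcal{I}_{\mathcal{U}^q}$, a relevant item for each intent, and a positive expected-relevance representative for each item group. The main obstacle is essentially administrative, namely choosing the exposure vector so that the disparate treatment equality is simultaneously satisfiable and forces $p<1$ strictly away from the $\pi^{\star}$-marginals; if the single-position setup turns out to be too coarse to satisfy the equality exactly, I can append a second position of small strictly positive exposure, possibly together with dummy items, as a tie-breaking device without affecting the domination argument.
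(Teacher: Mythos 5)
Your proposal is correct and follows essentially the same route as the paper: the paper's own proof also constructs a two-item, two-group example in which one item pointwise dominates the other, so the unique Pareto-efficient policy for the user groups is the deterministic ranking of the dominant item first, and then chooses merits and exposures so that the disparate treatment constraints force strictly positive exposure onto the dominated item. Your single-positive-position variant with merit equal to average relevance is a clean instantiation of the same idea, and the non-degeneracy checks go through as you describe.
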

\begin{proof}
\label{proof:if_pareto_user}
Consider the example 3 in \autoref{fig:example_supplementary}. The only policy that is Pareto efficient for the user groups is the one that maximizes utility for user group $G_1$ which is a deterministic ranking policy that ranks $d_1$ over $d_2$. We can construct a merit function $M(DG_1) = 1$, $M(DG_2) = 0.9$ and exposure function $e(1) = 1$ $e(2) = 0.5$ such that the disparate treatment constraints can only be satisfied when the ranking policy also ranks $d_2$ over $d_1$ sometimes. Thus the policy that maximizes utility subject to the constraints is not Pareto efficient for user groups. 
\end{proof}

\begin{theorem}{(Maximizing item fairness does not ensure each ranking sampled from the policy is Pareto efficient for the intents)}
\label{theo:if_pareto_intent}
There exists a non-degenerate ranking problem and a merit function $M$ such that for any policy $\pi$ that maximizes utility subject to the disparate treatment constraints, there exists a ranking $\sigma$ with $\pi(\sigma|q)>0$ such that $\sigma$ is not Pareto efficient for the intents. 
\end{theorem}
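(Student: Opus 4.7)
My plan is to prove the theorem by constructing a minimal counterexample. The intuition is that the disparate treatment constraint~\eqref{eq:item_fairness_naive} only speaks to the average exposure of each item group relative to its merit and is blind to the Pareto structure of the items' relevances across intents. So if I place a Pareto-dominating item and a Pareto-dominated item in \emph{different} item groups and choose merits so that the constraint forces the dominated group to receive nontrivial exposure, every constrained-optimal policy must put the dominated item above its dominator with positive probability, producing a Pareto-inefficient ranking in its support.

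Concretely, I would take $\mathbbm{UG}=\{UG_1\}$ with $\rho^q_{UG_1}=1$, two intents $i_1,i_2$ with $\mathcal{I}_{\mathcal{UG}_1^q}(i_1)=\mathcal{I}_{\mathcal{UG}_1^q}(i_2)=1/2$, two items $d_1,d_2$ forming singleton item groups $DG_1=\{d_1\}$, $DG_2=\{d_2\}$, relevances $r(d_1,i_k)>r(d_2,i_k)>0$ for $k=1,2$ (so that $d_1$ strictly Pareto-dominates $d_2$ across intents), an exposure vector with $e(1)>e(2)\geq 0$, and equal merits $M(DG_1,q)=M(DG_2,q)>0$. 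There are only two possible rankings, so any policy is parameterized by the probability $x$ of ranking $d_1$ first. The disparate treatment constraint reduces to a single linear equation in $x$ that pins $x$ to the unique value in $(0,1)$ equalizing group exposures. Hence the constrained-feasible policy is unique, trivially maximizes utility among constraint-satisfying policies, and assigns positive probability to the ranking $\sigma_2$ that places $d_2$ above $d_1$.

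The last step is to check that $\sigma_2$ is not Pareto efficient for the intents: the companion ranking $\sigma_1$ (with $d_1$ on top) satisfies
$$U(\sigma_1|i_k,q) - U(\sigma_2|i_k,q) = (e(1)-e(2))(r(d_1,i_k)-r(d_2,i_k)) > 0$$
for both $k$, so $\sigma_1$ strictly dominates $\sigma_2$ across all intents. I also need to verify the four non-degeneracy conditions, all of which hold trivially in this construction ($\rho^q_{UG_1}>0$; both intents have positive mass; each intent has a relevant item, namely $d_1$; and each singleton group contains an item with positive expected relevance). The main subtle point is selecting merits so that the constraint is both satisfiable and binds at an interior $x$ that awards nontrivial exposure to $d_2$; using equal merits with singleton groups makes the constraint demand symmetric exposure of the two items, which cannot coexist with any Pareto-respecting ranking policy. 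This leaves the dominated ranking $\sigma_2$ necessarily in the support of every constrained-optimal $\pi$, completing the proof.
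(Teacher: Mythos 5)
Your proposal is correct and follows essentially the same route as the paper: a two-item instance with the items in separate item groups, where one item Pareto-dominates the other across intents, and merits chosen so that the disparate treatment constraints force the dominated ranking into the support of every feasible policy. The only cosmetic difference is that you use equal merits (pinning $x=1/2$) while the paper uses merits $1$ and $0.9$ with a specific exposure vector; both yield the same conclusion.
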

\begin{proof}
Still consider the example 3 in \autoref{fig:example_supplementary}. The only ranking that is Pareto efficient for the intents ranks $d_1$ over $d_2$. However, as in the Proof of Theorem~\ref{theo:if_pareto_user}, there is a merit function and an exposure function such that the ranking policy must rank $d_2$ over $d_1$ sometimes to satisfy the disparate treatment constraints. Thus there always exists a ranking in the policy maximizing item fairness that is not Pareto efficient for the intents. 
\end{proof}

\subsection{Maximizing Diversity}

\subsubsection{Zero-utility analysis of maximizing diversity}

\begin{theorem}{(Maximizing diversity can ensure that each ranking sampled from the policy maximizing diversity covers the maximum amount of intent covered by any ranking)}
\label{theo:D_zero_max_intent}
For any non-degenerate ranking problem, there always exists a diversity function $g$ such that if a ranking policy $\pi$ maximizes diversity, then any ranking sampled from $\pi$ covers the maximum amount of intent covered by any ranking.
\end{theorem}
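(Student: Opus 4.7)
The approach mirrors the construction used in the proof of Theorem~\ref{theo:uf_zero_user}: build a piecewise linear concave $g$ whose slope is extremely steep near the origin so that covering a new intent (i.e.\ pushing its utility from $0$ to anything positive) outweighs any loss from the other intents, and then argue by a swap contradiction.

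First I would introduce the relevant constants, all of which are finite and well-defined by non-degeneracy and the finiteness of the ranking problem: let $U_{max} = \max_{\sigma,i} U(\sigma|i,q)$, let $U_{pos}^{min}$ be the smallest strictly positive value of $U(\sigma|i,q)$ over all rankings and intents, and let $\Delta_{min}$ be the smallest positive difference $m(\sigma^*) - m(\sigma)$ between the coverage of a max-coverage ranking $\sigma^*$ and any ranking $\sigma$ that is not max-coverage, where $m(\sigma)=\sum_{i:U(\sigma|i,q)>0}\mathcal{I}_{\mathcal{U}^q}(i)$. Then I define
\[
g(x)=k_1\min(x,t_1)+k_2\max(0,x-t_1),
\]
which is increasing and concave whenever $0<k_2\le k_1$, with $g(0)=0$. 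I fix $t_1 = U_{pos}^{min}$ and $k_2 = 1$, and leave $k_1$ large to be chosen at the end.

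Next, for any max-coverage ranking $\sigma^*$ and any ranking $\sigma$ with $m(\sigma)<m(\sigma^*)$, I would partition the intents into $A=\{i:U(\sigma^*|i,q)>0,\,U(\sigma|i,q)=0\}$, $B=\{i:U(\sigma|i,q)>0,\,U(\sigma^*|i,q)=0\}$ and $C=\{i:U(\sigma|i,q)>0,\,U(\sigma^*|i,q)>0\}$. Writing $a,b,c$ for the $\mathcal{I}_{\mathcal{U}^q}$-masses of these sets, coverage gives $a-b\ge\Delta_{min}$. Using that every positive single-intent utility is at least $U_{pos}^{min}=t_1$, I get the bounds
\[
D(\sigma^*|q)-D(\sigma|q)\;\ge\;k_1 t_1\,a\;-\;(k_1 t_1 + k_2(U_{max}-t_1))\,b\;-\;k_2(U_{max}-U_{pos}^{min})\,c.
\]
Rearranging using $a-b\ge\Delta_{min}$ and $b,c\le 1$ yields
\[
D(\sigma^*|q)-D(\sigma|q)\;\ge\;k_1 t_1\,\Delta_{min}-2k_2 U_{max}.
\]
Choosing any $k_1 > 2k_2 U_{max}/(t_1\,\Delta_{min})$ makes this strictly positive, so every max-coverage ranking strictly beats every non-max-coverage ranking in diversity.

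Finally, I would argue by contradiction: if $\pi$ maximizes $D(\pi|q)$ yet some $\sigma$ with $\pi(\sigma|q)>0$ fails to achieve maximum intent coverage, form $\pi'$ by moving the mass $\pi(\sigma|q)$ onto any fixed max-coverage ranking $\sigma^*$. Since $D(\pi|q)=\sum_{\sigma}\pi(\sigma|q)D(\sigma|q)$, the per-ranking comparison above gives $D(\pi'|q)-D(\pi|q)=\pi(\sigma|q)\bigl(D(\sigma^*|q)-D(\sigma|q)\bigr)>0$, contradicting optimality. The main obstacle is the accounting in the bound above, in particular being careful that $U_{pos}^{min}$ is the correct universal lower bound on nonzero single-intent utilities and that $\Delta_{min}$ is strictly positive (both rely on finiteness of the ranking problem and non-degeneracy); once these constants are in hand, the rest is just picking $k_1$ large enough.
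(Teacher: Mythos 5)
Your proposal is correct and follows essentially the same strategy as the paper's proof: a piecewise linear concave $g$ whose initial slope is made large enough that gaining intent-coverage mass strictly outweighs any utility reordering among already-covered intents, followed by a swap argument showing an optimal policy can place no mass on a non-max-coverage ranking. Your version is marginally tidier in that the universal constants $U_{pos}^{min}$ and $\Delta_{min}$ make the choice of $k_1$ uniform over all ranking pairs, whereas the paper's threshold for $k_3$ is stated per comparison, but this is a cosmetic rather than substantive difference.
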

\begin{proof}
\label{proof:D_zero_max_intent}
Any ranking policy that maximizes diversity consists of rankings that maximize diversity. If not we can replace the rankings that do not maximize diversity with the ones that maximize diversity to get larger diversity. Denote $\sigma$ one ranking that covers the maximum amount of intent covered by any ranking. We will construct a diversity function $g$ such that any ranking $\sigma'$ that covers smaller amount of intent than $\sigma$ has smaller diversity than $\sigma$. Denote the set of intents covered by $\sigma$ and $\sigma'$ as $\mathbbm{I}^{C}_{\sigma}$ and $\mathbbm{I}^{C}_{\sigma'}$. Denote the maximum utility of an intent over all intents and all rankings as $U_{max}^{\mathbbm{I}}=argmax_{i\in\mathbbm{I},\sigma''}U(\sigma''|i,q)$ and the minimum utility of an intent across all intents covered by $\sigma$ as $U_{min,\sigma}^{\mathbbm{I}}=argmin_{i\in\mathbbm{I},U(\sigma|i,q)>0}U(\sigma|i,q)$. The increasing concave function $g$ is constructed as a piece-wise linear function

\[ g(\cdot)=\begin{cases} 
      k_3(\cdot - t_2)  \quad\mbox{if}\quad\cdot \leq t_2  \\
      k_4(\cdot - t_2)  \quad\mbox{if}\quad\cdot > t_2
  \end{cases}
\]
with $k_3$ $k_4$ $t_2$ to be set later. We want to show that $D(\sigma|q)>D(\sigma'|q)$. 

\begin{equation}
\begin{split}
&D(\sigma|q) - D(\sigma'|q)\\
=&\mE_{i\sim\mathcal{I}_{\mathcal{U}^q}}[g(U(\sigma|i,q))] - \mE_{i\sim\mathcal{I}_{\mathcal{U}^q}}[g(U(\sigma'|i,q))]\\
=&\sum_{i\in\mathbbm{I}_\sigma^C}\mathcal{I}_{\mathcal{U}^q}(i)[g(U(\sigma|i,q))] + g(0)(1-\sum_{i\in\mathbbm{I}_\sigma^C}\mathcal{I}_{\mathcal{U}^q}(i)) - \sum_{i\in\mathbbm{I}_{\sigma'}^C}\mathcal{I}_{\mathcal{U}^q}(i)[g(U(\sigma'|i,q))] - g(0)(1-\sum_{i\in\mathbbm{I}_{\sigma'}^C}\mathcal{I}_{\mathcal{U}^q}(i))\\
= & \sum_{i\in\mathbbm{I}_\sigma^C}\mathcal{I}_{\mathcal{U}^q}(i)[g(U(\sigma|i,q))-g(0)] - \sum_{i\in\mathbbm{I}_{\sigma'}^C}\mathcal{I}_{\mathcal{U}^q}(i)[g(U(\sigma'|i,q)) - g(0)]\\
\geq & \sum_{i\in\mathbbm{I}_\sigma^C}\mathcal{I}_{\mathcal{U}^q}(i)[g(  U_{min,\sigma}^{\mathbbm{I}})-g(0)] - \sum_{i\in\mathbbm{I}_{\sigma'}^C}\mathcal{I}_{\mathcal{U}^q}(i)[g(U_{max}^{\mathbbm{I}} ) - g(0)]\\
&= k_3 U_{min,\sigma}^{\mathbbm{I}}\sum_{i\in\mathbbm{I}_\sigma^C}\mathcal{I}_{\mathcal{U}^q}(i) - k_3 U_{min,\sigma}^{\mathbbm{I}}\sum_{i\in\mathbbm{I}_{\sigma'}^C}\mathcal{I}_{\mathcal{U}^q}(i) - k_4 (U_{max}^{\mathbbm{I}} - U_{min,\sigma}^{\mathbbm{I}})\sum_{i\in\mathbbm{I}_{\sigma'}^C}\mathcal{I}_{\mathcal{U}^q}(i)  \quad (\mbox{assume }   t_2 = U_{min,\sigma}^{\mathbbm{I}} and k_4 > 0 and k_3 > 0)
\end{split}
\end{equation}

We set $t_2 =U_{min,\sigma}^{\mathbbm{I}}$, $k_4$ to be any positive real number, $k_3 >max(k_4,\frac{  (U_{max}^{\mathbbm{I}} - U_{min,\sigma}^{\mathbbm{I}})\sum_{i\in\mathbbm{I}_{\sigma'}^C}\mathcal{I}_{\mathcal{U}^q}(i) }{ U_{min,\sigma}^{\mathbbm{I}}(\sum_{i\in\mathbbm{I}_\sigma^C}\mathcal{I}_{\mathcal{U}^q}(i) - \sum_{i\in\mathbbm{I}_{\sigma'}^C}\mathcal{I}_{\mathcal{U}^q}(i) ) }k_4) $ so that $D(\sigma|q) - D(\sigma'|q)>0$. 
Since any ranking that covers smaller amount of intent gets less diversity than $\pi$, any ranking sampled from any ranking policy that maximizes diversity will cover the maximum amount of intent covered by any ranking. 
\end{proof}

\begin{theorem}{(Maximizing diversity can lead to $0$ utility for a user group.)}
\label{theo:D_zero_user}
There exists a non-degenerate ranking problem such that for any diversity function $g$, the policy $\pi$ maximizing diversity has utility $U(\pi|UG,q)=0$ for a user group $UG$.   
\end{theorem}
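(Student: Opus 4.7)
The plan is to prove Theorem~\ref{theo:D_zero_user} by direct construction, reusing the ranking problem from Figure~\ref{fig:example} that was the backbone of the earlier zero-utility proofs, but modifying only the exposure function so that position $1$ is the sole position with positive exposure, i.e. $e(1)=1$ and $e(n)=0$ for all $n\geq 2$. With this choice, the utility $U(\sigma|i,q)$ depends only on which item is placed in position $1$, so optimizing over ranking policies collapses to choosing a single representative item, and the rest of the ranking can be filled in arbitrarily. This keeps the problem non-degenerate: every user group has positive proportion, every intent has positive mass, and every item group contains an item with positive expected relevance.

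Next I would tabulate the diversity achievable by placing a representative item of each of the six sets $d_{m*}$ in position $1$. Using the induced intent distribution $\mathcal{I}_{\mathcal{U}^q}(i_1)=0.3$, $\mathcal{I}_{\mathcal{U}^q}(i_2)=0.2$, $\mathcal{I}_{\mathcal{U}^q}(i_3)=0.5$ from the construction of Theorem~\ref{theo:util_zero_user}, placing an item from $d_{3*}$ in the top position yields diversity $0.5\,g(1)+0.5\,g(0)$, while placing an item from any other set yields diversity of the form $\alpha\,g(v)+(1-\alpha)\,g(0)$ with either $v<1$ (for $d_{4*},d_{5*},d_{6*}$, where $v=0.9$) or with coefficient $\alpha\in\{0.3,0.2\}$ strictly smaller than $0.5$ (for $d_{1*},d_{2*}$).

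The core step is then to show that $d_{3*}$ is the strict diversity maximizer for every increasing concave $g$. Each pairwise comparison reduces to a positive multiple of $g(1)-g(0)$, which is strictly positive for any increasing $g$; the comparisons against $d_{4*},d_{5*},d_{6*}$ additionally use $g(1)>g(0.9)$, again from monotonicity. Concavity is not needed for this step, which is a small bonus: the conclusion holds uniformly over the entire class of diversity functions considered by the paper. Hence any policy $\pi$ that maximizes $D(\pi|q)$ assigns position $1$ to some item in $d_{3*}$ with probability one.

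To finish, I would observe that items in $d_{3*}$ have positive relevance only for intent $i_3$, while $\mathcal{I}_{\mathcal{UG}_1^q}$ puts mass only on $i_1$ and $i_2$; thus $U(\pi|UG_1,q)=0$, proving the claim with $UG=UG_1$. The main obstacle to watch is that the strict inequality in the pairwise comparisons must hold uniformly in $g$, not merely pointwise for a particular $g$; this is exactly what collapsing every comparison to a positive multiple of $g(1)-g(0)$ guarantees. A secondary care point is verifying non-degeneracy after zeroing out the exposure at positions $n\geq 2$, but since non-degeneracy in Definition~1 is stated in terms of $(\mathbbm{I},\mathbbm{UG},\mathbbm{DG},\mathbbm{D}^q,\mathcal{U}^q,r,e)$ with no conditions tying $e$ to non-triviality beyond boundedness, the same problem instance remains non-degenerate.
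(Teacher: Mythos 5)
Your proof is correct and follows essentially the same approach as the paper: a construction with a single non-zero exposure position in which the unique diversity maximizer for every increasing $g$ is an item irrelevant to one user group (you reuse the Figure~\ref{fig:example} instance, matching the paper's main-text intuition, while the appendix proof uses the smaller two-item Example~2 of Figure~\ref{fig:example_supplementary}). Your observation that only monotonicity of $g$ is needed, and your explicit pairwise comparisons bounding each gap below by a positive multiple of $g(1)-g(0)$ or $g(1)-g(0.9)$, make the argument if anything more complete than the paper's one-line version.
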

\begin{proof}
\label{proof:D_zero_user}
Consider the example in \autoref{fig:example_supplementary} example $2$. No matter what increasing concave diversity function $g$ we use, the policy maximizing diversity will be a deterministic policy that ranks $d_1$ over $d_2$. If the first position has non-zero exposure and the second position has zero exposure, the utility for $UG_2$ is always $0$. 
\end{proof}

\begin{theorem}{(Maximizing diversity can lead to zero utility for an item group)}
There exists a non-degenerate ranking problem such that for any diversity function $g$, the policy $\pi$ maximizing diversity has utility $U(\pi|DG,q)=0$ for an item group $DG$.
\end{theorem}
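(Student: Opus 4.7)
I would prove this by an explicit construction, following the same pattern as the preceding proofs in this appendix. Specifically, I would reuse the ranking problem from Figure~\ref{fig:example} (constructed in the proof of Theorem~\ref{theo:util_zero_user}) and modify only the exposure function so that exactly one position has non-zero exposure, namely $e(1)=1$ and $e(n)=0$ for all $n\ge 2$. The ranking problem remains non-degenerate since none of the four non-degeneracy conditions depend on the exposure function, and $DG_2$ clearly still has non-zero optimal utility: placing any item from $d_{4*}\cup d_{5*}\cup d_{6*}$ at the top position yields positive utility for $DG_2$.

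\textbf{Key step.} With exposure concentrated at a single position, every ranking $\sigma$ satisfies $D(\sigma|q)=\mE_{i\sim\mathcal{I}_{\mathcal{U}^q}}[g(r(d,i))]$, where $d=\sigma^{-1}(1)$ is the top-ranked item, so comparing rankings reduces to comparing single items. Using the induced overall intent distribution $\mathcal{I}_{\mathcal{U}^q}(i_1)=0.3$, $\mathcal{I}_{\mathcal{U}^q}(i_2)=0.2$, $\mathcal{I}_{\mathcal{U}^q}(i_3)=0.5$, I would verify by a small number of pairwise comparisons that an item from $d_{3*}$ strictly dominates every other choice for any increasing $g$: against $d_{6*}$ the inequality reduces to $0.5\bigl(g(1)-g(0.9)\bigr)>0$; against $d_{1*}$ and $d_{2*}$ it reduces to $(0.5-0.3)\bigl(g(1)-g(0)\bigr)>0$ and the analogous inequality with $0.5-0.2$; and dominance over $d_{4*}$ and $d_{5*}$ follows by combining these. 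Each of these comparisons uses only strict monotonicity of $g$, so concavity need not be invoked at all.

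\textbf{Conclusion and main obstacle.} Every ranking in the support of any diversity-maximizing policy $\pi$ therefore places an item of $d_{3*}\subseteq DG_1$ at the unique non-zero exposure position, so no item of $DG_2$ ever receives positive exposure and $U(\pi|DG_2,q)=0$. The main technical obstacle, had I chosen a less concentrated exposure function, would be ensuring that the dominance of $DG_1$ survives for every increasing concave $g$: with several active positions, alternative rankings could trade off utility across intents and the comparison would require Jensen-type arguments whose outcome depends on the shape of $g$. Collapsing exposure to a single top position is the key simplification that sidesteps this difficulty and reduces the whole argument to a one-item selection problem, mirroring the approach already used for the analogous ``zero utility for a user group'' theorem in the appendix.
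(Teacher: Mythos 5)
Your proof is correct, and it reaches the conclusion by a slightly different route than the paper. The paper's appendix proof keeps the original exposure function from the construction in the proof of Theorem~\ref{theo:util_zero_user} (three positions with exposure $1$) and argues via within-intent dominance swaps: since each $d_{m*}$ for $m\in\{1,2,3\}$ strictly dominates the corresponding $d_{(m+3)*}$ for every intent, any ranking exposing a $DG_2$ item can be strictly improved by swapping in an unexposed $DG_1$ item relevant to the same intent, so the three exposed positions are always filled from $d_{1*}\cup d_{2*}\cup d_{3*}=DG_1$. You instead modify the exposure function to concentrate all exposure on one position, which collapses the comparison of rankings to the comparison of single top items and lets you finish with a handful of explicit inequalities using only strict monotonicity of $g$; this is essentially the argument sketched in the main text's intuition paragraph (``if only $1$ position has non-zero exposure\ldots''), and you correctly note that the non-degeneracy conditions are unaffected by the change of $e$ and that $DG_2$ retains non-zero optimal utility. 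Both arguments need only monotonicity, not concavity; the paper's version has the small advantage of reusing the identical ranking problem across all the zero-utility theorems, while yours buys a shorter, fully explicit verification at the cost of introducing a second exposure function. Either is a valid witness for the existential claim.
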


\begin{proof}
\label{proof:D_zero_item}
Proof by construction. We use the example in \autoref{fig:example} that was detailed in the Proof of Theorem~\ref{theo:util_zero_user}. We will show that $DG_2$ has zero utility under the policy maximizing diversity no matter what increasing concave diversity function $g$ we choose. 

For any increasing concave diversity function $g$, the policy maximizing diversity will always rank items in $d_{1*}$ over items in $d_{4*}$, rank items in $d_{2*}$ over items in  $d_{5*}$, and rank items in $d_{3*}$ over items in $d_{6*}$, since items in $d_{1*}$ $d_{2*}$ and $d_{3*}$ are the most relevant for intents $i_1$, $i_2$ and $i_3$. Thus items in $d_{1*}$, $d_{2*}$, $d_{3*}$ occupy all the exposure and all the other items get 0 exposure. Since items in $DG_2$ get 0 exposure, they have zero utility.

\end{proof}

\subsubsection{Efficiency analysis of maximizing diversity}

\begin{theorem}{(Each ranking sampled from the policy maximizing diversity is Pareto efficient for the intents)}
\label{theo:D_pareto_intent}
For any non-degenerate ranking problem and any diversity function $g$, if a ranking policy $\pi$ maximizes diversity, then any ranking $\sigma$ with $\pi(\sigma|q)>0$ is Pareto efficient for the intents.
\end{theorem}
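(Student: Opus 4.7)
The approach is a proof by contradiction that mirrors the structure used in Theorem~\ref{theo:uf_efficiency_ranking_intent}. Suppose $\pi$ maximizes diversity $D(\pi|q)$ but there exists some ranking $\sigma$ with $\pi(\sigma|q)>0$ that is not Pareto efficient for the intents. Then by the definition of Pareto efficiency and dominance, there is a ranking $\sigma'$ such that $U(\sigma'|i,q)\geq U(\sigma|i,q)$ for all $i\in\mathbbm{I}$, with strict inequality for at least one intent $i^{*}$.

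Next, I would construct a new ranking policy $\pi'$ from $\pi$ by redirecting all of the probability mass on $\sigma$ to $\sigma'$: formally, $\pi'(\sigma'|q)=\pi(\sigma'|q)+\pi(\sigma|q)$, $\pi'(\sigma|q)=0$, and $\pi'(\tau|q)=\pi(\tau|q)$ for every other permutation $\tau$. Since both $\pi$ and $\pi'$ assign equal mass to every other ranking, the difference in diversity depends only on what happens to the mass $\pi(\sigma|q)$. Using linearity of expectation over $\sigma\sim\pi(\cdot|q)$ in the definition of $D(\pi|q)$, we get
\begin{displaymath}
D(\pi'|q)-D(\pi|q)=\pi(\sigma|q)\,\mE_{i\sim\mathcal{I}_{\mathcal{U}^q}}\bigl[g(U(\sigma'|i,q))-g(U(\sigma|i,q))\bigr].
\end{displaymath}

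Because $g$ is increasing, the integrand is nonnegative everywhere and strictly positive at $i^{*}$. By condition (2) of the non-degenerate ranking problem definition, $\mathcal{I}_{\mathcal{U}^q}(i^{*})>0$, so the expectation is strictly positive. Combined with $\pi(\sigma|q)>0$, this yields $D(\pi'|q)>D(\pi|q)$, contradicting the assumption that $\pi$ maximizes diversity. Hence no such dominated $\sigma$ exists in the support of $\pi$.

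The main obstacle is essentially bookkeeping: one has to verify that the ``swap'' construction of $\pi'$ actually isolates the difference to a single term that can be lower-bounded using monotonicity of $g$ and positivity of $\mathcal{I}_{\mathcal{U}^q}(i^{*})$. Unlike the user-fairness case in Theorem~\ref{theo:uf_efficiency_ranking_intent}, where the concave transformation $f$ is applied after taking the expectation over $\sigma$ (so Jensen's-inequality style arguments are needed), here $g$ is applied before the expectation over $\sigma$, which actually makes the bookkeeping simpler: the inner expectation over $i$ is taken per ranking, so replacing $\sigma$ by a pointwise-dominating $\sigma'$ cleanly increases the diversity contribution from that ranking without affecting the contributions of the other rankings.
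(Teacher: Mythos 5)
Your proof is correct and uses essentially the same argument as the paper: a contradiction via a dominating ranking $\sigma'$, with the strict increase in diversity following from the monotonicity of $g$ and the positive mass of the strictly-improved intent. The only cosmetic difference is that you shift the probability mass from $\sigma$ to $\sigma'$ directly at the policy level, whereas the paper first reduces to the claim that every ranking in the support of a diversity-maximizing policy must itself maximize $D(\sigma|q)$; both routes rest on the same observation that $g$ is applied per ranking before the expectation over $\sigma$.
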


\begin{proof}
Any ranking policy that maximizes diversity consists of rankings that maximize diversity. We only need to prove that if a ranking maximizes diversity, then it is Pareto efficient for the intents among all the rankings. 

Proof by contradiction. Assume there exists a ranking $\sigma$ such that $\sigma$ maximizes the diversity i.e. $\forall \sigma'$ $D(\sigma'|q)\leq D(\sigma|q)$ and $\sigma$ is not Pareto efficient for the intents. By the definition of Pareto efficiency, there exists a ranking $\sigma''$ such that $\sigma''$ dominates $\sigma$ for the intents. By the definition of dominance and the increasing property of $g$,
\begin{displaymath}
    D(\sigma''|q) - D(\sigma|q) =\mE_{i\sim\mathcal{I}_{\mathcal{U}^q}} \bigg[ g(U(\sigma|i,q)) -g(U(\sigma|i,q))  \bigg] > 0
\end{displaymath}
which contradicts the assumption that $\sigma$ maximizes the diversity. 
\end{proof}

\begin{theorem}{(Maximizing diversity is not Pareto efficient for the user groups)}
\label{theo:D_efficiency_user}
There exists a non-degenerate ranking problem and a diversity function $g$, such that any ranking policy $\pi$ that maximizes diversity is not Pareto efficient for the user groups. 
\end{theorem}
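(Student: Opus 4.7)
The plan is to exhibit an explicit non-degenerate ranking problem together with a diversity function $g$ for which the unique diversity-maximizing policy is strictly Pareto dominated for the user groups. The intuition is to contrast a narrowly-focused item $d_1$ that yields higher expected utility for every user group with a more balanced item $d_2$ that yields lower overall utility but spreads its relevance more evenly across intents, so that a concave diversity objective rewards ranking $d_2$ first.

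Concretely, I would take two user groups with equal proportion $\rho_{UG_1} = \rho_{UG_2} = 0.5$ and identical intent distribution $(0.5, 0.5)$ over two intents $i_1, i_2$, a single item group $DG_1 = \{d_1, d_2\}$, relevances $r(d_1, i_1) = 2$, $r(d_1, i_2) = 0$, $r(d_2, i_1) = r(d_2, i_2) = 0.9$, and exposures $e(1) = 1$, $e(2) = 0.5$ with all lower positions carrying zero exposure. The four non-degeneracy conditions from the definition follow immediately by inspection. The diversity function will be $g = \log$; all per-intent utilities that arise below are strictly positive, so $\log$ is well-defined on their range.

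The verification decomposes into two short calculations. Because both user groups share the same intent distribution, the utility of any policy for either group coincides with the overall utility. Writing $\sigma_{12}$ for the ranking that places $d_1$ above $d_2$ and $\sigma_{21}$ for the reverse,
\[
U(\sigma_{12}|q) - U(\sigma_{21}|q) = (e(1) - e(2))\bigl(\mE_{i}[r(d_1, i)] - \mE_{i}[r(d_2, i)]\bigr) = 0.5 \cdot 0.1 > 0,
\]
so $\sigma_{12}$ strictly dominates $\sigma_{21}$ for both user groups. By linearity in the mixing weight, any policy that places strictly positive probability on $\sigma_{21}$ is Pareto dominated for the user groups by the policy obtained by shifting that probability onto $\sigma_{12}$. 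Second, computing the per-intent utilities, $\sigma_{12}$ yields the pair $(2.45, 0.45)$ and $\sigma_{21}$ yields the more balanced $(1.9, 0.9)$. Plugging into $g = \log$ gives $D(\sigma_{21}|q) > D(\sigma_{12}|q)$, and since expected diversity is linear in the mixing weight, the \emph{unique} diversity-maximizing policy is the deterministic $\sigma_{21}$, which by the previous step is Pareto dominated by $\sigma_{12}$ for the user groups.

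The main obstacle, though modest, is calibrating the numbers so that both inequalities land on the intended side simultaneously: the expected-relevance gap must be small enough that the utility loss from ranking $d_2$ first is slight, while the concentration of $d_1$ on a single intent must be large enough that the concavity of $g$ reverses the diversity ordering. The values above leave comfortable margins in both inequalities, and the same conclusion is robust under moderate perturbations of the relevances or of the choice of strictly concave increasing $g$.
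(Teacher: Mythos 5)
Your proof is correct, and it takes a genuinely different route from the paper's. The paper reuses its running Example 4 (three items, exposures $e(1)=e(2)=1$, $e(3)=0$) and invokes Theorem~\ref{theo:D_zero_max_intent} to assert the \emph{existence} of a (piecewise-linear) diversity function $g$ that forces an intent-covering item into a top position, thereby displacing the unique Pareto-efficient ranking $(d_1,d_2,d_3)$; the heavy lifting is delegated to that earlier construction. You instead build a self-contained two-item instance with a fixed, natural $g=\log$, and verify both inequalities by direct computation: $\sigma_{12}$ dominates $\sigma_{21}$ for the user groups by $0.05$ in utility (and since the two groups share an intent distribution, Pareto efficiency collapses to overall-utility maximization, which cleanly isolates the phenomenon), while concavity of $\log$ prefers the balanced per-intent profile $(1.9,0.9)$ over $(2.45,0.45)$. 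I checked the arithmetic and the four non-degeneracy conditions; all are fine, and the linearity-in-mixing-weight argument correctly pins down the deterministic $\sigma_{21}$ as the unique diversity maximizer. Your approach buys concreteness and independence from the intent-coverage machinery, and it shows the failure already occurs for the canonical logarithmic diversity function rather than for a specially engineered $g$; the paper's approach buys economy by recycling a single example and lemma across several theorems.
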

\begin{proof}
\label{proof:D_efficiency_user}
Consider the example 4 in \autoref{fig:example_supplementary}. The only ranking policy that is Pareto efficient for the user groups produces the ranking $(d_1, d_2, d_3)$ deterministically if we set $e(1) = 1$, $e(2) = 1$, $e(3) = 0$. We can always find a diversity function $g$ such that $d_3$ ranks at position $2$ or $1$ in the policy maximizing diversity according to Theorem~\ref{theo:D_zero_max_intent}. Under this diversity function $g$, the policy maximizing diversity is not Pareto efficient for the user groups. 
\end{proof}

\begin{theorem}{(Items are not ranked by their expected relevance within each item group in the policy maximizing the diversity. )}
\label{theo:D_efficiency_item}
There exists a non-degenerate ranking problem and a diversity function $g$, such that items are not ranked by their expected relevance within each item group in any policy that maximizes the diversity. 
\end{theorem}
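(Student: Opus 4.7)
The plan is to build a direct counterexample: exhibit a non-degenerate ranking problem together with a sharply concave diversity function $g$ for which the unique diversity-maximizing ranking places an item with smaller expected relevance at a strictly higher-exposure position than an item with larger expected relevance in the same group.

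For the construction I would take two intents $\mathbbm{I} = \{i_1, i_2\}$ with imbalanced distribution $\mathcal{I}_{\mathcal{U}^q}(i_1) = 0.8$, $\mathcal{I}_{\mathcal{U}^q}(i_2) = 0.2$ (and a single user group of mass $1$ sharing this distribution), a single item group $DG_1 = \{d_1, d_2\}$ with specialist relevances $r(d_1, i_1) = 1$ and $r(d_2, i_2) = \tfrac{1}{2}$ (other relevances zero), and exposure $e(1) = 1$, $e(2) = \tfrac{1}{2}$. Then $\mathbbm{E}_{i}[r(d_1, i)] = 0.8 > 0.1 = \mathbbm{E}_{i}[r(d_2, i)]$, so the ``ranked by expected relevance'' criterion requires $e(\sigma(d_1)) \geq e(\sigma(d_2))$, i.e., $d_1$ must sit at position $1$. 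Only two rankings exist: $\sigma_A = (d_1, d_2)$ with per-intent utilities $(U(\sigma_A|i_1), U(\sigma_A|i_2)) = (1, 0.25)$, and $\sigma_B = (d_2, d_1)$ with utilities $(0.5, 0.5)$. Non-degeneracy follows directly from the four defining conditions.

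For $g$ I would take $g(x) = -e^{-\lambda x}$, which is increasing and strictly concave. The key inequality to establish is $D(\sigma_B|q) > D(\sigma_A|q)$, which, after unrolling the definitions, simplifies to
\begin{equation*}
0.8\, e^{-\lambda} + 0.2\, e^{-\lambda/4} > e^{-\lambda/2}.
\end{equation*}
For sufficiently large $\lambda$ (say $\lambda = 10$) the slowly-decaying term $0.2\, e^{-\lambda/4}$ dominates the right-hand side, and the inequality holds. Since only $\sigma_A$ and $\sigma_B$ are available, $\sigma_B$ is then the unique diversity maximizer, so every policy $\pi$ that maximizes $D(\pi|q)$ must place all its mass on $\sigma_B$; under $\sigma_B$ the lower-expected-relevance item $d_2$ occupies position $1$ (exposure $1$) while $d_1$ occupies position $2$ (exposure $\tfrac{1}{2}$), contradicting the ``ranked by expected relevance within each item group'' condition.

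The main obstacle is conceptual rather than technical: one has to identify a $g$ that is concave enough for the diversity objective to reward balancing utility across the two intents over concentrating utility on the majority intent $i_1$. The exponential family $-e^{-\lambda x}$ makes this tunable through $\lambda$, and the above inequality is the clean certificate that the tuning works. Enumeration of rankings, uniqueness of the maximizer, and non-degeneracy are routine.
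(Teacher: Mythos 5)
Your proposal is correct. The construction is non-degenerate, the per-intent utilities $(1,0.25)$ for $\sigma_A$ and $(0.5,0.5)$ for $\sigma_B$ are computed correctly, and the certificate inequality $0.8\,e^{-\lambda}+0.2\,e^{-\lambda/4}>e^{-\lambda/2}$ does hold for large $\lambda$ (at $\lambda=10$ the left side is about $0.0164$ versus $0.0067$), so $\sigma_B$ is the unique maximizer, every diversity-optimal policy is supported on it, and $d_2$ with expected relevance $0.1$ then strictly out-exposes $d_1$ with expected relevance $0.8$ inside the single item group.

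The paper takes a related but structurally different route: it reuses its example~4 (three items, two intents, with $e(1)=e(2)=1$, $e(3)=0$) and invokes its earlier theorem that for any non-degenerate problem one can build a piecewise-linear concave $g$ forcing every diversity-maximizing ranking to cover the maximum number of intents; that coverage requirement forces the low-expected-relevance item $d_3$ into a non-zero-exposure slot, displacing a higher-expected-relevance item in the same group. Your argument is self-contained and more elementary: you do not need the intent-coverage machinery, only a direct comparison of two rankings under an explicit smooth $g=-e^{-\lambda x}$, with the concavity parameter $\lambda$ as a tunable knob. The paper's approach buys reuse (one generic $g$-construction serves several theorems and shows the phenomenon is about intent coverage, not a numerical accident); yours buys a shorter, fully explicit certificate with a two-item example that is easy to verify by hand. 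Both are valid proofs of the existential claim.
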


\begin{proof}
\label{proof:D_efficiency_item}
Consider the example 4 in \autoref{fig:example_supplementary}. If we set $e(1) = 1$, $e(2) = 1$, $e(3) = 0$. We can always find a diversity fairness function $g$ such that $d_3$ ranks at position $2$ or $1$ in the policy maximizing diversity according to Theorem~\ref{theo:D_zero_max_intent}. Under this diversity function $g$, the items are not ranked by their expected relevance within each item group in any ranking sampled from any policy that maximizes diversity. 
\end{proof}

\subsection{\algnameNS}
\begin{theorem}{(\algname can ensure non-zero utility for all the user groups and item groups)}
For any non-degenerate ranking problem, there always exists a user fairness function $f$ and a merit function $M$ such that every user group and item group has non-zero utility under the ranking policy $\pi$ produced by \algnameNS. 
\end{theorem}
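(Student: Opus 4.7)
The plan is to reduce the claim to a property of the Step~1 convex program and then construct $f$ and $M$ explicitly. Both $U(\pi|UG,q)$ in~\eqref{eq:util_usergroup} and $U(\pi|DG,q)$ in~\eqref{eq:util_itemgroup} are linear functions of the marginal rank probability matrix $\Sigma^{\pi,q}$. Step~2 of \algname only performs a Birkhoff--von Neumann decomposition and hence returns a policy $\pi$ with $\Sigma^{\pi,q}=\Sigma^*$, where $\Sigma^*$ is the Step~1 maximizer; user-group and item-group utilities of $\pi$ therefore coincide with those computed from $\Sigma^*$. It suffices to exhibit $f$ and $M$ under which $\Sigma^*$ gives strictly positive utility to every user group and every item group.

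First I would fix $M(DG,q)=1$ for every item group. The disparate treatment constraints in~\eqref{eq:item_fairness_naive} then collapse to the requirement that every item group receive the same average exposure, and the uniform random policy $\Sigma_{m,n}=1/|\mathbbm{D}^q|$ is feasible. Because each $\mathcal{I}_{\mathcal{UG}^q}$ is a probability distribution, condition~(3) of non-degeneracy yields a positively-relevant item for every user group, and condition~(4) yields one for every item group, so under the uniform policy every user group and every item group has strictly positive utility. Next, I would choose $f$ as the strictly increasing piecewise-linear concave function built in the proof of Theorem~\ref{theo:uf_zero_user}, calibrated so that any $\Sigma$ yielding zero utility for some user group scores strictly less $UF$ than the uniform policy. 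Since the uniform policy is feasible, the maximizer $\Sigma^*$ is then forced to give positive utility to every user group.

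The hard part will be the item-group side, because $UF$ never references item groups directly. I would close this gap through a local $2\times 2$ doubly-stochastic swap inside a single item group. Assume for contradiction that $U(\Sigma^*|DG,q)=0$ for some item group $DG$. Any $d\in DG$ with positive average exposure under $\Sigma^*$ then has zero expected relevance, and since $\mathcal{I}_{\mathcal{U}^q}$ has full support by condition~(2), such $d$ actually satisfy $r(d,i)=0$ for every intent and are therefore irrelevant to every user group. Non-degeneracy condition~(4) supplies $d^*\in DG$ with strictly positive expected relevance, and chasing $\mathcal{I}_{\mathcal{U}^q}=\sum_{UG}\rho_{UG}^q\mathcal{I}_{\mathcal{UG}^q}$ through condition~(1) shows that $d^*$ is positively relevant to at least one user group while currently receiving zero exposure. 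Picking an exposed zero-relevance $d'\in DG$ (which exists because equal merit forces positive total exposure in $DG$), a position $n_1$ with $e(n_1)>0$ and $\Sigma^*_{d',n_1}>0$, and a position $n_2$ with $e(n_2)=0$ and $\Sigma^*_{d^*,n_2}>0$, I would shift a small mass $\delta>0$ from $(d',n_1)$ and $(d^*,n_2)$ to $(d',n_2)$ and $(d^*,n_1)$. All row and column sums of $\Sigma^*$ are preserved, the average exposure of $DG$ is unchanged, and every disparate treatment constraint therefore still holds. The shift strictly raises $d^*$'s exposure while leaving every user group that $d'$ would have served unaffected, so at least one user group's utility strictly increases and none decrease, which strictly raises $UF$ since $f$ is strictly increasing. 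This contradicts optimality of $\Sigma^*$ and forces positive item-group utility, completing the argument.
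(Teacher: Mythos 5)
Your proof is correct and follows essentially the same route as the paper's: an equal-merit function $M$, the piecewise-linear $f$ from Theorem~\ref{theo:uf_zero_user} calibrated against a feasible policy with positive utility for every user group, and an exchange argument to rule out a zero-utility item group. Your write-up is in fact tighter in the two places where the paper is informal --- you use the uniform doubly stochastic matrix as the feasible witness (rather than a loosely justified convex combination of per-group optimal policies), and you carry out the item-group switch as an explicit $2\times 2$ perturbation of $\Sigma^{*}$ that verifiably preserves both double stochasticity and the disparate treatment constraints, while also noting that Step~2's BvN decomposition leaves the marginal matrix, and hence all group utilities, unchanged.
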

\begin{proof}

First, We construct a merit function $M$ such that every item group has the same merit. With this merit function, the first step of \algname will ensure each item group gets some exposure. In each item group, there exists at least one item that has non-zero relevance to an intent, otherwise the item group has zero optimal utility. As a result, in each item group, at least one item with non-zero expected relevance to the whole user population gets some exposure, otherwise we can switch the item with zero expected relevance that gets some exposure with an item with non-zero expected relevance that will increase user fairness no matter what user fairness function is. Thus each item group gets some utility if we use this merit function. 

Second, there always exists a ranking policy $\pi$ such that it has non-zero utility for every user group and satisfy the disparate treatment constraints with merit function $M$. The reason is that since every user group has non-zero optimal utility, there is at least one item has non-zero relevance to the intent the user has positive probability. We can construct a ranking policy such that this item gets some exposure and also satisfy the item fairness constraints. Then the convex sum of ranking policies that have non-zero utility for each user group is a ranking policy that has non-zero utility for every user group. 

As shown in the Proof of Theorem~\ref{theo:uf_zero_user}, for any ranking policy $\pi$ that has non-zero utility for every user group, we can construct a user fairness function $f$ such that no ranking policy with a user group having zero utility has larger user fairness than $\pi$.

As a result, under the constructed $f$ and the disparate treatment constraints with the constructed $M$, \algname always ensures non-zero utility for every user group and item group. 
\end{proof}

\begin{theorem}{(\algname can not ensure each ranking sampled from the policy covers the maximum amount of intent covered by any ranking. )}
There exists a non-degenerate ranking problem such that for any user fairness function $f$, merit function $M$ and diversity function $g$, the amount of intent covered by any ranking sampled from the policy produced by \algname is less than the maximum amount of intent covered by any ranking.  
\end{theorem}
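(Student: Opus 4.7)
The plan is to reuse the ranking problem from \autoref{fig:example}, and to show that the Step~1 convex program in \algnameNS\ is forced, for any user fairness function $f$ and any merit function $M$ (satisfying $M(DG_1,q),M(DG_2,q)>0$), to produce a marginal rank probability matrix $\Sigma^*$ in which every item in $d_{2*}\cup d_{5*}$ has zero marginal probability at every non-zero-exposure position $n\in\{1,2,3\}$. Since $d_{2*}\cup d_{5*}$ is the only set of items with positive relevance to intent $i_2$, every ranking $\sigma$ in the support of any BvN decomposition of $\Sigma^*$ must place all items of $d_{2*}\cup d_{5*}$ at zero-exposure positions, so $U(\sigma|i_2,q)=0$ and the intent mass covered by $\sigma$ is at most $\mathcal{I}_{\mathcal{U}^q}(i_1)+\mathcal{I}_{\mathcal{U}^q}(i_3)=0.8$. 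In contrast, placing one item from each of $d_{1*},d_{2*},d_{3*}$ in the top three positions yields a ranking that covers all three intents with total mass $1$. The diversity function $g$ is irrelevant: Step~2 chooses only among rankings supported by $\Sigma^*$, so the zero entries of $\Sigma^*$ are inherited automatically.

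The technical core is a local exchange argument on $\Sigma^*$. Writing $X_S$ for the total mass $\sum_{d_m\in S}\sum_{n\in\{1,2,3\}}\Sigma^*_{m,n}$ of a set of items $S$ on the three non-zero-exposure positions, I would suppose for contradiction that $\Sigma^*_{m,n}>0$ for some $d_m\in d_{2*}$ and $n\in\{1,2,3\}$. The three row sums of $d_{1*}$ total $3$, while $X_{d_{1*}}=X_{DG_1}-X_{d_{2*}}-X_{d_{3*}}<X_{DG_1}\le 3$ (strict because $X_{d_{2*}}>0$), so $d_{1*}$ carries positive residual mass outside $\{1,2,3\}$, meaning there is some $d_{m'}\in d_{1*}$ and $n'\notin\{1,2,3\}$ with $\Sigma^*_{m',n'}>0$. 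Perform the standard $2\times 2$ swap: decrease $\Sigma^*_{m,n}$ and $\Sigma^*_{m',n'}$ by a sufficiently small $\epsilon>0$ and increase $\Sigma^*_{m,n'}$ and $\Sigma^*_{m',n}$ by $\epsilon$. Row and column sums are preserved; since $d_m,d_{m'}$ both lie in $DG_1$, the updates cancel inside $\sum_{d\in DG_1}\sum_k\Sigma_{d,k}e(k)$, so $E(\pi|DG_1,q)$ and $E(\pi|DG_2,q)$ are unchanged and the disparate-treatment constraint is preserved; $U(\pi|UG_2,q)$ is unchanged because $\mathbf{r}^{UG_2}_m=\mathbf{r}^{UG_2}_{m'}=0$; and $U(\pi|UG_1,q)$ changes by $\epsilon(e(n)-e(n'))(\mathbf{r}^{UG_1}_{m'}-\mathbf{r}^{UG_1}_m)=\epsilon\cdot 1\cdot(0.6-0.4)>0$. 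Because $f$ is increasing, $UF$ strictly increases, contradicting optimality of $\Sigma^*$. The symmetric argument inside $DG_2$ (exchanging $d_{5*}$ mass for $d_{4*}$ mass) rules out positive marginal probability of $d_{5*}$ on positions $\{1,2,3\}$.

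The main obstacle is verifying that the residual-mass witness $(d_{m'},n')$ always exists, because $M$ can place the disparate-treatment split anywhere in the feasible interval: extreme $M_1\ll M_2$ drives $X_{DG_1}$ near $0$, while $M_1\gg M_2$ drives it near $3$. The row-sum counting above handles both ends uniformly: whenever $X_{d_{2*}}>0$ we have $X_{d_{1*}}<3$, so the total $d_{1*}$ mass at columns outside $\{1,2,3\}$ equals $3-X_{d_{1*}}>0$ and a suitable $(m',n')$ exists. Once Step~1 is pinned down in this way, Step~2 inherits the zero entries, and the intent-coverage inequality $0.8<1$ established in the first paragraph holds for every ranking in the support of the \algnameNS\ policy, irrespective of $g$.
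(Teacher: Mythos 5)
Your proof is correct, but it takes a genuinely different route from the paper's. The paper proves this theorem on a small purpose-built instance (example 4 of its supplementary figure, three items with $e(1)=e(2)=1$, $e(3)=0$) and simply asserts that, for any $f$, $M$ and $g$, \algname outputs the deterministic ranking $(d_1,d_2,d_3)$, which leaves $i_2$ uncovered; the argument is two lines long but leans entirely on that unproven assertion about the algorithm's output. You instead reuse the main example of \autoref{fig:example} and make the key step explicit: a $2\times 2$ exchange on the doubly stochastic matrix ($d_{2*}$ mass at an exposed column traded against residual $d_{1*}$ mass at an unexposed column, and symmetrically $d_{5*}$ against $d_{4*}$) that preserves the row/column sums and both group exposures, leaves $U(\pi|UG_2,q)$ untouched, and strictly increases $U(\pi|UG_1,q)$, hence $UF$ for any strictly increasing $f$. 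The row-sum counting that guarantees the residual witness $(d_{m'},n')$ exists for every admissible merit split is the right way to handle the quantifier over $M$, and the observation that Step~2's BvN decomposition only uses edges with $\Sigma^*_{m,n}>0$ correctly disposes of the quantifier over $g$. What your approach buys is rigor and reuse of the already-introduced example; what the paper's buys is brevity. One shared caveat: both arguments need $f$ to be \emph{strictly} increasing (otherwise the Step-1 argmax need not be unique and could place $d_{2*}$ items at exposed positions); the paper uses the same convention in its other proofs, so this is not a gap relative to its framework.
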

\begin{proof}
Consider the example 4 in \autoref{fig:example_supplementary}. We set $e(1)=1, e(2) = 1, e(3) = 0$. No matter what user fairness function, merit function and diversity function we choose, \algname always produces the deterministic ranking $(d_1, d_2, d_3)$. As a result $i_2$ has zero utility. So the amount of intent covered by any ranking produced by \algname is less than the maximum amount of intent covered by any ranking. 
\end{proof}

\end{document}